\newtheorem{theorem}{Theorem}
\newtheorem{lemma}{Lemma}[section]
\theoremstyle{definition}
\newtheorem{definition}[theorem]{Definition}
\theoremstyle{remark}
\numberwithin{equation}{section}
\newcommand{\myparagraph}[1]{\smallskip\noindent\textit{#1}}
\title{Extending Partial Representations of Circular-Arc Graphs} 
\author{
  Ji\v{r}\'{i} Fiala
	\thanks{Department of Applied Mathematics, Faculty of Mathematics and Physics, Charles University, Czech Republic
	\texttt{fiala@kam.mff.cuni.cz}. J.~Fiala was supported by the grant 19-17314J of the GA \v{C}R.}
  \and
Ignaz Rutter
\thanks{University of Passau, Germany, \texttt{rutter@fim.uni-passau.de}. Partially supported by grant Ru$\,$1903/3-1 of the German Science Foundation (DFG).}
\and
Peter Stumpf
\thanks{University of Passau, Germany, \texttt{stumpf@fim.uni-passau.de}. Partially supported by grant Ru$\,$1903/3-1 of the German Science Foundation (DFG).}
\and
Peter Zeman
	\thanks{Department of Applied Mathematics, Faculty of Mathematics and Physics, Charles University, Czech Republic
	\texttt{zeman@kam.mff.cuni.cz}. J.~Fiala was supported by the grant 19-17314J of the GA \v{C}R.}
}
\date{}
  \def\calC{{\cal C}}
  \def\calO{{\cal O}} 
 \def\calR{{\cal R}}  
   \def\calX{{\cal X}}
\def\cP{\hbox{\rm \sffamily P}}
\def\cNP{\hbox{\rm \sffamily NP}}
\def\eps{\varepsilon}
\def\O{\mathcal{O}{}}
\def\reorder{{\textsc{Reorder}}}
\def\RepExt{{\textsc{RepExt}}}
\def\Partition{{\textsc{Partition}}}
\def\int{\hbox{\rm \sffamily INT}\xspace}
\def\ca{\hbox{\rm \sffamily CA}\xspace}
\def\pca{\hbox{\rm \sffamily PCA}\xspace}
\def\npca{\hbox{\rm \sffamily NPCA}\xspace}
\def\uca{\hbox{\rm \sffamily UCA}\xspace}
\def\hca{\hbox{\rm \sffamily HCA}\xspace}
\def\nca{\hbox{\rm \sffamily NCA}\xspace}
\def\nhca{\hbox{\rm \sffamily NHCA}\xspace}
\def\phca{\hbox{\rm \sffamily PHCA}\xspace}
\def\nphca{\hbox{\rm \sffamily NPHCA}\xspace}
\def\intr{\hbox{\rm \sffamily INTR}}
\def\car{\hbox{\rm \sffamily CAR}}
\def\ucar{\hbox{\rm \sffamily UCAR}}
\def\hcar{\hbox{\rm \sffamily HCAR}}
\def\nhcar{\hbox{\rm \sffamily NHCAR}}
\def\phcar{\hbox{\rm \sffamily PHCAR}}
\def\nphcar{\hbox{\rm \sffamily NPHCAR}}
\DeclareMathOperator{\cp}{cp}
\DeclareMathOperator{\Pre}{Pre}
\DeclareMathOperator{\Reg}{Reg}
\newcommand{\qqed}{\relax}
\begin{document}

\maketitle

\begin{abstract}
The partial representation extension problem generalizes the recognition
problem for classes of graphs defined in terms of vertex representations.
We exhibit circular-arc graphs as the first example of a graph class where the
recognition is polynomially solvable while the representation extension problem
is \cNP-complete.
In this setting, several arcs are predrawn and we ask whether this partial
representation can be completed.

We complement this hardness argument with tractability results of the
representation extension problem on various subclasses of circular-arc graphs,
most notably on all variants of Helly circular-arc graphs.
In particular, we give linear-time algorithms for extending normal proper Helly
and proper Helly representations.
For normal Helly circular-arc representations we give an $\calO(n^3)$-time
algorithm.

Surprisingly, for Helly representations, the complexity hinges on the seemingly
irrelevant detail of whether the predrawn arcs have distinct or non-distinct
endpoints: 
In the former case the previous algorithm can be extended, whereas the latter
case turns out to be $\cNP$-complete.
We also prove that representation extension problem of unit circular-arc graphs
is \cNP-complete.
\end{abstract}

\section{Introduction}

An intersection representation $\calR$ of a graph $G$ is a collection
of sets $\{R(v) : v \in V(G)\}$ such that
$R(u) \cap R(v) \neq \emptyset$ if and only if $uv \in
E(G)$. Important classes of graphs are obtained by restricting the
sets $R(v)$ to some specific geometric objects.  In an \emph{interval
  representation} of a graph, each set $R(v)$ is a closed interval of
the real line; 
and in a
\emph{circular-arc representation}, the sets $R(v)$ are closed arcs of
a circle;
see Fig.~\ref{fig:cK222}.
 A graph is an \emph{interval graph} if it admits an interval representation
 and it is a \emph{circular-arc graph} if it admits a circular-arc
 representation.  We also denote the corresponding classes of graphs by \int
 and \ca, respectively.

In many cases, the availability of a geometric representation makes
computational problems tractable that are otherwise \cNP-complete.  For
example, maximum clique can be solved in polynomial time for both
interval graphs and circular-arc graphs.  Another example is the
coloring problem, which can be solved in polynomial time for interval
graphs but remains \cNP-complete for circular-arc graphs~\cite{GareyJMP80}.

A key problem in the study of geometric intersection graphs is the
\emph{recognition problem}, which asks whether a given graph has a
specific type of intersection representation.  It is a classic result
that interval graphs can be recognized in linear time.  For
circular-arc graphs the first polynomial-time recognition algorithm
was given by Tucker~\cite{tucker1980efficient}.  McConnell gave a linear-time
recognition algorithm~\cite{mcconnell2003linear}.

In this paper, we are interested in a generalization of the
recognition problem.  For a class $\calX$ of intersection representations, the
\emph{partial representation extension
  problem for $\calX$} (\RepExt{$(\calX)$} for short) is defined as follows.
  In addition to a graph $G$, the input consists of a
\emph{partial representation}~$\calR'$ that is a representation of an
induced subgraph $G'$ of $G$.  The question is whether there exists a
representation $\calR \in \calX$ of $G$ that \emph{extends} $\calR'$ in the
sense that $R(u) = R'(u)$ for all $u \in V(G')$.  The recognition
problem is the special case where the partial representation is empty.
The partial representation extension problem has been recently studied
for many different classes of intersection graphs, e.g., interval
graphs~\cite{KlavikKOSV17}, proper/unit interval
graphs~\cite{klavik2017extending}, function and permutation
graphs~\cite{klavik2012extending}, circle graphs~\cite{CFK13}, chordal
graphs~\cite{KKOS15}, and trapezoid graphs~\cite{KrawczykW17}.
Related extension problems have also been considered, e.g., for planar
topological~\cite{adfjk-tppeg-15,jkr-kttpp-13} and
straight-line~\cite{p-epsld-06} drawings, for contact
representations~\cite{chaplick2014contact}, and rectangular
duals~\cite{abs-2102-02013}.

In many cases, the key to solving the partial representation extension
problem is to understand the structure of all possible
representations.  For interval representations, the basis for this is
the characterization of Fulkerson and Gross~\cite{maximal_cliques},
which establishes a bijection between the combinatorially distinct
interval representations of a graph $G$ on the one hand and the linear
orderings~$\preceq$ of the maximal cliques of $G$ where for each
vertex $v$ the cliques containing $v$ appear consecutively
in~$\preceq$ on the other hand.  This not only forms the basis for the
linear-time algorithm using PQ-trees by Booth and
Lueker~\cite{PQ_trees}, but also shows that a PQ-tree can compactly
store the set of all possible interval representations of a graph.
The partial representation problem for interval graphs can be solved
efficiently by searching this set for one that is compatible with the
given partial representation.

Despite the fact that circular-arc graphs straightforwardly generalize
interval graphs, the structure of their representations is much less
understood.  It is not clear whether there exists a way to compactly
represent the structure of all representations of a circular-arc
graph.  There are two structural obstructions to this aim.  First, in
contrast to interval graphs, it may happen that two arcs have
disconnected intersection, namely in the case when their union covers
the entire circle.  Secondly, intervals of the real line satisfy the
\emph{Helly property}: if any pair of sets in a set system intersects,
then the intersection of the entire set system is non-empty.
Consequently, the maximal cliques of interval graphs can be associated
to distinct points of the line and also the number of maximal cliques
in an interval graph is linear in the number of its vertices.  In
contrary, arcs of a circle do not necessarily satisfy the Helly property
and indeed the number of maximal cliques can be exponential.  The
complement of a perfect matching $nK_{2}$ is an example of this
phenomenon, see Fig.~\ref{fig:cK222}b.

\begin{figure}[t]
\centering
\includegraphics[scale=1.0,page=1]{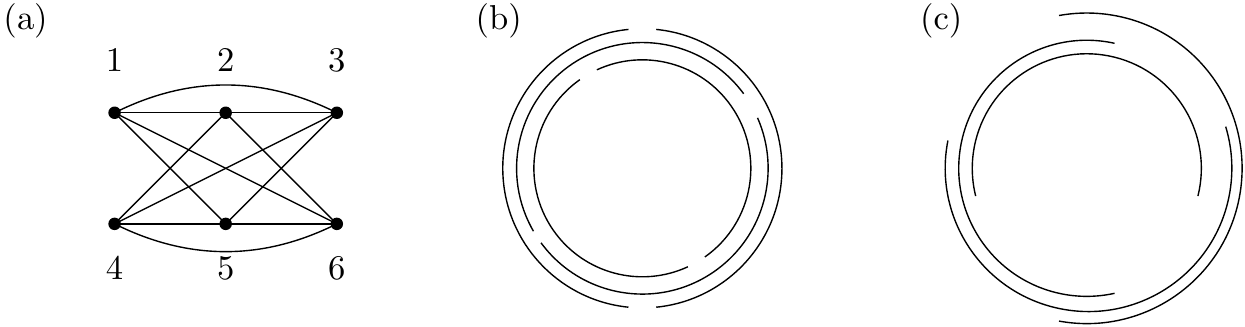}
\caption{(a) The graph $\overline{3K_{2}}$ and (b) its circular-arc representation.
(c) A non-Helly representation of $K_4$.}
\label{fig:cK222}
\end{figure}

To capture the above properties, that may have substantial impact on
explorations of  circular-arc graphs, the following specific subclasses of
circular-arc graphs have been defined and intensively
studied~\cite{Tucker1971MatrixCO,gavril1974algorithms,LinS2009survey,lin2013normal}:

\begin{compactitem}
\item \emph{Normal circular-arc graphs} ($\nca$) are circular-arc graphs that
  have an intersection representation in which the intersection of any two arcs
  is either empty or connected. 
\item \emph{Helly circular-arc graphs} ($\hca$) have an intersection
  representation that satisfies the Helly property, i.e there are no $k\ge 3$
  pairwise intersecting arcs without a point in common.
\item \emph{Proper circular-arc graphs} ($\pca$) are circular-arc graphs that
  have an intersection representation in which no arc properly contains
  another.
\item \emph{Unit circular-arc graphs} ($\uca$) are circular-arc graphs with an
  intersection representation in which every arc has a unit length. 
\end{compactitem}

The above properties can be combined together in the sense that a
single representation shall satisfy more properties simultaneously,
e.g. \emph{Proper Helly circular-arc graphs} ($\phca$) are
circular-arc graphs with an intersection representation that is both
proper and Helly~\cite{lin2007proper}. This is stronger than requiring
that a graph is a proper circular-arc graph as well as a Helly
circular-arc graph (with each property guaranteed by a different
representation), i.e., $\phca \subsetneq \pca \cap \hca$.

Analogously, since $C_4$ has a unique representation, the wheel $W_4$
is a graph with a Helly representation (the universal vertex covers
all four clique points) or a normal representation (it covers three
cliquepoints) but not normal Helly representation. Thus also
$\nhca \subsetneq \nca \cap \hca$.

Moreover, Tucker~\cite{Tucker1974} proved that every representation of a proper
(Helly) circular-arc graph that is not normal can be transformed into a normal
representation.
Hence, the following graph classes coincide $\pca=\npca$ and $\phca=\nphca$.
Fig.~\ref{fig:ca_inclusions}a shows inclusions between the defined graph classes.

We use an analogous notation for the classes of possible representations, 
i.e., 
for $X\subseteq\{\hbox{\rm \sffamily N,P,H}\}$ the symbol $X$\car{} for the
class of all $X$\ca{} representations, see Fig.~\ref{fig:ca_inclusions}b.  We
note that whether a graph $G$ with a partial representation $\cal R'$ admits an
extension depends crucially on the class of allowed representations, as
illustrated by the example of $W_4$ above.
 
\myparagraph{Our results.}
While for many classes efficient algorithms for the representation extension
problem have been found, the problem has been open for circular arc graphs for
nine years~\cite{DBLP:journals/corr/abs-1207-6960}.
We prove that $\RepExt(\car)$ is \cNP-hard. To the best of our knowledge,
it is the first known representation class for which the extension problem is
\cNP-hard while the recognition problem is in \cP.
Our reduction also works for $\RepExt(\hcar)$.

\begin{restatable}{theorem}{hcanpc}\label{thm:hca_npc}
  The problems $\RepExt(\hcar)$ and $\RepExt(\car)$ are \cNP-hard.
  $\RepExt(\car)$ is also \cNP-hard if the predrawn arcs have pairwise distinct endpoints.
\end{restatable}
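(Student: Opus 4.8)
The plan is to give a polynomial-time reduction from a suitable \cNP-complete variant of \textsc{3-Sat} to $\RepExt(\car)$, and then to observe that the very same construction also settles $\RepExt(\hcar)$. Since a circular-arc representation is determined by the cyclic sequence of the $2|V(G)|$ arc endpoints around the circle, the idea is to use the predrawn arcs to pin down a long, essentially rigid \emph{frame} of endpoints that subdivides the circle into a designed cyclic sequence of small \emph{slots}. The frame arcs are chosen so that their own intersection graph is trivially representable (e.g.\ a disjoint union of cliques), so that the slots occur in a fixed cyclic order, and so that no free arc can reach from one designed region to another except along intended corridors.

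Into this frame we insert two kinds of gadgets. For each variable $x_i$ we reserve two slots $T_i,F_i$ and add one free vertex $v_i$ whose predrawn neighbours force $R(v_i)\subseteq T_i\cup F_i$, while an additional pair of predrawn arcs --- one inside $T_i$, one inside $F_i$, both \emph{non}-adjacent to $v_i$ --- forbids $R(v_i)$ from spanning the whole region; hence $R(v_i)$ lies entirely inside $T_i$ or entirely inside $F_i$, and we read this binary choice as the truth value of $x_i$. For each clause $C_j$ we reserve one slot $S_j$ and add a free vertex $c_j$ that, via edges and non-edges to frame arcs, must be placed inside $S_j$; for every literal of $C_j$ a short predrawn ``probe'' arc (or a chain of them) reaches into $S_j$ from the corresponding variable gadget and blocks one candidate position of $c_j$ exactly when that literal is false. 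Thus $c_j$ can be placed if and only if at least one literal of $C_j$ is true.

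For correctness, given a satisfying assignment we place each $v_i$ in $T_i$ or $F_i$ accordingly, each $c_j$ in an unblocked position of $S_j$, and check that the remaining adjacencies can be realized; this yields a \ca{} extension, and it can be arranged so that the only points covered by three or more arcs are the slot endpoints that the gadgets deliberately share, so the extension is Helly and therefore also an \hcar{} extension. Conversely, any \ca{} extension --- \emph{a fortiori} any \hcar{} one --- is forced by the frame to respect the slot structure, so it induces a truth assignment that, by the clause gadgets, satisfies every~$C_j$; this proves the first two statements. For the third, the frame exploits coinciding endpoints only to make the forcing tight, and perturbing the predrawn arcs so that all endpoints become pairwise distinct --- preserving their cyclic order and the induced subgraph $G'$ --- keeps the reduction valid for $\car$, since a \ca{} extension may still use wrap-around intersections and arbitrarily close endpoints to recover the same behaviour. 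This perturbation does \emph{not} work for $\hcar$: separating coinciding endpoints destroys the common intersection point of the three gadget arcs, which is precisely why the Helly case with distinct predrawn endpoints is genuinely different and is treated separately.

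The main obstacle is the completeness direction. On a circle two arcs can intersect in two disjoint components, and a frame assembled from predrawn arcs alone is only ``locally'' rigid, so one must rule out \ca{} representations that realize all prescribed adjacencies in some unintended, wrapped-around configuration allowing a clause vertex to be placed without the help of a true literal. Designing the frame so that it becomes globally rigid using only predrawn arcs --- bearing in mind that we cannot freely add predrawn vertices adjacent to the free vertices, since their traces in $\calR'$ are fixed --- while simultaneously keeping the yes-instance representation Helly, is where essentially all of the technical work lies.
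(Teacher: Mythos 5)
Your proposal is an outline, not a proof, and two of its load-bearing steps are genuinely missing or would fail as stated. First, the clause mechanism is broken: you say that for each literal of $C_j$ a \emph{predrawn} ``probe'' arc reaches into the slot $S_j$ and ``blocks one candidate position of $c_j$ exactly when that literal is false.'' Predrawn arcs are fixed parts of the input $\calR'$; whether they cover a point of $S_j$ cannot depend on the truth value, which is only decided by where the free arc $R(v_i)$ ends up. Assignment-dependent blocking has to be transmitted by \emph{free} arcs whose placement is forced by the choice made at the variable gadget, and you give no gadget that achieves this forcing --- which is delicate precisely because, as you note yourself, arcs on a circle can intersect in two components and wrap around, so ``corridors'' built from predrawn arcs do not automatically confine free arcs. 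Indeed you concede that the completeness direction (global rigidity of the frame against unintended wrapped-around representations) is ``where essentially all of the technical work lies''; that is exactly the content of the theorem, so leaving it open means the statement is not proved. The claim that the yes-instance representation ``can be arranged'' to be Helly is likewise asserted rather than established, and it interacts badly with the third part of the statement: you would also need to argue membership of the clause and variable vertices in the right maximal cliques without ever creating three pairwise-intersecting free arcs with empty common intersection.

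For comparison, the paper avoids all of this by reducing from the strongly \cNP-complete {\sc 3-Partition} problem rather than from {\sc 3-Sat}. It predraws $(t+1)n$ universal arcs whose shared endpoints leave only $(t+1)n$ admissible points for clique points, plus degenerate ``blocker'' arcs $R'(v_j)$ occupying every $(t+1)$-st point; each item $s_i$ becomes a star vertex $z_i$ with $s_i$ private leaves, so $z_i$ occupies exactly $s_i$ consecutive admissible points, and the $t$ free points between consecutive blockers force a partition of the $s_i$ into groups summing to $t$. No variable/clause gadgetry, no rigidity argument beyond counting points, and the Helly property comes for free from the shared endpoints; the distinct-endpoint variant for $\RepExt(\car)$ is then obtained by slightly shrinking the universal arcs (which deliberately destroys the Helly property, consistent with your final remark). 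If you want to salvage your route, you must (i) replace the predrawn probes by chains of free vertices with a proven propagation lemma, and (ii) prove a rigidity lemma that excludes wrap-around placements; until then the reduction is not established.
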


We complement this result by showing tractability for several subclasses,
including all Helly variants; see Figure~\ref{fig:ca_inclusions}b. 
Linear-time algorithms for recognizing
Helly circular-arc graphs~\cite{lin2006characterizations,joeris2011linear} use
McConnell's~\cite{mcconnell2003linear} algorithm to construct a circular-arc
representation and transform it to a Helly circular-arc representation. This
cannot be exploited in the case of partial representation extension.
%


\begin{figure}[t]
\centering
\includegraphics[scale=1,page=2]{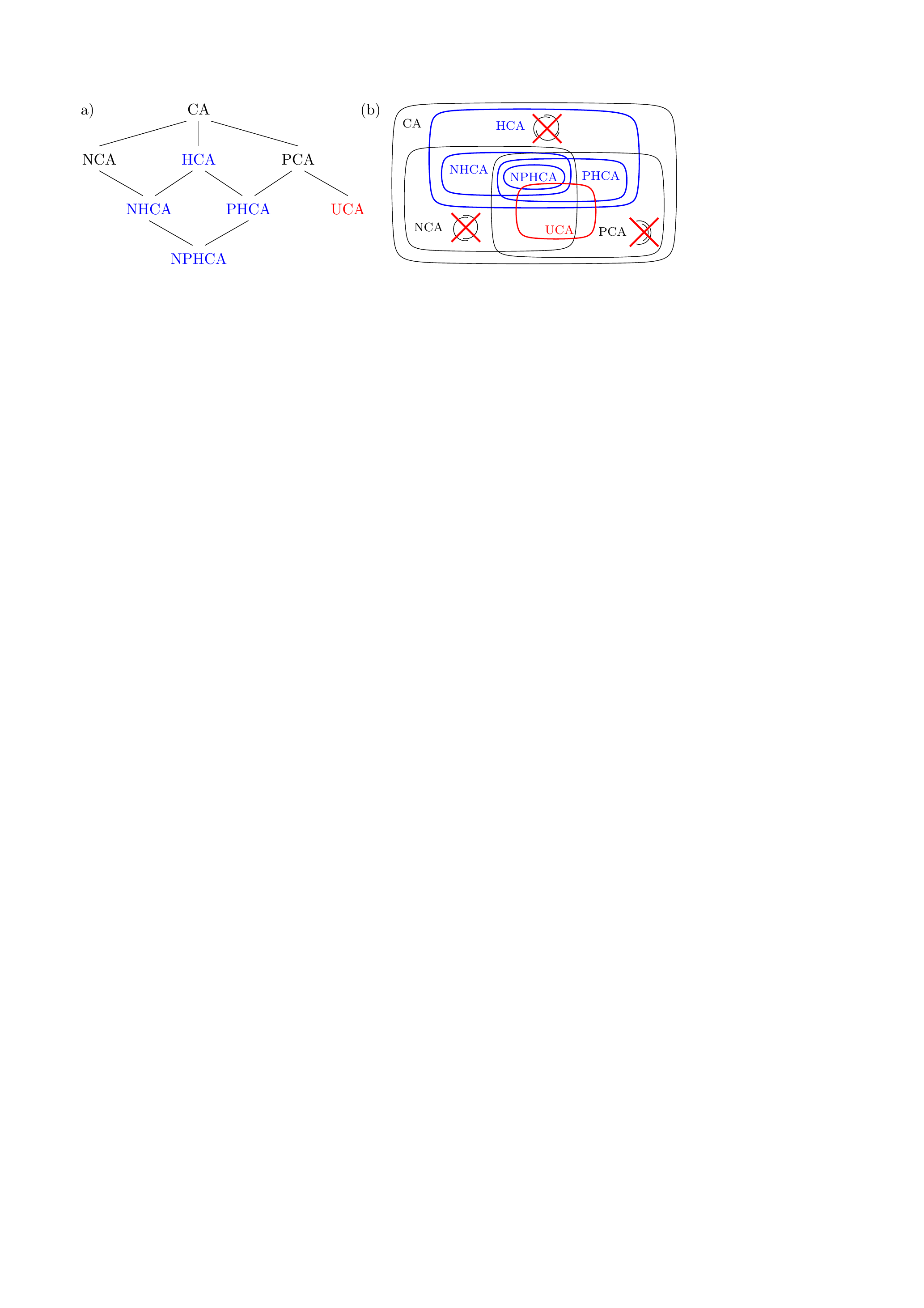}
\caption{ (a) Relationships between classes of circular-arc graphs.  (b)
  Relationships between classes of circular-arc representations. Classes
  studied in this paper are underlined. \RepExt{} is polynomial for blue, while
  \cNP-complete for red.  
}
\label{fig:ca_inclusions}
\end{figure}

Deng et al.~\cite{deng1996linear} and Lin et
al.~\cite{lin2006characterizations} characterize proper and proper Helly
circular-arc representations in terms of vertex orderings of the graph.  They
show that these orderings are unique under certain conditions.  Building on
these results, we prove the following two theorems.

\begin{restatable}{theorem}{nphcapoly}
  \label{thm:nphca_poly}
  The problem $\RepExt(\nphcar)$ can be solved in linear time.
\end{restatable}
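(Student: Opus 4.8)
The plan is to build on the characterization of proper Helly circular-arc representations by vertex orderings due to Deng et al.\ and Lin et al., exploiting that (since the graph classes $\nphca$ and $\phca$ coincide) these orderings are essentially unique for the ``interesting'' inputs. Concretely, I would first establish the structural dichotomy that every $\phca$ graph $G$ is \emph{either} a proper interval graph \emph{or} connected and not a proper interval graph: if $G$ is $\phca$ and disconnected then, because in any representation the arc-union of each component is a connected proper sub-arc of the circle and arc-unions of distinct components are disjoint arcs, every component is a proper interval graph and hence so is $G$. In the second case (connected, non-proper-interval) I use that every $\nphcar$ representation covers the whole circle and that its underlying cyclic vertex ordering -- hence the cyclic sequence of left and right arc-endpoints around the circle -- is forced up to reflection and up to permuting vertices within a \emph{twin block} (vertices with the same closed neighborhood). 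The point is that in this case the combinatorial representation is rigid, so extension reduces to a consistency check plus a greedy fill-in.

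The algorithm then runs as follows. First, verify in linear time that $\calR'$ is a valid $\nphcar$ representation of $G'$ (intersections agree with $E(G')$, no predrawn arc properly contains another, no two predrawn arcs together cover the circle, no three pairwise intersecting predrawn arcs lack a common point) and that $G$ is $\phca$; reject otherwise. If $G$ is connected and not a proper interval graph, compute its (essentially unique) cyclic vertex ordering $\sigma$; then check that the cyclic order of the predrawn endpoints around the circle agrees with $\sigma$ or its reversal, that predrawn arcs belonging to the same twin block are consistently nested, and that all forced adjacencies and non-adjacencies among predrawn arcs hold; if every check passes, insert the remaining endpoints one twin block at a time into the arcs between consecutive predrawn endpoints, shrinking or stretching so that properness and the Helly and normality conditions hold, which $\sigma$ guarantees is always possible once the cyclic order is correct. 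If instead $G$ is a proper interval graph, check that predrawn arcs of distinct components do not interleave on the circle (reject if they do, and also if the predrawn arcs cover the circle), cut the circle at an uncovered point, assign to each component the smallest sub-arc containing its predrawn arcs, solve inside each component the proper interval instance $\RepExt(\pintr)$ in linear time, and finally place the components with no predrawn arcs into any leftover gap.

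Correctness of the second case follows immediately from the rigidity statement: the listed checks are exactly the obstructions, and the greedy insertion realises a representation when none of them fires. For the proper interval case, any $\nphcar$ representation of $G$ induces the claimed partition of the circle into pairwise disjoint component sub-arcs (so non-interleaving and non-covering of the predrawn arcs are necessary), and conversely merging any choice of non-interleaving sub-arcs with a valid per-component proper interval extension yields a representation of $G$. The main obstacle is the proper interval case rather than the circular one: one must argue that cutting the circle never loses a solution (a component whose predrawn arcs ``straddle'' a predrawn arc of another component genuinely has no extension, and a component whose predrawn arcs already pin its orientation becomes a proper interval extension instance whose host is a sub-segment that must avoid the other components), and one must keep the twin-block and component bookkeeping within linear time; the connected non-proper-interval case is comparatively routine once the uniqueness of the vertex ordering of Deng et al.\ and Lin et al.\ is invoked.
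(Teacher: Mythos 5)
Your overall strategy is the same as the paper's: exploit the essential uniqueness (up to reversal and twins) of the consecutive cyclic vertex ordering of an $\nphca$ graph, check that the predrawn arcs are consistent with it, and then fill in the remaining endpoints. However, there is a genuine gap in the connected case, exactly at the step where you assert that the greedy insertion ``is always possible once the cyclic order is correct.'' This is false when two predrawn arcs \emph{touch}, i.e.\ $R'(v_i)_h = R'(v_j)_t$. Since in a proper representation at most that one head and that one tail can occupy a single point, no further endpoint may be placed on (or, combinatorially, between) these two coincident endpoints. The forced endpoint order derived from the cyclic vertex ordering $\sigma$ may nevertheless require some endpoint of a \emph{non-predrawn} vertex to lie strictly between $h_i$ and $t_j$ (e.g.\ when $v_i$ has a neighbor occurring after $v_j$ in $\sigma$, so that $h_i$ would have to lie clockwise beyond $t_j$). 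None of your checks detects this: the cyclic order of the predrawn endpoints can agree perfectly with $\sigma$, your ``forced adjacencies among predrawn arcs'' test is vacuous (it is already implied by $\calR'$ being a representation of the induced subgraph), and the obstruction involves a vertex that is not predrawn at all. The paper's characterization (its Lemma~\ref{lem:nphca_char}) needs a second, separate condition precisely for this situation: whenever $R'(v_i)_h = R'(v_j)_t$, one must have $N[v_i]=\{v_a,\dots,v_i,\dots,v_j\}$ and $N[v_j]=\{v_i,\dots,v_j,\dots,v_b\}$, i.e.\ $v_j$ is the last neighbor of $v_i$ and $v_i$ the first neighbor of $v_j$ in the ordering. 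Without this condition your algorithm would accept unextendable instances.

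A secondary weakness is that the sufficiency direction is only asserted, not argued: the paper proves it by an explicit construction of the endpoint order (inserting each head $h_i$ just before the tail of the successor of $v_i$'s last neighbor, with a separate rule for universal vertices, then placing new endpoints equidistantly between consecutive predrawn ones), and this construction is where properness, normality and the Helly property are actually verified. Your ``shrink or stretch'' step would need the same level of detail to constitute a proof. Your treatment of the proper-interval/disconnected case differs from the paper (which simply dispatches disconnected inputs to the interval-graph extension algorithms), and while your cut-the-circle reduction is plausible, it adds bookkeeping the paper avoids; in any case it does not compensate for the missing touching-arc condition, which is the decisive omission.
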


\begin{restatable}{theorem}{phcared}
  \label{thm:phca_red}
  The problem $\RepExt(\phcar)$ can be solved in linear time.
\end{restatable}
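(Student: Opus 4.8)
The plan is to reduce $\RepExt(\phcar)$ to $\RepExt(\nphcar)$, which is solvable in linear time by Theorem~\ref{thm:nphca_poly}, using as a bridge Tucker's result~\cite{Tucker1974} that a proper Helly circular-arc representation which is not normal can always be normalized. The subtlety is that normalization must not disturb the predrawn arcs, so the reduction splits according to whether the partial representation $\calR'$ is itself normal.

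First I would test in linear time whether $\calR'$ is normal, i.e., whether no two predrawn arcs together cover the whole circle. If it is, I would prove that $G$ admits a $\phca$ representation extending $\calR'$ if and only if it admits an $\nphca$ representation extending $\calR'$. One direction is trivial; for the other, take a non-normal proper Helly representation $\calR \supseteq \calR'$ and observe that each pair of arcs whose union is the circle contains an arc that is not predrawn, because $\calR'$ is normal. Mimicking Tucker's normalization step, I would shrink that non-predrawn arc so the pair stops covering the circle while preserving all adjacencies, properness, and the Helly property; shrinking only makes arcs smaller, so no new circle-covering pair is created, and iterating yields a normal representation that still extends $\calR'$. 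Then I invoke Theorem~\ref{thm:nphca_poly}.

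It remains to handle the case where $\calR'$ is not normal, so two predrawn arcs $R'(u)$ and $R'(v)$ have $R'(u)\cup R'(v)$ equal to the whole circle and meet in two disjoint arcs $C_1$ and $C_2$. Since any extending representation is proper, every arc is then contained in $R'(u)$, contained in $R'(v)$, or contains exactly one of $C_1,C_2$ while lying in neither of $R'(u),R'(v)$. Using this rigidity I would cut the circle open at $C_1$ and $C_2$ and, over a constant number of sub-cases determined by how the crossing arcs are placed relative to $u$ and $v$, reduce to extending proper interval representations along the two resulting arcs; this is linear-time solvable via the proper/unit interval algorithm of~\cite{klavik2017extending}, working throughout with the vertex-ordering characterizations of proper (Helly) circular-arc representations~\cite{deng1996linear,lin2006characterizations} so as never to enumerate the possibly exponentially many cliques.

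The step I expect to be the main obstacle is the correctness of the normalization argument in the normal case: one must check that Tucker's shrinking step can indeed be localized to a non-predrawn arc of every circle-covering pair, that the Helly property survives each step, and that a single pass through all such pairs is consistent. A secondary difficulty is the bookkeeping in the non-normal case, where the placement of a crossing arc interacts with the consecutiveness constraints on both sides of the cut; keeping this finite and linear is exactly where the vertex-ordering characterizations are needed.
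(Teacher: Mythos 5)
Your top-level decomposition coincides with the paper's: split on whether some pair of predrawn arcs is in non-normal position, and in the normal case reduce to $\RepExt(\nphcar)$ and invoke Theorem~\ref{thm:nphca_poly}. However, the step you yourself flag as the ``main obstacle'' is a genuine gap, and the paper closes it by a different and much simpler argument than the Tucker-style local shrinking you sketch. Shrinking the non-predrawn arc of a circle-covering pair is delicate: the arc belongs to a universal vertex, and removing one of the two overlap components can destroy its intersection with arcs that met it only there, and one must also re-verify the Helly property after every shrink and argue that the iteration terminates consistently; none of this is established in your proposal. The paper instead uses Lemma~\ref{lem:non-normal-universal} (in a proper representation, any two arcs in non-normal position belong to universal vertices) and then simply redraws \emph{all} non-predrawn universal vertices as copies of one fixed universal arc --- the predrawn one if it exists, an arbitrary one otherwise. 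After this global modification any remaining non-normal pair would have to consist of two predrawn arcs, which is excluded by assumption, so one immediately gets an $\nphcar$-extension (Lemma~\ref{lem:phca-nphca}) without ever touching adjacencies, properness or Hellyness. Your route might be repairable, but as written it asserts rather than proves the key equivalence.

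The non-normal case is also not carried out, and your sketch points in a direction that does not obviously work. The paper's argument is: properness forces every arc to contain one of the two intersection components $A,B$ of the crossing predrawn arcs $R'(u),R'(v)$, and the Helly property (via $R'(u)$ and $R'(v)$ themselves) forbids any adjacency between non-universal vertices on the $A$-side and the $B$-side; hence after pruning universal vertices $G$ must be a disjoint union of two cliques, extendability reduces to a consecutivity condition on the endpoints of the predrawn universal arcs, and a representation is built by replicating three representative arcs --- all in linear time. Your plan to cut the circle at $C_1,C_2$ and solve proper \emph{interval} extension problems on the two resulting arcs never explains how the Helly constraints imposed by the wrapped arcs $R'(u),R'(v)$ (which are exactly what kills cross edges and pins down where universal endpoints may lie) are encoded in those interval instances, nor what the ``constant number of sub-cases'' are; moreover your trichotomy ``contained in $R'(u)$, contained in $R'(v)$, or contains exactly one of $C_1,C_2$'' is imprecise, since properness forbids proper containment in $R'(u)$ or $R'(v)$ and the usable conclusion is that every arc contains $A$ or $B$. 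So both halves of the argument are left at the level of a plan, whereas the statement needs the structural lemmas the paper supplies.
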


Recall that in the case of interval graphs, PQ-trees can be used to capture all
plausible linear orderings of the maximal cliques.   Klav\'{\i}k et
al.~\cite{klavik2017extending} use this to solve $\RepExt$ for interval
representation by determining an order that is represented by the PQ-tree and
that extends a partial order that is derived from the partial representation.

The fact that Gavril~\cite{gavril1974intersection} shows that a graph $G$ is a Helly
circular-arc graph if and only if there exists a \emph{cyclic
  ordering} $\lhd$ of its maximal cliques such that for every vertex
$v$, the maximal cliques containing $v$ appear consecutively in
$\lhd$ and that Hsu and McConnell~\cite{hsu2003pc} use PC-trees to capture
all plausible cyclic orderings of the maximal cliques of a Helly
circular-arc graph makes it tempting to simply apply the same techniques to
generalize the algorithm of Klav\'ik et al.  However, this cannot be
straightforwardly applied for two reasons.  First, the clique ordering carries
little information about whether a representation is normal or not, and, even
more severely, extending a partial cyclic ordering is \cNP-complete, even
without requiring that the order be additionally represented by some given
PC-tree~\cite{gm-con-77}. We overcome this by working with suitably linearized
partial orders to show the following results.

\begin{restatable}{theorem}{nhcapoly}\label{thm:nhca_poly}
The problem $\RepExt(\nhcar)$ can be solved in $\calO(n^3)$ time.
\end{restatable}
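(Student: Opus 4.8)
The plan is to exploit the rigidity of Helly representations at the level of maximal cliques, while keeping tight control over the remaining freedom --- the placement of arc endpoints inside the gaps between clique points --- that the normality condition constrains. After the routine preprocessing (checking that $G$ is a normal Helly circular-arc graph and that $\calR'$ is a valid partial representation, reducing to the case that $G$ is connected, and, since normal Helly circular-arc graphs have only $\calO(n)$ maximal cliques, listing them), I would recall that in every normal Helly circular-arc representation $\calR$ the Helly property yields a \emph{clique point} $p_C$ for each maximal clique $C$, lying precisely in the arcs $\{R(v):v\in C\}$. Up to homeomorphism, $\calR$ is determined by the cyclic order $\lhd$ of the clique points --- which must make the cliques through each vertex consecutive, i.e.\ be admissible for the PC-tree of $G$ of Hsu and McConnell --- together with, for each gap between two consecutive clique points, the linear order of the arc endpoints falling into that gap; here normality is exactly the requirement that within each gap a certain ``bad'' interleaving of a left endpoint before a right endpoint (which would let the two arcs cover the whole circle) does not occur. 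The partial representation $\calR'$ then imposes (a) a partial cyclic order on the clique points together with an assignment of each predrawn endpoint to a gap, and (b) a partial linear order on the endpoints inside each gap that contains predrawn endpoints.

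Second, to sidestep the $\cNP$-hardness of extending partial cyclic orders, I would linearize by guessing a canonical ``cut''. Concretely, the claim to establish is that there is a family of only polynomially many guesses --- for instance a choice of an edge of the PC-tree at which to cut, or equivalently the pair of maximal cliques whose clique points straddle a fixed reference derived from $\calR'$ and the clique structure --- such that every valid extension agrees with at least one guess, and that fixing a guess turns all cyclic constraints into linear ones. Under a fixed guess the task splits into: (i) choosing a linear order of the maximal cliques that is admissible for the PQ-tree obtained by cutting the PC-tree and that is consistent with the partial linear order extracted from $\calR'$ --- exactly the interval-representation task solved by Klav\'{\i}k et al.\ via PQ-tree manipulation --- and (ii) inserting the non-predrawn endpoints into the gaps so that the normality orderings and the edge/non-edge requirements with the predrawn arcs are met. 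Step (ii) is a constraint-satisfaction problem in which each free endpoint chooses a side of each predrawn endpoint in its gap subject to the ``no two arcs cover the circle'' rule; I expect this to be modelable as a $2$-satisfiability instance or a small interval-placement/flow problem and hence solvable in near-linear time. With $\calO(n)$ guesses and an $\calO(n^2)$-time subroutine per guess (or $\calO(n^2)$ guesses and a near-linear subroutine) the overall running time is $\calO(n^3)$.

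The main obstacle lies in the two structural claims underlying the linearization. First, one must prove that a polynomial number of cuts suffices, i.e.\ that the cyclic ambiguity of a hypothetical extension has only polynomially many relevant resolutions pinned down by $\calR'$ and the clique structure, rather than exponentially many. Second, one must verify that once the cyclic order is fixed the normality requirement, although it couples endpoints lying in a common gap but belonging to arcs with very different clique blocks, stays local enough to each gap to be combined with the PQ-tree constraints of step (i) without destroying polynomiality --- in particular that no global parity-type obstruction survives the cut. The remaining parts --- the preprocessing, the translation of $\calR'$ into partial orders and gap assignments, and the correctness of the PQ-tree subroutine --- are either standard or direct adaptations of known interval-graph techniques, and all fit within the $\calO(n^3)$ budget.
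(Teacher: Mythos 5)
Your plan has two genuine gaps, and the more serious one is your treatment of normality. You encode normality as a forbidden interleaving of endpoints \emph{within a single gap} between consecutive clique points, but a non-normal pair of arcs is a pair whose intersection is disconnected, i.e.\ whose union covers the whole circle; the two components of that intersection lie in two \emph{different} gaps, so the condition couples endpoints placed in gaps that are in general far apart, and a per-gap 2-SAT/insertion subroutine cannot see it. You half-acknowledge this ("no global parity-type obstruction"), but the claim that the constraint "stays local enough to each gap" is exactly what fails. The paper resolves this structurally rather than by endpoint bookkeeping: any two vertices whose arcs could be in non-normal position must form a \emph{universal pair} (adjacent $u,w$ with every vertex adjacent to $u$ or $w$), and $G$ without universal vertices is NHCA iff there is a cyclic order of the maximal cliques in which, besides $M_v$ for every vertex $v$, also $M_u\cap M_w$ is consecutive for every universal pair $u,w$ (Lemma~\ref{lem:nhcaG}). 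Arcs are then taken as minimal arcs through their clique points, so normality never has to be enforced at the endpoint level; the $\calO(n^2)$ universal-pair constraints are precisely the bottleneck giving $\calO(n^3)$. Note also that the "minimal arc through its clique points" construction breaks for universal vertices, which you never address; the paper disposes of them first by showing that a universal vertex forces an uncovered point of the circle, reducing to $\RepExt(\intr)$.

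The second gap is the linearization, which you yourself flag as unproven: you need that polynomially many "cuts" suffice and that after a cut the predrawn constraints become a PQ-tree-plus-partial-order problem. The paper does not guess cuts. It introduces, for each maximal clique $C$, the region $\Reg(C)=\Reg^+(C)\setminus\Reg^-(C)$ of feasible clique-point positions, proves that regions of two cliques are equal or disjoint, that $\Reg(C)$ lies inside a single gap of any other clique (yielding further consecutivity constraints $S_J$ for every gap $J$), and that some clique $D$ has a connected region; anchoring at a point of $\Reg(D)$ gives a partial order $\prec$ that every extension must linearize, and feasibility is decided by a single $\reorder$ instance on the PC-tree (Lemma~\ref{lem:reordering_linear}). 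Crucially, sufficiency is not automatic: one must show that a linear order satisfying the consecutivity constraints can actually be realized by placing each $\cp(C_i)$ inside $\Reg(C_i)$ in increasing position, which the paper proves by a greedy $\eps$-placement argument relying on the single-gap lemma. Without the region machinery and this realizability argument, your reduction to "PQ-tree ordering plus per-gap CSP" is not established, and the speculative 2-SAT/flow step has nothing concrete to verify.
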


\begin{restatable}{theorem}{hcapoly}\label{thm:hca_poly}
The problem $\RepExt(\hcar)$ can be solved in $\calO(n^3)$ time if the partial
representation consists of arcs with pairwise distinct endpoints.
\end{restatable}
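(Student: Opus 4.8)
The plan is to adapt the $\calO(n^3)$-time algorithm for $\RepExt(\nhcar)$ from Theorem~\ref{thm:nhca_poly}. By the characterization of Gavril quoted above, every Helly representation of $G$ corresponds to a cyclic ordering $\lhd$ of the maximal cliques of $G$ in which, for each vertex $v$, the cliques containing $v$ occur consecutively; such a representation is normal exactly when, in addition, no two arcs ``wrap around'' and meet in two components. At the level of clique orderings the Helly problem is thus a \emph{relaxation} of the normal Helly problem: one drops the requirement that the complements (``gaps'') of the arcs pairwise intersect. The catch -- and the reason the general Helly case is hard (Theorem~\ref{thm:hca_npc}) while the normal one is not -- is that without this requirement the task of placing the arc \emph{endpoints} consistently with the \emph{exact} predrawn arcs becomes genuinely combinatorial; normality is precisely what keeps those endpoints rigid. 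The hypothesis that the predrawn arcs have pairwise distinct endpoints restores enough of this rigidity.

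\myparagraph{Step 1: the right combinatorial characterization.} I would first show that, when the predrawn endpoints are pairwise distinct, $G$ has a Helly representation extending $\calR'$ if and only if $G$ has a cyclic clique ordering $\lhd$ that (i) is consistent with the combinatorial ``trace'' of $\calR'$ -- the circular order in which the endpoints of the predrawn arcs appear, which predrawn arcs already cover the whole circle, and which predrawn pairs already meet in two components -- and (ii) admits clique-point positions that make every predrawn arc $R'(v)$ span exactly the cliques containing $v$. The role of distinct endpoints is that between any two consecutive predrawn endpoints there is free space, so (ii) reduces to a pure ordering/betweenness feasibility condition: the only way it can fail is that the cyclic order of clique points it would force is inconsistent with the trace, and this can be folded into the partial order used by the algorithm of Theorem~\ref{thm:nhca_poly}.

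\myparagraph{Step 2: modifying the engine.} The algorithm behind Theorem~\ref{thm:nhca_poly} already searches for a cyclic clique order compatible with a suitably linearized partial order derived from $\calR'$. I would modify it by (a) removing the normality requirement, thereby allowing an arc to extend past its adjacent clique-gaps -- in particular allowing whole-circle arcs -- and (b) adding to the partial order the betweenness relations from Step~1 that encode ``$R'(v)$ spans exactly the cliques of $v$'', which is legitimate precisely because the predrawn endpoints are distinct. A successful run yields a cyclic clique order together with a consistent ordering of all endpoints and clique points; placing these greedily in the gaps between consecutive predrawn endpoints turns it into a genuine Helly representation extending $\calR'$. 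Since (a) only enlarges the search space in a controlled way and (b) adds $\calO(n)$ relations, the running time stays $\calO(n^3)$.

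\myparagraph{Main obstacle.} The hard part is Step~1: proving that, under distinct endpoints, feasibility of the exact-endpoint placement really does collapse to a betweenness condition absorbable by the Theorem~\ref{thm:nhca_poly} machinery, and in particular that there is no subtle global obstruction arising from several predrawn arcs simultaneously forcing incompatible wrap-arounds around the circle. This is exactly where distinctness is indispensable: once two predrawn endpoints may coincide, the endpoint-placement problem acquires rigid equality constraints that can be chained together into $\cNP$-hard instances, consistently with the hardness of $\RepExt(\hcar)$ in Theorem~\ref{thm:hca_npc}.
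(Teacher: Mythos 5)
Your high-level plan coincides with the paper's: reuse the NHCA machinery of Theorem~\ref{thm:nhca_poly}, replace Lemma~\ref{lem:nhcaG} by Gavril's characterization (Lemma~\ref{lem:gavril_hca}), and use the distinct-endpoint hypothesis to rule out the degenerate equality constraints that drive the hardness of Theorem~\ref{thm:hca_npc}. However, as written the proposal has a genuine gap: the entire content of the theorem sits in your Step~1, and you explicitly leave it unproven (your ``main obstacle'' paragraph). The paper supplies exactly this missing piece as Theorem~\ref{the:HCArepextChar}: extendability is characterized by a linear extension of $\prec$ in which every $M_v$ and every gap-set $S_J$ is consecutive, and the sufficiency direction is proved by the greedy $\varepsilon$-spaced placement of clique points inside the regions $\Reg(C)$, using Lemmas~\ref{lem:region_equivalence}, \ref{lem:singleGap} and \ref{lem:gapConsecutivity}. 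The distinct-endpoint hypothesis enters there in a precise way -- it guarantees that no island of any region is a single point (a single-point island can only arise as the intersection of two predrawn arcs sharing an endpoint), so the greedy placement never gets stuck at the last point of a region -- which is considerably sharper than your informal ``free space between consecutive endpoints'' argument, and it is this fact, not a generic betweenness reduction, that makes the characterization go through.

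The second concrete omission is the anchoring of the linearization. The NHCA algorithm cuts the circle at a point $p_D$ inside the region of a clique $D$ having a \emph{single} island, and Lemma~\ref{lem:singleIsland} (which provides such a $D$) uses normality of the partial representation; for a general Helly partial representation $\Reg^+(D)$ need not be connected and no such clique need exist. Your Step~2 silently keeps the NHCA engine's partial order $\prec$, but $\prec$ is only defined relative to a chosen $p_D$, and without a single-island clique there is no canonical choice. The paper resolves this by picking an arbitrary clique $C_1$ and running the whole procedure once for each of its $\calO(n)$ islands (relocating $p_D$ to the middle of the island and re-deriving $\prec$); since each iteration costs $\calO(n^2)$ (only $\calO(n)$ gap constraints arise, unlike the $\calO(n^2)$ universal-pair constraints of the NHCA case), the total remains $\calO(n^3)$. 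Your claimed $\calO(n^3)$ bound presupposes an accounting of this kind that the proposal does not provide.
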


It follows that Helly representations used in our reduction essentially
involve arcs that share endpoints. 
This is surprising since non-degeneracy assumptions like this are often made
without much consideration of the impact on the problem when working with graph
representations.

The bottleneck of our NHCA-algorithms is the testing of the consecutivity
constraints for all universal pairs of vertices.  A closer exploration of the
structure of the set of universal pairs may yield improvements of the running
time upper bound.

Finally, we show that involving the most tight constraints on arc lengths, the
problem becomes again computationally difficult.

\begin{restatable}{theorem}{ucanpc}\label{thm:uca_npc}
The problem $\RepExt(\ucar)$ is \cNP-complete.
\end{restatable}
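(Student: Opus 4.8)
\emph{Membership.}
The only delicate point is that the endpoint coordinates of a unit circular-arc representation could a priori require super-polynomially many bits. I would nondeterministically guess the \emph{combinatorial type} of the sought representation~$\calR$: the cyclic order of the $2|V(G)|$ arc endpoints around the circle together with, for each arc, the side of the corresponding chord on which its interior lies (this distinguishes short arcs from arcs that wrap around). This is an object of size $\calO(|V(G)|\log|V(G)|)$. Given the guess, one checks in polynomial time that the induced adjacencies are exactly $E(G)$ and that the guessed cyclic order refines the cyclic order of the endpoints of the predrawn arcs. It then remains to decide whether real coordinates realising this type exist, and the relevant conditions---the guessed order (weak inequalities between consecutive endpoints), the equations fixing the predrawn endpoints and, if prescribed, the circumference, the equations forcing all arc lengths to be equal, and the strict inequalities saying that precisely the pairs in $E(G)$ overlap---are all linear in the endpoint coordinates and one auxiliary length variable. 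Feasibility of such a system, and hence the existence of a rational solution of polynomial bit length, is decidable in polynomial time by linear programming, handling the strict inequalities via a common slack variable that is maximised. This gives $\RepExt(\ucar)\in\cNP$.

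\emph{Hardness.}
I would reduce from \textsc{3-Partition}, which is strongly \cNP-hard: given positive integers $a_1,\dots,a_{3m}$ with $\sum_i a_i=mB$ and $B/4<a_i<B/2$, decide whether the index set splits into $m$ triples of sum exactly~$B$. Working with a fixed unit length, build a circle of circumference $\Theta(\sum_i a_i)$ together with predrawn arcs covering all of it except for $m$ equal \emph{slots}, each of width slightly more than~$B$ and bounded on both sides by a solid \emph{wall} of mutually overlapping predrawn arcs. For every $i$ attach a \emph{block gadget} which, given the predrawn arcs, can only be placed inside a single slot, occupies width $\Theta(a_i)$ once placed, and is designed so that blocks sharing a slot use pairwise disjoint arcs; hence the blocks assigned to one slot have total weight at most~$B$. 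Since there are $3m$ blocks of total weight~$mB$ and every weight lies strictly between $B/4$ and $B/2$, any feasible assignment places exactly three blocks of combined weight exactly~$B$ in each slot, so the constructed instance of $\RepExt(\ucar)$ is a yes-instance precisely when the \textsc{3-Partition} instance is. Together with membership, this yields \cNP-completeness.

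\emph{Main obstacle.}
The crux is soundness: showing that \emph{every} extending unit representation respects the slot structure. The rigidity of unit arcs---all arcs have exactly the same length, so widths add predictably---is the key tool, but also what makes the analysis delicate, since one must rule out placements in which a block straddles a wall, slips between two slots by exploiting the cyclic topology, or collapses because its arcs overlap more than intended. These are prevented by making the walls long enough that no free arc can cross one, choosing the circumference so that the densely predrawn part of the circle leaves no ``shortcut'', and designing the block gadget (for instance from a suitably rigid unit interval graph together with a few predrawn anchor arcs placed inside each slot) so that its consumed width is pinned to $\Theta(a_i)$ and cannot be split across slots. Once the slot invariant is in place, the counting argument forcing three blocks of total weight~$B$ per slot is routine; getting the gadget and frame so that this invariant provably holds for all extensions is the part that needs care.
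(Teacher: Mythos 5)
Your membership argument is essentially the paper's: guess the combinatorial order of the endpoints and decide realizability by linear programming (the paper uses one variable per tail, with the certificate being the cyclic order of intervals plus a set of ``long'' edges; your endpoint-plus-length formulation is equivalent), so that part is fine. The hardness part, however, has a genuine gap, and it is exactly the one you flag yourself at the end: the block gadget is never constructed, and its soundness is deferred. The properties you do state are also too weak to support the counting you invoke: a block that ``occupies width $\Theta(a_i)$'' does not force each slot of width slightly more than $B$ to receive total weight at most $B$; for that you need the consumed width to be pinned between $a_i$ and $a_i+\delta$ for arbitrarily small $\delta$, i.e.\ a lower bound of essentially exactly $a_i$ and a matching achievable upper bound. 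Moreover, your suggestion to stabilize the gadget with ``predrawn anchor arcs placed inside each slot'' cannot work as stated: any predrawn arc belonging to a specific block would predetermine which slot that block goes to, destroying the reduction, so the gadgets must consist entirely of non-predrawn vertices, which is precisely what makes the rigidity question nontrivial.

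The paper closes this gap with a much simpler device that needs no walls, anchors, or rigidity analysis. The gadget for $s_i$ is just a path $P_{2s_i}$, an entire connected component of $G$ not adjacent to anything predrawn: since the union of the arcs of a connected component is a connected subset of the circle and must avoid the predrawn blocker arcs, each path automatically lies inside a single gap (your ``straddling a wall'' and ``shortcut'' worries vanish). The width accounting comes from an independent-set bound: $P_{2\ell}$ contains an independent set of $\ell$ vertices, so any unit representation of it spans strictly more than $\ell$ units, while it can be drawn within $\ell+\eps$ for every $\eps>0$. With $n$ predrawn unit arcs cutting the circle into $n$ gaps of exactly $t+1$ units and integer weights $s_i$, this pins the per-gap total to exactly $t$, and the size bounds $t/4<s_i<t/2$ force three paths per gap, giving the 3-partition. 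So your skeleton (3-Partition, slots carved out by predrawn arcs) matches the paper, but the load-bearing step --- a concrete gadget whose consumed width provably equals its weight up to $\eps$ in every extension --- is missing from your proposal, and without it the reduction is not established.
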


The \cNP-hardness of Theorems~\ref{thm:hca_npc} and
\ref{thm:uca_npc} follows by a reduction from the $3$-$\Partition$
problem~\cite{garey1975complexity}.  For the unit case, the membership
in \cNP can be seen by a linear programming argument.

\section{Preliminaries}

\myparagraph{Cyclic order.}
Let $<\ = v_0,\dots,v_{n-1}$ and $<'\ = u_0,\dots,u_{n-1}$ be two linear orders
on a finite set $S$.
We say that $<$ and $<'$ are \emph{cyclically equivalent} if there is
$k\in\{0,\dots,n-1\}$ such that $v_i = u_{i+k}$, where the addition is modulo
$n$.
Clearly, this is an equivalence relation on the set of all linear orders on $S$.
A \emph{cyclic order} $\lhd$ on $S$ is an equivalence class of this relation.
For a linear order $<$, we denote the corresponding cyclic ordering by $[<]$.

Every linear order $<$ on $S$ induces a linear order $<'$ on a subset
$S'\subseteq S$ by omitting all ordered pairs in which the elements of
$S\setminus S'$ occur.
In this case we say that $<$ \emph{extends} $<'$ and similarly that the cyclic
order $[<]$ \emph{extends} $[<']$.

\myparagraph{Circular-arc representations.}
For any circular-arc representation $\calR$ and each connected component $C$ 
of a graph $G$ the set $\bigcup_{v \in V(C)}R(v)$ is a connected subset of the circle.
Therefore, if $G$ is a disconnected circular-arc graph, then each connected component of
$G$ has to be an interval graph. These cases can be treated with the
correpsonding algorithms for interval graphs
of~\cite{klavik2017extending,KlavikKOSV17}. Hence without loss of generality we
restrict ourselves to connected graphs in this paper.

Let $\calR$ be a representation of a circular-arc graph $G$.
For a vertex $v$ of $G$, we call the \emph{tail} $R(v)_t$ and the \emph{head}
$R(v)_h$ the two endpoints of $R(v)$. We use the convention of traversing the
arc from the tail to the head in the clockwise direction along the circle. 
We denote such an arc as $R(v)=[R(v)_t,R(v)_h]$, and its complement
$(R(v)_h,R(v)_t)$ as $R(v)^c.$

Let $\calR$ be a Helly representation of a circular-arc graph $G$.
Denote by $\calC$ the set of maximal cliques of $G$.
We assign every maximal clique $C\in \calC$ a unique point
$\cp(C)\in\bigcap_{v\in C} R(v)$ and call it the \emph{clique-point} of $C$. 

\begin{lemma}[Gavril~\cite{gavril1974intersection}]
\label{lem:gavril_hca}
A graph $G$ is a Helly circular-arc graph if and only if there exists a cyclic
ordering $\lhd$ of its maximal cliques such that for every vertex $v$,
the maximal cliques containing $v$ appear consecutively in $\lhd$.
\end{lemma}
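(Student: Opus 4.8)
The plan is to prove both implications directly, transcribing to the circle the classical Fulkerson--Gross argument relating interval representations to consecutive orderings of maximal cliques.

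For the direction from a representation to a clique ordering, I would start from a Helly circular-arc representation $\calR$ of $G$ and use the clique-points $\cp(C)$, $C\in\calC$, introduced above. The first step is to observe that for distinct maximal cliques $C\neq C'$ the sets $\bigcap_{v\in C}R(v)$ and $\bigcap_{v\in C'}R(v)$ are disjoint: if a point $p$ lay in both, then the set of vertices whose arc contains $p$ would be a clique containing $C\cup C'$, contradicting maximality. In particular the clique-points are pairwise distinct, so reading them clockwise around the circle yields a well-defined cyclic ordering $\lhd$ of $\calC$. The key claim is then that a maximal clique $C$ contains a vertex $v$ if and only if $\cp(C)\in R(v)$: one direction is immediate since $\cp(C)\in\bigcap_{u\in C}R(u)\subseteq R(v)$, and for the other, $\cp(C)\in R(v)$ makes all arcs of $C\cup\{v\}$ meet at $\cp(C)$, so $C\cup\{v\}$ is a clique and $v\in C$ by maximality. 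Since the clique-points lying on a single arc $R(v)=[R(v)_t,R(v)_h]$ form a consecutive block of $\lhd$, the cliques containing $v$ appear consecutively in $\lhd$ (if $R(v)$ were the whole circle, then $v$ lies in every maximal clique and consecutivity is vacuous).

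For the converse, I would take a cyclic ordering $C_0,\dots,C_{m-1}$ with the stated consecutivity property, place points $p_0,\dots,p_{m-1}$ clockwise on a circle with $p_i$ associated to $C_i$, and for each vertex $v$ let $R(v)$ be the shortest closed arc containing $\{p_i : v\in C_i\}$; by hypothesis this set is a consecutive block, so $R(v)$ has both endpoints at clique-points (with the convention that $R(v)$ is a fixed almost-complete arc through all the $p_i$ when $v$ lies in every maximal clique). Two things then need to be checked. First, $\calR=\{R(v) : v\in V(G)\}$ represents $G$: every edge lies in some maximal clique, so $uv\in E(G)$ iff $u$ and $v$ share a maximal clique $C_i$, in which case $p_i\in R(u)\cap R(v)$; conversely, if the clique sets of $u$ and $v$ are disjoint consecutive blocks, the two tight arcs contain no common clique-point and, not extending past their endpoint clique-points, no common interior point either, so they are disjoint. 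Second, $\calR$ is Helly: if $R(v_1),\dots,R(v_k)$ pairwise intersect then $\{v_1,\dots,v_k\}$ is a clique, hence contained in some maximal clique $C_\ell$, and then $p_\ell$ lies in every $R(v_j)$.

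I expect the only delicate point to be the bookkeeping at arc endpoints together with the degenerate cases: vertices contained in every maximal clique, whose naive arc would be the entire circle, and the verification that two tight arcs over disjoint but possibly adjacent blocks of clique-points are genuinely disjoint as subsets of the circle (they share neither an endpoint clique-point nor an open gap, precisely because each arc stops at its endpoint clique-points). Once these conventions are fixed, the remainder is a routine circular analogue of the interval case.
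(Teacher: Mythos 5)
Your proof is correct. Note that the paper itself gives no proof of this lemma -- it is stated with a citation to Gavril -- so there is nothing in the text to match step by step; the closest material is the remark following the lemma (distribute clique points according to $\lhd$ and let $R(v)$ cover exactly the clique points of the cliques containing $v$) and the proof of Lemma~\ref{lem:nhcaG}, which runs the same two-way argument with the additional normality/universal-pair conditions. Your converse construction coincides with that sketch, and your forward direction is the standard one: Helly gives a clique point for each maximal clique, maximality forces the intersections $\bigcap_{v\in C}R(v)$ of distinct maximal cliques to be disjoint (so the clique points are distinct and induce a cyclic order), and the equivalence $v\in C\iff \cp(C)\in R(v)$ turns each arc into a consecutive block of clique points. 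The delicate points you flag are genuinely the only ones, and you resolve them correctly: a vertex lying in all maximal cliques is universal and can be given an almost-complete arc (or its consecutivity constraint read as vacuous in the forward direction), and two tight arcs over disjoint consecutive blocks are disjoint because each stops at the extreme clique points of its own block, so neither contains a clique point of the other nor any point of the open gaps separating the blocks. The Helly verification via a maximal clique containing the pairwise-adjacent vertices is also exactly right.
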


Note that if we distribute clique points on the circle according to a cyclic
ordering $\lhd$ of Lemma~\ref{lem:gavril_hca}, then a representation $\calR$ of
$G$ can be obtained by choosing for each vertex $v$ an arc $R(v)$ that covers
exactly the clique-points $v$ belongs to.

\myparagraph{PC-Trees and the Reordering Problem.}
A \emph{PC-tree} $T$ on a set $L$ of leaves is a tree whose inner
nodes have one of two types: \emph{P-nodes} and \emph{C-nodes}.  The
neighbors around a P-node can be permuted arbitrarily, whereas the order of the
neighbors of a C-node is fixed up to reversal.  In this way a PC-tree
represents a set of cyclic orderings of its leaf set $L$.  The
usefulness of PC-trees derives from the fact that they can represent
cyclic orderings subject to consecutivity constraints.  Namely, given
a set $L$ and sets $X_1,\dots,X_r \subseteq L$, a PC-tree that
represents precisely those cyclic orderings of $L$ where each of the subsets
$X_1,\dots,X_r$ is consecutive can be computed in
$\calO(|L| + \sum_{i=1}^r |X_i|)$ time~\cite{hsu2003pc}.
In our setting, in the spirit of Lemma~\ref{lem:gavril_hca}, the leaf set $L$
will always be the set of all maximal cliques of a Helly circular-arc graph.

We make use of the following reordering problem, which can be solved
analogously to the topological sorting of PQ-trees~\cite{klavik2017extending}.
 The input consists of a PC-tree $T$, a leaf $u$ of $T$, and a partial ordering
 (not cyclic) $<$ of the remaining leaves $L' = L\setminus\{u\}$ of $T$. The
 question is whether $T$ represents a cyclic order $<'$ which induces a linear
 extension of $<$ on $L'$. 
 If this is the case, $T$ is called \emph{compatible with $<$ with respect to $u$}.  
We denote an instance of this problem by $\reorder(T,u,<)$.

\begin{restatable}{lemma}{reorderingLinear} 
  \label{lem:reordering_linear}
An instance $\reorder(T, u, <)$ can be solved in time $\calO(\ell + c)$, where
$\ell$ is the number of leaves of $T$ and $c$ is the number of comparable pairs
in the partial ordering $<$.
\end{restatable}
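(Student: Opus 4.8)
The plan is to reduce the cyclic reordering problem to an essentially linear one by rooting $T$ at the distinguished leaf $u$, and then to adapt the topological-sorting procedure for PQ-trees of Klav\'ik et al.~\cite{klavik2017extending}. The first step is the observation that rooting a PC-tree at a leaf turns it into a PQ-tree: designating $u$ as the root gives every other node a unique parent, each P-node remains a P-node (its children are still freely permutable) and each C-node becomes a Q-node (with the parent pinned, the cyclic neighbour order fixed up to reversal induces a linear child order fixed up to reversal). Call the resulting PQ-tree $T_u$ on leaf set $L' = L\setminus\{u\}$. Its frontiers (left-to-right leaf orders) are exactly the cyclic orders represented by $T$ cut open immediately after $u$ and restricted to $L'$. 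Hence $\reorder(T,u,<)$ is a yes-instance if and only if some frontier of $T_u$ is a linear extension of $<$, and it suffices to decide the latter.

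The second step is a local characterisation. In every frontier of $T_u$ the leaves below any node $x$ form a contiguous block, and if $t_1,\dots,t_k$ are the children of $x$ then the blocks below $t_1,\dots,t_k$ appear in an order that is an arbitrary permutation of $t_1,\dots,t_k$ if $x$ is a P-node, and either $t_1,\dots,t_k$ or its reversal if $x$ is a Q-node. Consequently a constraint $a<b$ with $x=\mathrm{LCA}(a,b)$, where $a$ lies below the child $t$ of $x$ and $b$ below the child $t'$ of $x$, forces the block of $t$ to precede the block of $t'$. Let $\prec_x$ be the binary relation on the children of $x$ collecting all such forced precedences. I claim that $T_u$ has a frontier extending $<$ iff (i) for every P-node $x$ the relation $\prec_x$ is acyclic, and (ii) for every Q-node $x$ with child order $t_1,\dots,t_k$, either every pair $t_i\prec_x t_j$ has $i<j$, or every such pair has $i>j$. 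Necessity is immediate from the structural fact; for sufficiency one picks at each P-node a linear extension of $\prec_x$ and at each Q-node the orientation permitted by (ii), orders children accordingly, and reads off the resulting frontier $\sigma$ — since every constraint $a<b$ is honoured at $\mathrm{LCA}(a,b)$ by construction, and constraints within a subtree are handled recursively at their own (deeper) LCA, $\sigma$ extends $<$.

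For the running time, rooting $T$ at $u$ costs $\calO(\ell)$. Standard linear-time preprocessing of $T_u$ then allows us, for any two leaves $a,b$, to return in $\calO(1)$ time their lowest common ancestor $x$ together with the children of $x$ lying above $a$ and above $b$ (lowest-common-ancestor plus level-ancestor queries). Processing the $c$ comparable pairs this way builds all the relations $\prec_x$ in $\calO(\ell+c)$ time; their total size is $\calO(c)$ and the children of all P-nodes total $\calO(\ell)$. Check (i) by a topological sort of $\prec_x$ at each P-node and (ii) by a single scan of $\prec_x$ at each Q-node; summed over all nodes this is $\calO(\ell+c)$. If all checks pass we answer yes (and the extending order can be emitted within the same bound via the construction above), otherwise no.

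The main obstacle is the sufficiency direction of the local characterisation: one must verify that child-orderings chosen independently at the P-nodes and orientations chosen independently at the Q-nodes glue into a single frontier that extends $<$, which rests on the fact that each constraint is charged to exactly one node — its LCA. The only other point requiring care is obtaining, for each constraint, not merely the LCA but the correct child of the LCA in constant time; this is standard level-ancestor bookkeeping but must be spelled out to justify the $\calO(\ell+c)$ bound.
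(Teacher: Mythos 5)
Your proof is correct and follows essentially the same route as the paper's: root $T$ at $u$, localize each constraint $a<b$ to a precedence between the two children of the lowest common ancestor, then topologically sort the children of each P-node and test the two orientations of each C-node. The only differences are implementation details---you gather the child precedences via LCA/level-ancestor preprocessing where the paper contracts finished subtrees in a constraint digraph bottom-up---and your explicit acyclicity check at P-nodes (rejecting when the induced relation $\prec_x$ is cyclic) is in fact slightly more careful than the paper's remark that the induced subdigraph is acyclic because $<$ is a partial order, which need not hold and is precisely a situation where one must reject.
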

For a detailed proof see Section~\ref{sec:PQtree}.
\newcommand{\reorderingLinearProof}{
\begin{proof}
We root $T$ by $u$. We represent the ordering $<$ by a digraph $D$ having $c$
edges. The algorithm reorders the nodes from the bottom to the root and modifies
$D$ by contractions. Once we finish reordering a subtree, it is never modified
later.  After reordering a subtree, the corresponding vertices in $D$ are
contracted. We process a node of $T$ when all its subtrees are finished and the
corresponding digraphs are contracted to single vertices. Note that since the
tree $T$ is now rooted, the equivalent transformations (i) and (ii) correspond
to permuting the children of a P-node and reversing the children of
a C-node.

For a P-node, we then permute its children according to any topological sort of
the subdigraph $D'$ induced by the vertices corresponding to the children of the
P-node. Note that $D'$ is acyclic since $<$ is a partial ordering. For a C-node,
there are two possible orderings of its children and we check whether one of
them is feasible. The resulting PC-tree $T'$ is compatible with $<$.
\qqed
\end{proof}}


\section{Complexity}

\begin{figure}[tb]
  \centering
  \includegraphics[page=2]{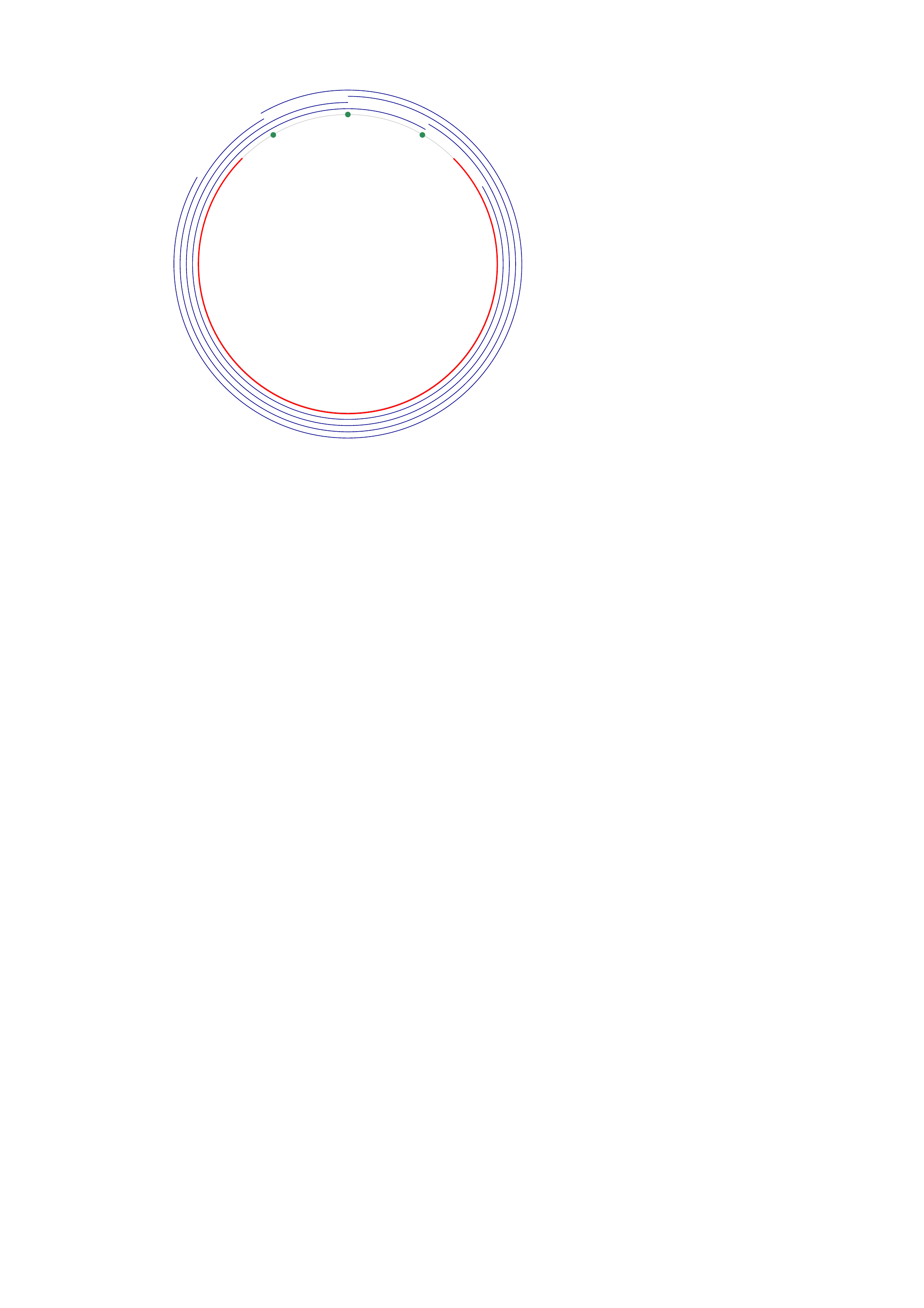}
  \caption{Illustration of the proof of Theorem~\ref{thm:hca_npc}. (a) predrawn
    universal vertices (b) the instance of $\RepExt(\car)$ obtained from
    $\{1,1,2,2,2,3,3,4\}$ of  {\sc 3-Partition}. For clarity we have omitted 24
    predrawn universal vertices and instead indicated their enpoints.   
    (c) the corresponding solution (d) variant without shared endpoints.
  }
  \label{fig:hca_npc}
\end{figure}

\begin{proof}[Sketch of proof for Theorem~\ref{thm:hca_npc}]
We first reduce the strongly \cNP-complete
problem {\sc 3-Partition}~\cite{garey1975complexity} to $\RepExt(\hcar)$.
Figure~\ref{fig:hca_npc} illustrates the proof.  Figure~\ref{fig:hca_npc}a
shows a representation with four universal vertices (blue).  The green dots
indicate the endpoints that are shared by these arcs.  The red \emph{blocker
vertex} covers all other points of the circle shared by all universal vertices.
The key insight is that each vertex that is not adjacent to the blocker vertex
must not be contained in the complement of a universal vertex and thus contain
at least one of the green points.
Figure~\ref{fig:hca_npc}b shows an instance resulting from our reduction, and
Figure~\ref{fig:hca_npc}c shows a corresponding solution from the instance
$\{1,1,2,2,2,3,3,4\}$.

For $\RepExt(\car)$, the same construction works. However, we can avoid 
shared endpoints in the partial representation with a simple modification. 
Namely, we slightly shorten the arc for each universal vertex; see
Figure~\ref{fig:hca_npc}d. 
Then we have for each former green dot a green area between the corresponding
arc ends. Each vertex that is not adjacent to the blocker vertex must now contain 
a green area. 
Note that in every solution  each leaf of a star in $G$ must contain exactly
one green area. It thus violates the Helly property with the two universal arcs
ending there.
\end{proof}

For details see Section~\ref{ssec:CARnp}.

\section{Normal Proper Helly Circular-Arc Graphs}

We show how to extend partial representations of normal proper Helly
circular-arc graphs in linear time.  To do this, we give a characterization of
all partial representations that are extendable.  This generalizes the
characterization of the extendable partial representations of proper interval
graphs~\cite[Lemma 2.4]{klavik2017extending}.
We first simplify the possible instances by reducing the number of vertices
with the same neighborhood as follows.

\myparagraph{Twin vertices.}
Recall that vertices $u,v\in V(G)$ are called twins if $N[u] = N[v]$.  It is
possible to find the equivalence classes of twin vertices in linear
time~\cite{rose1976algorithmic}.  If vertices $u$ and $v$ are twins and if
either $u$ is not predrawn or if both $u$ and $v$ are predrawn with the same
arc, then we may remove $u$ and in the final representation we can set $R(u) =
R(v)$.  This allows us to assume that each twin class either consists of a
single vertex, which is not predrawn, or it consists only of predrawn
vertices that are represented by distinct arcs.

\myparagraph{Consecutive orderings of vertices.}
Let $\lhd = [v_0, \dots, v_{n-1}]$ be a cyclic ordering of the vertices of some
graph $G$. 
We say that $\lhd$ is \emph{consecutive} if the closed neighborhood $N[v]$ of
each $v\in V(G)$ is consecutive in $\lhd$.
Note that $N[v_i]$ is consecutive in $\lhd$ if there exist positive integers
$a$ and $b$ such that $a + b \leq n$ and \[N[v_i] = \{v_{i-a}, \dots, v_{i-1},
v_i, v_{i+1}, \dots, v_{i+b}\},\] where the addition is performed modulo $n$.

Roberts characterized connected proper interval graphs in terms of consecutive linear
orderings~\cite{roberts1968representations} that are unique up to permuting
twin vertices and a complete reversal.  The following lemma, proved in slightly
different terminology by Deng et al. as Corollaries 2.7 and 2.9.
in~\cite{deng1996linear} give a characterization of proper circular-arc
graphs by cyclic ordering:

\begin{lemma}\label{lem:pca_charac}
  A graph $G$ belongs to $\pca$ if and only if $V(G)$ has a consecutive cyclic
  ordering.
\end{lemma}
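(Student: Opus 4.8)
The plan is to prove Lemma~\ref{lem:pca_charac} by establishing both implications, translating between a proper circular-arc representation and a consecutive cyclic ordering of the vertices. I will treat the graph $G$ as connected (by the reduction to interval graphs for disconnected instances discussed in the preliminaries).

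\textbf{From a proper representation to a consecutive cyclic ordering.}
Suppose $G$ has a proper circular-arc representation $\calR$; by Tucker's result cited above we may assume $\calR$ is normal, so the intersection of any two arcs is empty or a single connected sub-arc. First I would argue that in a proper representation one can perturb the arcs so that all $2n$ endpoints are distinct (proper-ness is an open condition, preserved under small perturbations) without changing the intersection graph. Since no arc properly contains another and the arcs are in general position, ordering the vertices $v_0, v_1, \dots, v_{n-1}$ by the clockwise cyclic order of their \emph{tails} $R(v_i)_t$ gives a well-defined cyclic ordering $\lhd$; because no arc contains another, this cyclic order of tails coincides with the cyclic order of heads. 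The key claim is that $N[v_i]$ is consecutive in $\lhd$: an arc $R(v_j)$ intersects $R(v_i)$ exactly when $R(v_j)_t$ lies inside $R(v_i)$ or $R(v_i)_t$ lies inside $R(v_j)$, and by the no-containment property the set of tails lying in $R(v_i)$ together with the tails of arcs whose own span swallows $R(v_i)_t$ forms a contiguous clockwise block of tails flanking $R(v_i)_t$ on both sides. Formally I would fix $v_i$ and show: if $v_j$ and $v_k$ are both neighbors of $v_i$ and $v_\ell$ lies between them in $\lhd$ on the short side through $v_i$, then $R(v_\ell)$ also meets $R(v_i)$ — this follows since $R(v_\ell)_t$ is between $R(v_j)_t$ and $R(v_k)_t$ while all of $R(v_j), R(v_k), R(v_i)$ pairwise interact, and one checks the span of $R(v_\ell)$ cannot escape $R(v_i)$ without some arc containing another. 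This is the main obstacle: carefully handling the wrap-around and the possibility that $R(v_i)^c$ is short, i.e.\ that $R(v_i)$ is nearly the whole circle; one has to split into the case where $V(G) = N[v_i]$ (then consecutivity is trivial) and the case where some vertex is non-adjacent to $v_i$, pinning down the block of non-neighbors as a contiguous arc of the cyclic order.

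\textbf{From a consecutive cyclic ordering to a proper representation.}
Conversely, given a consecutive cyclic ordering $\lhd = [v_0, \dots, v_{n-1}]$, I would place $2n$ points clockwise on a circle, assigning $v_i$ a tail point $t_i$ and a head point $h_i$ with the global clockwise order being $t_0, h_{?}, t_1, \dots$; concretely, put $t_i$ at angle $2\pi i/n$ and then, using $N[v_i] = \{v_{i-a_i}, \dots, v_{i+b_i}\}$ from the definition of consecutive, set $R(v_i) = [t_i, h_i]$ where $h_i$ is placed just clockwise past $t_{i+b_i}$ but before $t_{i+b_i+1}$ (with a small $\epsilon$-shift so that no two heads or a head and a tail coincide, and so that all arcs turn out proper). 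Then $R(v_i) \cap R(v_j) \neq \emptyset$ iff $v_j \in N[v_i]$: if $|j - i|$ (cyclically) $\le b_i$ or $\le a_j$ the arcs overlap near $t_i$ or $t_j$; if $v_j \notin N[v_i]$ then, using that $N[v]$ is consecutive for \emph{every} $v$ (so in particular $v_i \notin N[v_j]$, i.e.\ $a_i, b_i, a_j, b_j$ are mutually consistent), the two arcs are disjoint. I would also verify no containment: $R(v_i) \subseteq R(v_j)$ would force $t_i, h_i$ both strictly inside $R(v_j)$, but $t_i$ being inside $R(v_j)$ already means $v_i \in N[v_j]$, and the symmetric placement of heads relative to the consecutive neighborhoods rules out strict containment — this is where the $\epsilon$-perturbation and the symmetry $v_i \in N[v_j] \iff v_j \in N[v_i]$ do the work. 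Finally I would remark that by construction all arc intersections are connected, so the representation is in fact normal proper, which is consistent with $\pca = \npca$.

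\textbf{Summary of structure.}
So the proof is: (1) perturb a proper representation to general position; (2) read off the cyclic order of tails and prove neighborhoods are consecutive via a no-containment argument, splitting on whether $v_i$ is universal; (3) conversely build arcs from the consecutive ordering with careful $\epsilon$-placement of heads and verify the intersection pattern and properness. I expect step (2) — proving consecutivity, especially across the wrap-around and in the near-universal case — to be the technical heart, and I would lean on Deng et al.'s framework (their Corollaries~2.7 and~2.9) to keep the case analysis under control.
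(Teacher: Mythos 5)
Your forward direction (from a proper representation, after perturbing to distinct endpoints, read off the cyclic order of tails and show each closed neighborhood splits into a block of tails inside $R(v_i)$ just after $v_i$ and a block of arcs covering $R(v_i)_t$ just before $v_i$) is sound in outline. Note, however, that the paper does not prove this lemma at all: it is quoted from Deng et al.\ (their Corollaries~2.7 and~2.9), so you are supplying a proof of a cited result, and it is exactly your backward direction that carries the real content of that citation.

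The backward direction has a genuine gap. Your construction realizes the edge $v_iv_j$ only if $v_j$ lies in the \emph{clockwise} part $\{v_{i+1},\dots,v_{i+b_i}\}$ of $N[v_i]$ or $v_i$ lies in the clockwise part of $N[v_j]$; consecutivity of all closed neighborhoods does not guarantee that one of these holds for every edge. Concretely, take the $4$-cycle with edges $v_0v_1$, $v_0v_2$, $v_1v_3$, $v_2v_3$ and the cyclic ordering $\lhd=[v_0,v_1,v_2,v_3]$: all of $N[v_0]=\{v_0,v_1,v_2\}$, $N[v_1]=\{v_3,v_0,v_1\}$, $N[v_2]=\{v_2,v_3,v_0\}$, $N[v_3]=\{v_1,v_2,v_3\}$ are consecutive, yet $b_1=b_3=0$, so your arcs $R(v_1)$ and $R(v_3)$ cover only their own tails and come out disjoint although $v_1v_3$ is an edge (moreover $R(v_1)\subsetneq R(v_0)$, so properness fails as well). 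Worse, no circular-arc representation whatsoever has its tails in this cyclic order: the non-edges $v_1v_2$ and $v_0v_3$ force $t_3$ and $t_0$ into the arc between $t_2$ and $t_1$, and then $R(v_3)$ can only reach $R(v_1)$ by passing over $t_0$. So the statement "every consecutive cyclic ordering is the tail order of some proper representation", on which your construction rests, is simply false; any correct proof must either produce a \emph{different}, realizable consecutive ordering (this is in essence what Deng et al.'s round-orientation machinery does, and it is why the paper's accompanying remark about non-uniqueness of consecutive orderings for graphs with disconnected bipartite complement, such as $\overline{2K_2}=C_4$, matters) or build the arcs in a more global way. Relatedly, universal vertices break your head-placement rule even for "good" orderings (the choice of $b_i$ is ambiguous and the natural choices create containments or lose adjacencies); the paper's own analogous construction, in the proof of Lemma~\ref{lem:nphca_char}, avoids both problems only because its ordering comes from Lemma~\ref{lem:nphca_unique} and because universal vertices are handled by a separate insertion rule.
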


They also showed that this ordering is unique up to permuting twin vertices and
a complete reversal only in the case when the complement is connected or
nonbipartite. For example, the complement of a perfect matching
$K_{2,2,\dots,2}$ on at least 6 vertices allows several distinct consecutive
orderings, like $[1,2,3,4,5,6]$ and $[1,3,2,4,6,5]$ for the graph depicted in
Fig.~\ref{fig:cK222}a. 
For normal proper Helly circular-arc graphs, the following strengthening of
Lemma~\ref{lem:pca_charac} follows from the results summarized in Lin et
al.~\cite{lin2013normal}:

\begin{lemma}\label{lem:nphca_unique}
Every $\nphca$ graph $G$ has a \emph{unique} consecutive cyclic ordering of
$V(G)$, up to permuting twins and a reversal. Such an ordering can be obtained
in  $O(m+n)$ time, where $n$ and $m$ stand for the number of vertices and
edges, respectively.
\end{lemma}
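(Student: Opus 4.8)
The statement has two parts: existence and uniqueness (up to twins and reversal) of a consecutive cyclic ordering of $V(G)$ for every $\nphca$ graph, and that one such ordering can be computed in $O(m+n)$ time. Existence is immediate: since $\nphca \subseteq \pca$, Lemma~\ref{lem:pca_charac} already supplies a consecutive cyclic ordering, so the content is entirely in the uniqueness claim and in the algorithmic bound. The plan is to deduce uniqueness from the known structure of proper circular-arc graphs, using the fact — recalled just after Lemma~\ref{lem:pca_charac} — that the consecutive cyclic ordering of a proper circular-arc graph is unique up to twins and reversal \emph{whenever the complement $\overline{G}$ is connected or nonbipartite} (Deng et al.~\cite{deng1996linear}). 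So the proof reduces to showing that the only $\nphca$ graphs whose complement is both disconnected and bipartite are handled separately and still have a unique consecutive cyclic ordering, or are excluded.

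Concretely, I would argue as follows. If $\overline{G}$ is connected or nonbipartite, we are done by the cited result. Otherwise $\overline{G}$ is disconnected and bipartite; a graph whose complement is disconnected is a join $G = G_1 + G_2$, and $\overline{G}$ bipartite means $G$ has a partition into two cliques — equivalently $G$ is a co-bipartite graph that is also a join. The potentially problematic examples are the graphs $K_{2,2,\dots,2} = \overline{tK_2}$ (complement of a perfect matching), which indeed admit several consecutive cyclic orderings, as the excerpt notes with $[1,2,3,4,5,6]$ versus $[1,3,2,4,6,5]$. The key point to extract from Lin et al.~\cite{lin2013normal} is that such graphs (for $t \ge 3$, i.e.\ on at least $6$ vertices) are \emph{not} normal Helly: $\overline{3K_2}$ already fails to be $\nhca$ for the same reason $W_4$ does (it has too many large cliques forcing a non-normal or non-Helly representation). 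Thus within the class $\nphca$, the troublesome join graphs do not occur (beyond trivially small cases like $K_n$, where twins absorb all the freedom), and uniqueness up to twins and reversal holds across the whole class. I would phrase this as: every $\nphca$ graph either has connected or nonbipartite complement, or is a complete graph / a graph in which the entire vertex set forms a single twin class, and in all cases the ordering is unique up to twins and reversal.

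For the running time, I would invoke that a normal proper Helly representation — equivalently a consecutive cyclic ordering — can be produced in $O(m+n)$ time by the recognition machinery for $\phca = \nphca$ summarized in Lin et al.~\cite{lin2013normal} (building on the linear-time proper circular-arc recognition of Deng et al.~\cite{deng1996linear}), together with the linear-time twin-class computation of~\cite{rose1976algorithmic} already cited in the twin-vertex discussion. The main obstacle I anticipate is not any single hard computation but rather the bookkeeping of exactly which small or degenerate graphs form the exceptional cases in Deng et al.'s uniqueness statement, and verifying that each such exception is either outside $\nphca$ or has its ambiguity fully explained by twin classes; this case analysis is where care is needed, and it is exactly the kind of thing that the "results summarized in Lin et al." are meant to package cleanly.
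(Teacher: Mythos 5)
Your overall strategy (existence from Lemma~\ref{lem:pca_charac}, uniqueness from the Deng et al.~criterion, then disposing of the exceptional case) differs from the paper, which gives no self-contained proof at all: Lemma~\ref{lem:nphca_unique} is there attributed wholesale to the results summarized in Lin et al.~\cite{lin2013normal}. Attempting a derivation from \cite{deng1996linear} is legitimate, but your key step is wrong. The exceptional case for the quoted uniqueness criterion is ``$\overline{G}$ disconnected \emph{and} bipartite'', and you claim that within $\nphca$ this case contains only complete graphs, single-twin-class graphs, or graphs like $\overline{tK_2}$ ($t\ge 3$) that you exclude from the class. That dichotomy is false: $C_4$ has complement $2K_2$ (disconnected, bipartite), is twin-free, not complete, and is clearly in $\nphca$ (four equal-length arcs, pairwise connected intersections, no three pairwise intersecting arcs). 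The diamond $K_4-e$ (complement $K_2\cup 2K_1$) and, more generally, any co-bipartite proper interval graph with a universal vertex --- e.g.\ $P_k$ joined with two universal vertices --- give arbitrarily large $\nphca$ graphs with disconnected bipartite complement and only trivial twin classes. For this infinite family the criterion of \cite{deng1996linear}, as quoted, yields no uniqueness statement, and your sketch offers no replacement argument; so the heart of the lemma (uniqueness precisely in these ``join/co-bipartite'' situations, which is what the citation to \cite{lin2013normal} is covering in the paper) is left unproved. Note also that these examples are exactly where the paper's definition of ``consecutive'' must be read carefully (whether each vertex's two cyclic neighbours are required to lie in its closed neighbourhood); under the lax reading even the diamond admits inequivalent consecutive cyclic orderings, so the uniqueness argument for the exceptional family cannot be waved through.

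Two smaller points. First, your justification that $\overline{3K_2}$ is excluded ``for the same reason $W_4$'' fails $\nhca$ is off: $W_4$ is separately $\nca$ and $\hca$, while $\overline{3K_2}$ is not even a Helly circular-arc graph (its eight maximal cliques admit no cyclic ordering in which each vertex's four cliques are consecutive); the conclusion you need is nevertheless true. Second, the algorithmic claim ($O(m+n)$ via the recognition machinery of \cite{deng1996linear,lin2013normal} plus linear-time twin computation) is consistent with what the paper itself relies on and is not where the problem lies.
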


Note that each such consecutive cyclic ordering of vertices corresponds to the
cyclic order on tails of the arcs in some normal proper Helly representation.

\myparagraph{Characterization of extendable instances.}
Let $G \in \nphca$ and let $\calR'$ be a partial representation of $G$.
By the discussion above, we may assume that if $u$ and $v$ are twins, then both
are predrawn with $R(u)\neq R(v)$. 

A constraint specific to the representation extension of proper circular-arc
graphs is imposed by any pair of touching arcs, for which  the tail of one
coincides with the head of the other, i.e., $R'(v_i)_h = R'(v_j)_t$. In this
situation, it is impossible to place another tail or head between $R'(v_i)_h$
and $R'(v_j)_t$ -- in fact on this point of the circle -- as the resulting
representation would not be proper.

We use this fact as well as the uniqueness of the ordering of
Lemma~\ref{lem:nphca_unique} to characterize all instances that allow a
representation extension; see Section~\ref{sec:DetailsNPHCA}.

\begin{restatable}{lemma}{nphcaChar}\label{lem:nphca_char}
A partial normal proper Helly circular-arc representation $\calR'$ of a
connected graph $G$, where all twins are predrawn by distinct arcs, is
extendable if and only if $V(G)$ has a consecutive cyclic ordering $\lhd =
[v_0,\dots,v_{n-1}]$ satisfying:
\begin{compactenum}[(1)]
  \item \label{itm:nphca_prop1}
The ordering $\lhd$ extends the cyclic ordering $\lhd'$ of the predrawn
vertices corresponding to the clockwise cyclic ordering of the tails of their
predrawn arcs.
  \item \label{itm:nphca_prop2}
If $R'(v_i)$ and $R'(v_j)$, are distinct touching predrawn arcs such that
$R'(v_i)_h = R'(v_j)_t$, then
$N[v_i] = \{v_a, \dots,v_i,\dots,v_j\} \quad \text{and} \quad N[v_j] =
\{v_i,\dots,v_j,\dots,v_{b}\}$, for some non-negative integers $a, b$.
\end{compactenum}
\end{restatable}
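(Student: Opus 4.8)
The plan is to prove both implications using the structural results collected in Section~4, in particular the uniqueness statement of Lemma~\ref{lem:nphca_unique} together with the basic observation that a consecutive cyclic ordering of $V(G)$ corresponds exactly to the cyclic order of the tails of the arcs in a normal proper Helly representation. I would first establish the forward direction: assume $\calR'$ extends to a full NPHCA representation $\calR$, and let $\lhd$ be the cyclic ordering of $V(G)$ given by the clockwise order of the tails $R(v)_t$ in $\calR$. By Lemma~\ref{lem:nphca_unique} and the remark following it, $\lhd$ is a consecutive cyclic ordering (this is where normality and the proper-Helly property are used: in a normal proper Helly representation no arc's head ``wraps around'' relative to another, so the cliques containing $v$—equivalently the tails inside the arc of $v$—form a consecutive block). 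Property~(\ref{itm:nphca_prop1}) is then immediate, since restricting $\calR$ to the predrawn vertices gives exactly $\calR'$, so the clockwise tail order of $\calR$ restricted to predrawn vertices is the clockwise tail order of $\calR'$, i.e.\ $\lhd'$. For property~(\ref{itm:nphca_prop2}), I would argue that when $R'(v_i)_h = R'(v_j)_t$ in a proper representation, the only way no arc is improperly contained in another is that $v_j$'s tail is the first tail strictly after $v_i$'s head, so $v_i$ and $v_j$ are consecutive in $\lhd$ (with $v_i$ immediately before $v_j$) and moreover the arc of $v_i$ ends exactly at the tail of $v_j$; expressing which vertices fall inside $R'(v_i)$ and inside $R'(v_j)$ in terms of positions in $\lhd$ gives precisely the stated form of $N[v_i]$ and $N[v_j]$.

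For the converse, suppose $V(G)$ has a consecutive cyclic ordering $\lhd=[v_0,\dots,v_{n-1}]$ satisfying~(\ref{itm:nphca_prop1}) and~(\ref{itm:nphca_prop2}); I would construct a representation $\calR$ extending $\calR'$. Start from the ``canonical'' NPHCA representation associated with $\lhd$ (place $n$ tail-points on the circle in the cyclic order $\lhd$, and for each $v_i$ let $R(v_i)$ be the arc from the tail-point of $v_i$ clockwise to the tail-point of $v_{i+b_i}$, where $N[v_i]=\{v_{i-a_i},\dots,v_i,\dots,v_{i+b_i}\}$; this is the standard arc model for proper circular-arc graphs and is normal and Helly). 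Then I would modify this generic placement continuously to match $\calR'$: because $\lhd$ extends $\lhd'$ by~(\ref{itm:nphca_prop1}), the predrawn tails already appear in the correct cyclic order among all tails, so I can slide the tail-points (and heads) to the positions prescribed by $\calR'$ on the predrawn vertices, distributing the non-predrawn tail-points in the gaps. The delicate point is to simultaneously get the heads right: for a predrawn vertex $v_i$, the head $R'(v_i)_h$ must lie in the gap between the tail of $v_{i+b_i}$ and the tail of $v_{i+b_i+1}$; I need that this is consistent, and the one obstruction is a pair of touching predrawn arcs (a head coinciding with a tail), which is exactly the configuration controlled by property~(\ref{itm:nphca_prop2})—it guarantees $v_i$ and $v_j$ are adjacent in $\lhd$ and that the neighborhood structure forces $b_i$ and $a_j$ to be exactly what is needed for $R(v_i)$ to end at $R(v_j)_t$ without creating a proper containment. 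One then checks that the resulting $\calR$ is still normal, proper, and Helly; normality and Helly follow because $\calR$ realizes the same consecutive cyclic ordering $\lhd$, and properness is preserved precisely because we never place a tail or head on a point already occupied by a touching pair, which is the content of~(\ref{itm:nphca_prop2}).

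I expect the main obstacle to be the converse direction, and within it the bookkeeping of endpoints: turning the abstract consecutive ordering $\lhd$ into an actual geometric arc model whose restriction to predrawn vertices is \emph{exactly} $\calR'$ (not merely combinatorially equivalent to it), while keeping the representation proper in the presence of touching predrawn arcs. This requires a careful case analysis of how the predrawn endpoints partition the circle into free arcs, an argument that the non-predrawn tails and heads can always be inserted into these free arcs in the order dictated by $\lhd$, and a verification that property~(\ref{itm:nphca_prop2}) rules out exactly the bad coincidences. The forward direction, by contrast, is essentially a matter of reading off $\lhd$ from $\calR$ and invoking Lemma~\ref{lem:nphca_unique}, and should be comparatively short. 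I would also make explicit use of the twin-reduction assumption throughout, since it is what lets me treat each non-predrawn class as a single vertex that can be freely slid into a gap.
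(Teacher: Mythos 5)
Your plan follows essentially the same route as the paper: the forward direction reads off the cyclic order of the tails of an extending representation and verifies (1) and (2), and the converse constructs a representation realizing $\lhd$ by keeping the predrawn endpoints fixed and inserting the remaining tails and heads into the free gaps in the order dictated by $\lhd$ (the paper does this via an explicit auxiliary cyclic order $\prec$ on all $2n$ endpoint symbols, showing its restriction to predrawn endpoints agrees with $\calR'$, rather than by ``sliding'' a canonical model, but the construction and the role of condition (2) for touching arcs are the same). One small correction to your forward direction: from $R'(v_i)_h = R'(v_j)_t$ you cannot conclude that $v_i$ and $v_j$ are consecutive in $\lhd$, since other neighbors of $v_i$ may have their tails strictly between $R'(v_i)_t$ and $R'(v_j)_t$; what does follow, and is all that condition (2) asserts, is that $v_j$ is the last vertex of $N[v_i]$ and $v_i$ the first vertex of $N[v_j]$ in $\lhd$, exactly as in the final sentence of that step of your plan.
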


\newcommand{\nphcaCharProof}{
\begin{proof}
Suppose that there is a normal proper Helly representation $\calR$ extending
$\calR'$.  Let $\lhd$ be the cyclic ordering $[v_0,\dots,v_{n-1}]$ of $V(G)$
induced by the clockwise cyclic ordering $[t_0,\dots,t_{n-1}]$ of tails
$t_i=R(v_i)_t$.  For convenience, we also write $h_i$ for $R(v_i)_h$.
The ordering $\lhd$ extends $\lhd'$ since $\calR'$ is contained in $\calR$, so
the condition~(\ref{itm:nphca_prop1}) is satisfied.

To verify condition~(\ref{itm:nphca_prop2}), suppose that there are two
predrawn arcs $R(v_i)$ and $R(v_j)$ such that $h_i = t_j$.
If $v_i$ is a universal vertex (adjacent to all other vertices), then the
condition on $N[v_i]$ holds trivially.
Otherwise, since $N[v_i]\neq V(G)$ and $N[v_i]$ is consecutive, $\lhd$ induces
a linear order on $N[v_i]$, where the minimum and maximum elements $v_a$ and
$v_b$ correspond to the first and the last neighbor of $v_i$ in clockwise
direction.
Since $h_i = t_j$, it must be that $b = j$. An analogous argument holds for $v_j$.

For the opposite implication, let $\lhd = [v_0,\dots,v_{n-1}]$ be a consecutive
cyclic ordering of $V(G)$ satisfying all three conditions. 
Without loss of generality we may assume that $G$ is not a complete graph, as
otherwise the problem is trivial -- we pick any predrawn arc and replicate it
for every non-predrawn vertex.
To construct the representation $\calR = \{R(v_0),\dots,R(v_{n-1})\}$ of $G$,
we need to determine the position of the endpoints of $R(v_i)$, for every $v_i$
that is not predrawn. 

First, we construct a cyclic order $\prec$ on the auxiliary set
\[A = \{t_0,\dots,t_{n-1}, h_0,\dots,h_{n-1}\}.\]
We construct $\prec$ in several steps.
Initially, we set $\prec\ = [t_0,\dots,t_{n-1}]$.
Next, we insert $h_i$ into $\prec$, for $i = 0,\dots,n-1$, in the following way.
For each $i=0,\dots,n-1$ we distinguish two cases, depending on whether $v_i$
is universal or not, see Fig.~\ref{fig:nphca_order}:

\begin{figure}[t]
\centering
\includegraphics[scale=1,page=1]{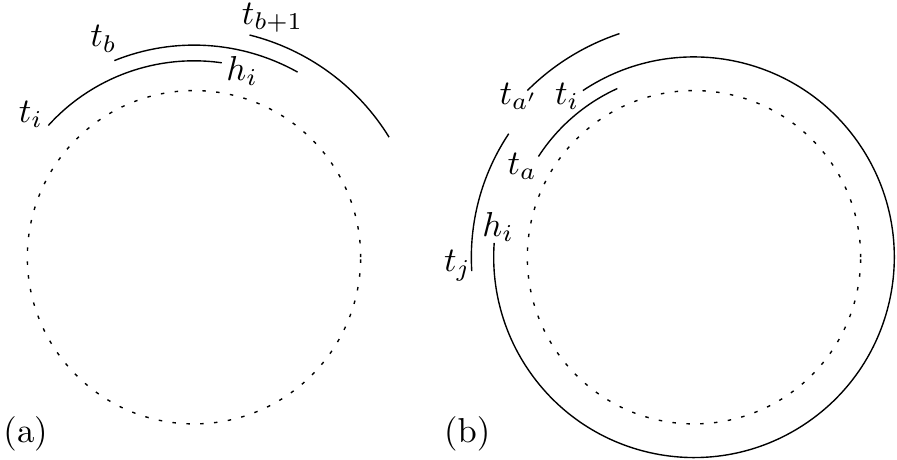}
\caption{(a) Position of the head $h_i$ in $\prec$ for a non-universal $v_i$,
 and (b) for a universal $v_i$.}
\label{fig:nphca_order}
\end{figure}

\begin{compactenum}[(a)]
\item If $v_i$ is not universal, then we determine the last neighbor $v_b$ of
  $v_i$, i.e., the vertex satisfying $N[v_i] = \{v_a, \dots,v_i,\dots,v_b\}$
  and insert $h_i$ into $\prec$ immediately before $t_{b+1}$.
\item Otherwise, i.e., when $v_i$ is universal, we determine the first
  non-universal vertex $v_a$ for which $N[v_a] = \{\dots, v_a,
    \dots,v_i,\dots\}$, i.e., the arc $R(v_a)$ shall contain the tail
    $R(v_i)_t$, and insert $h_i$ into $\prec$ immediately before $t_a$. The
    phrase \emph{the first} means that all other non-universal vertices $a'$
    which have $v_{a'}$ before $v_i$ in their neighborhood $N(v_{a'})$ are
    between $v_a$ and $v_i$ in $\lhd$.
\end{compactenum} 

We compute the order $\prec$ on the set $A$ from the cyclic vertex order $\lhd$
and from adjacencies between vertices of the graph $G$. As desribed, the order
$\prec$ depends uniquely on $\lhd$ and $G$.

The representation $\calR'$ imposes a cyclic ordering on the points
representing the tails and heads of the predrawn arcs. We extend this cyclic
ordering to the set of tails and heads of the predrawn arcs, i.e. on
$\{t_i,h_i: R'(v_i)\in \calR'\}$. In the case when a single point of the circle
is the tail and the head of touching arcs, i.e., when $R'(v_i)_h=R'(v_j)_t$, we
insert $h_i$ immediately after $t_j$. Note that as $G$ is a proper circular-ar
graph, only a single pair of arcs may touch on a single point.  Denote the
resulting cyclic ordering by $\prec'$.

Since $\calR'$ represents an induced subgraph of $G$ and as the cyclic ordering
$\lhd'$ on tails is a sub-ordering of $\lhd$ due to (1), it follows that
$\prec'$ is a sub-ordering of $\prec$. In particular, it could be obtained from
$\lhd'$ and $G'$ by the same process as $\prec$ was constructed from $\lhd$ and
$G$.

Now, we use the cyclic order $\prec$ to construct the representation $\calR$.
We identify every $t_i \in A$ with $R(v_i)_t$ and $h_i \in A$ with $R(v_i)_h$
and with a slight abuse of notation we view the elements of $A$ as the points
of the circle.

Consider now two predrawn elements $e$ and $e'$ that are consecutive in
$\prec'$ and that corresponds to distinct points of the circle --- in other
words, the interior of the arc $[e,e']$ contains no other predrawn head or
tail. 
Especially we exclude here the case when $e$ and $e'$ represent the head and
the tail of touching arcs.  For any such pair $e$ and $e'$, let
$e_0,\dots,e_{\ell-1}$ be the linear ordering of the elements in $\prec$ that
are between $e$ and $e'$ (if any). We place $e_0,\dots,e_{\ell-1}$
equidistantly between $e$ and $e'$ on the circle. This way we obtain the
representation $\calR$. By the same argument we have used for $\calR'$, the
cyclic ordering on the points representing the tails and heads can be extended
to $\lhd$ by putting heads of touching arcs immediately after the matching
tails.

Since we do not modify the endpoints of predrawn arcs, $\calR$ extends
$\calR'$.

Consider now any pair of vertices $v_i, v_j\in V(G)$.
When $v_iv_j \in E(G)$, we distinguish the following cases: 
\begin{compactitem}
  \item If both $v_i$ and $v_j$ are not universal, then we may without loss of
    generality assume that $v_j$ appears after $v_i$ in $N[v_i]= \{v_a,
    \dots,v_i,\dots, v_j, \dots, v_b\}$.  In this case the head $h_i$ has been
    inserted in $\prec$ immediately before the tail $t_{b+1}$. As the tail $t_b$
    is between $t_i$ and $t_{b+1}$ and $t_j=t_b$ or $t_j$ is between $t_i$ and
    $t_b$, it follows that the tail $t_j$ is inside the arc $R(v_i)$, see
    Fig.~\ref{fig:nphca_order}a.
  \item If $v_i$ in universal but $v_j$ not, then either $v_i$ appears before
    $v_j$ or after it in $N[v_j]$.  When $v_i$ is after $v_j$ in $N[v_j]$ we
    deduce that $t_i$ is inside the arc $R(v_j)$ as in the previous case.
    Otherwise the head $h_i$ has been inserted in $\prec$ immediately before
    $t_a$, the first non-universal vertex which has in $N[v_a]$ the vertex
    $v_j$ positioned after $v_a$. Therefore $h_j$ is between $h_i$ and $t_i$
    and the tail $t_j$ is inside the arc $R(v_i)$, see
    Fig.~\ref{fig:nphca_order}b.
  \item Finally, when both $v_i$ and $v_j$ are universal, then they are twins
    and hence are predrawn and intersecting already in $\calR'$. 
\end{compactitem}
In all three cases, the arcs $R(v_i)$ and $R(v_j)$ intersect.

On the other hand, when $v_iv_j \notin E(G)$, then none of these vertices is universal. 
In the moment of insertion of $t_i$ in $\prec$, i.e., immediately before
$t_{b+1}$ for $N[v_i]= \{v_a, \dots,v_i,\dots,v_b\}$, the tail $t_i$ appeared
between $h_i$ and $h_j$ as otherwise $N[v_i]$ would not be consecutive.
Analogously, the tail $t_j$ was inserted between $h_j$ and $h_i$.  Therefore,
the arcs $R(v_i)$ and $R(v_j)$ are disjoint.
\qqed
\end{proof}}

We are ready to prove that $\RepExt(\nphcar)$ can be solved in time $\O(n+m)$,
where~$n$ is the number of vertices and~$m$ is the number of edges of the given
graph $G$.

\nphcapoly*
\begin{proof}
The procedure is given as Algorithm~\ref{alg:nphca}.  Its correctness follows
directly from ~Lemma~\ref{lem:nphca_char}. 

\begin{algorithm}[tb]
\KwData{A graph $G$ and a partial representation $\calR'$.}
\KwResult{A \nphca{} representation $\calR$ of $G$ extending $\calR'$ or a message that it does not exist.}
\Begin{
Determine the twin classes and prune not predrawn twins\;
Find the ordering $<$  by Lemma~\ref{lem:nphca_unique}\;
Transform $<$ to $\lhd$ by reversal and permutation of predrawn twins\;
\ForAll{arcs $R'(v_i)$ and $R'(v_j)$ such that 
$R'(v_i)_h = R'(v_j)_t$}{
\lIf{$N[v_i] \cap N[v_j] \ne \{v_i,\dots,v_j\}$}{\Return{$\calR'$ has no extension.}}
}
Construct the cyclic order $(A, \prec)$ from the proof of Lemma~\ref{lem:nphca_char}\;
Convert $(A, \prec)$ into the representation $\calR$ as descibed in  Lemma~\ref{lem:nphca_char}\;
Replicate the pruned twins into $\calR$ from the non-pruned ones\;
\Return{$\calR$}
}

\caption{The algorithm for the $\RepExt(\nphcar)$ problem.}\label{alg:nphca}
\end{algorithm}

For the computational complexity note that:
\begin{compactitem}
\item Step 3 could be achieved in $O(n+m)$ time by Lemma~\ref{lem:nphca_unique}. 
\item Step 4. To check whether reversal is necessary it suffices to check two consecutive vetices belonging to discinct twin classes.
The correct ordering of the predrawn twins can be achieved by a single traversal of $<$.
\item
The for-loop at lines 5--7 has at most $n$ iterations since we check the head of every predrawn arc exactly once.
\item
The test at line $6$ can be done in constant time since it suffices to check whether $v_{i-1}v_j \in E(G)$ or $v_{i}v_{j+1} \in E(G)$.
\item
Step 8.
Following the proof of Lemma~\ref{lem:nphca_char}, we perform a single
traversal of the initial order $[t_0,\dots, t_{n-1}]$ and insert the elements
$h_0,\dots,h_{n-1}$ appropriately according to $\lhd$.
\item
For step 9, the positions of heads and tails of non-predrawn arcs can be determined in $\calO(n)$ by a single traversal of $(A, \prec)$.
\item
At line 10, each pruned twin requires only constant time to be replicated.
\end{compactitem}
\qqed
\end{proof}


For Theorem~\ref{thm:phca_red}, which deals with $\RepExt(\phca)$, note that
even though all proper Helly circular-arc graphs allow also a normal proper
representation, we cannot reduce $\RepExt(\phcar)$ to $\RepExt(\nphcar)$
directly, as the given partial representation need not to be normal.

However, the existence of a proper Helly representation extending a pair of arcs in a
not-normal position imposes strong conditions on the structure of $G$: after
pruning all universal vertices only two disjoint cliques remain.
Such instances can be solved in linear time. For details see
Section~\ref{sec:PHCA}.


\section{Normal Helly Circular-Arc Graphs}
\label{sec:NHCA}

With the following lemma, we can avoid universal vertices since they allow 
considering instances of $\RepExt(\nhcar)$ as instances for the interval case. 
\begin{lemma}
  Let $G$ be a graph with a universal vertex $u$. Then for every normal Helly
  circular-arc representation of $G$ there exists a point on the circle that is
  contained in no arc.
  \label{lem:nhcaGuniversal}
\end{lemma}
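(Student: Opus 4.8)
The plan is to argue by contradiction: suppose some normal Helly circular-arc representation $\calR$ of $G$ covers the entire circle, i.e.\ $\bigcup_{v\in V(G)} R(v)$ equals the whole circle. Since $u$ is universal, the arc $R(u)$ intersects every other arc. If $R(u)$ were the whole circle, then any two arcs $R(x),R(y)$ that are disjoint (which exist, since $G$ is not complete — and if $G$ is complete the Helly property already forces a common point, hence an uncovered point because arcs are proper closed arcs, not the full circle) together with $R(u)$ would violate either normality (if one of them has disconnected intersection with $R(u)=$ circle, impossible) — more carefully, one must handle the complete-graph case separately and note that a single arc equal to the whole circle is usually disallowed, or if allowed, a Helly representation of $K_n$ still leaves points uncovered. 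So assume $R(u)$ is a proper arc $[a,b]$; then its complement is a nonempty open arc.

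First I would observe that because the circle is covered, the open complementary arc $R(u)^c = (b,a)$ must be covered by the other arcs. Pick a point $p$ in $R(u)^c$ and let $R(w)$ be an arc containing $p$. Since $w$ is adjacent to $u$, $R(w)\cap R(u)\neq\emptyset$; since $R(w)$ also meets $R(u)^c$, the intersection $R(w)\cap R(u)$ is either a single arc incident to $a$ or to $b$, or it is two arcs (one near $a$, one near $b$) — the latter is forbidden by normality. Hence every arc meeting $R(u)^c$ meets $R(u)$ in a connected piece touching exactly one endpoint of $R(u)$; call such arcs \emph{left} (containing a neighborhood of $b$ inside $R(u)^c$, extending clockwise past $b$) or \emph{right} (extending counterclockwise past $a$). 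The key step is then: consider the set of left arcs; their union together with $R(u)$ is a connected arc whose clockwise end is some point $p_L$, and similarly the right arcs give a point $p_R$. If the circle is covered, these two extremal arcs $R(x)$ (leftmost left arc) and $R(y)$ (rightmost right arc) must overlap, so $R(x)\cap R(y)\neq\emptyset$, hence $xy\in E(G)$; but then $R(x), R(y), R(u)$ are three pairwise intersecting arcs. I would check that they have no common point: $R(u)\cap R(x)$ lies near $b$, $R(u)\cap R(y)$ lies near $a$, and $R(x)\cap R(y)$ lies in $R(u)^c$, so any common point would have to lie simultaneously in $R(u)$ and in $R(u)^c$ — impossible. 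This contradicts the Helly property.

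The main obstacle I expect is the bookkeeping of degenerate cases: (i) a single arc meeting $R(u)^c$ that by itself covers all of $R(u)^c$ (then normality forces it to wrap and meet $R(u)$ in two components, contradiction, or it equals a proper arc and still leaves a point near where it rejoins $R(u)$ uncovered unless it shares endpoints — needs care with closed arcs); (ii) the complete-graph case, where there are no disjoint arcs and one must use that a Helly representation of $K_n$ by proper closed arcs on a circle still cannot cover everything; (iii) ensuring that "leftmost" and "rightmost" are well-defined when there are infinitely many arcs or ties — but $G$ is finite, so finitely many arcs, and one simply takes the arc whose clockwise (resp.\ counterclockwise) endpoint is extremal. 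I would also remark that the argument essentially reduces to the classical fact that a normal Helly circular-arc representation that covers the whole circle would correspond, via the clique-point picture of Lemma~\ref{lem:gavril_hca}, to the universal vertex covering all clique-points, which together with normality forces $R(u)$ to be the full circle and hence $G=K_n$ with a non-Helly-type obstruction; but the direct geometric argument above is cleaner. Once the uncovered point is found, the lemma follows, and it justifies treating $\RepExt(\nhcar)$ instances with universal vertices by cutting the circle at that point and invoking the interval-graph machinery.
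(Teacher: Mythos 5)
Your strategy is essentially the paper's: assume the circle is covered, consider the complement $R(u)^c$ of the universal arc, use normality to anchor the arcs covering $R(u)^c$ at the endpoints of $R(u)$, take the two extremal such arcs, and derive a Helly violation. However, your classification step has a genuine gap. You claim every arc $R(w)$ meeting $R(u)^c$ intersects $R(u)$ in a connected piece containing \emph{exactly one} endpoint of $R(u)$, the two-component case being excluded by normality. There is a third possibility that normality does not exclude: $R(w)\supseteq R(u)$, so that $R(w)\cap R(u)=R(u)$ is connected and contains \emph{both} endpoints, while $R(w)$ protrudes into $R(u)^c$ on both sides. Such an arc is neither ``left'' nor ``right'' in your sense, and if the extremal left and right arcs coincide with such a $w$, your three-arc Helly argument collapses. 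Your degenerate case (i) gestures at this, but the proposed resolution is wrong: normality does \emph{not} force such an arc to meet $R(u)$ in two components. The paper dispatches this at the outset by assuming, without loss of generality, that $R(u)$ is inclusion-maximal among all arcs (any arc containing $R(u)$ meets every arc, since every arc meets $R(u)$, so it also represents a universal vertex and may play the role of $u$); after this reduction an arc containing both endpoints of $R(u)$ must equal $R(u)$, and the paper's choice of maximal covering arcs then immediately yields an uncovered point next to an endpoint of $R(u)$, or else one is in your left/right setting. You need this (or an equivalent) reduction; the separate discussions of the complete graph and of $R(u)$ being the whole circle do not substitute for it and are not needed once it is made.

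A smaller issue: in the final step you assert $R(x)\cap R(y)\subseteq R(u)^c$, so that a common point of $R(u),R(x),R(y)$ is impossible. As written this is unjustified: $R(x)$ and $R(y)$ could also overlap inside $R(u)$, if each reaches past the middle of $R(u)$ towards the other endpoint, and then a Helly point would exist. The correct argument (as in the paper) is that $x$ and $y$ already meet inside $R(u)^c$, so an additional intersection inside $R(u)$ would make $R(x)\cap R(y)$ disconnected --- the two pieces are separated by $a\notin R(x)$ and $b\notin R(y)$ --- contradicting normality of the pair $x,y$; hence $R(x)\cap R(y)\cap R(u)=\emptyset$ and the Helly property fails. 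This is fixable, but it is precisely where normality of the pair $x,y$, not just of each arc with $R(u)$, must be invoked.
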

\begin{proof}
  Let $\calR$ be a normal Helly circular-arc representation of $G$.
  Assume that every point of the circle is contained in some arc. 
  Without loss of generality, we may assume that $R(u)$ is not strictly contained in
  any other arc of $\calR$. We consider the complement $R^c(u)$ of $R(u)$.
  Let $v_t$, $v_h$ be vertices whose arcs contain $R^c(u)_t$, $R^c(u)_h$
  respectively, and whose arcs maximize the intersection with $R^c(u)$.

  Note that, since $\calR$ is normal, neither $R(v_t)$ nor $R(v_h)$ contain $R^c(u)$.
  Assume that one of $R(v_t)$, $R(v_h)$, say $R(v_t)$ contains both endpoints of $R(u)$.
  We then have $R^c(u)\subseteq R(v_t)$ which by the maximality of $R(u)$
  implies $R(v_t)=R(u)$. 
  By the choice of $v_t$, we find in this case a point close to $R(u)_t$ that
  is not contained in any arc.

  In the other case, each of $R(v_t)$ and $R(v_h)$ contains exactly one endpoint of $R(u)$.
  In particular, we have $v_t\ne v_h$.
  Since $u$ is universal and since every point of $R^c(u)$ is contained in some arc,
  it follows that $R^c(u)\subseteq R(v_t)\cup R(v_h)$. We obtain $R(v_t)\cap
  R(v_h) \cap R^c(u)\ne \emptyset$ (recall that our arcs are closed sets).
  Due to the normal property, it follows that $R(v_t)\cap R(v_h) \cap R(u)=
  \emptyset$, which contradicts the Helly property.
  We conclude that there exists a point $p$ that is not contained in any arc.
\end{proof}

We assume for the rest of this section that our graph contains no universal vertices.

Let $G=(V,E)$ be a graph. Then two adjacent vertices $u, v\in V$ form a
\emph{universal pair} if each vertex $w\in V$ is adjacent either to $u$ or to $v$.

\begin{lemma}
  A graph $G$ without a universal vertex is a normal Helly circular-arc graph
  if and only if there exists a cyclic ordering $\lhd$ of its maximal cliques
  such that 
  \begin{compactenum}[(i)]
    \item for every vertex $v$, the maximal cliques containing $v$ appear
  consecutively in $\lhd$.
    \item for every universal pair $u$, $w$, the maximal cliques
      containing $u$ and $w$ appear consecutively in $\lhd$.
  \end{compactenum}

  \label{lem:nhcaG}
\end{lemma}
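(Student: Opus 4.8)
The plan is to prove Lemma~\ref{lem:nhcaG} by characterizing when the cyclic clique-ordering guaranteed by Gavril's Lemma~\ref{lem:gavril_hca} can be realized by a \emph{normal} Helly representation. The forward direction is the interesting one. Suppose $G$ has a normal Helly representation $\calR$. By Lemma~\ref{lem:gavril_hca} (applied to the Helly property alone), placing the clique-points on the circle yields a cyclic ordering $\lhd$ of the maximal cliques in which each $N[v]$-clique-set is consecutive, so (i) holds; in fact we take $\lhd$ to be exactly the cyclic order induced by the clique-points $\cp(C)$ of $\calR$. It remains to verify (ii). Let $u,w$ be a universal pair. Since $uw\in E$, there is a clique-point in $R(u)\cap R(w)$; the set of cliques containing $u$ occupies one arc of $\lhd$ and those containing $w$ another arc, and these two arcs overlap (they share at least one clique). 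The union of the two clique-sets is therefore consecutive \emph{unless} the two arcs together wrap all the way around the circle, i.e.\ every clique-point lies in $R(u)\cup R(w)$. I would rule this out: if every clique-point, hence (since each clique is contained in some arc and clique-points witness maximal cliques) every point of the circle needed to host a clique, is covered by $R(u)\cup R(w)$, then consider the two ``gaps'' $R(u)^c$ and $R(w)^c$. Because $u,w$ is a universal pair, every vertex is adjacent to $u$ or to $w$, so $R(v)\cap R(u)\neq\emptyset$ or $R(v)\cap R(w)\neq\emptyset$ for all $v$; combined with coverage this forces $R(u)\cap R(w)$ to be disconnected (its complement $R(u)^c\cup R(w)^c$ splits into two pieces that are ``far apart''), contradicting normality — this is the same mechanism used in the proof of Lemma~\ref{lem:nhcaGuniversal}. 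Hence the two clique-arcs do not cover the circle and their union is consecutive in $\lhd$, giving (ii).

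For the converse, suppose a cyclic ordering $\lhd$ of the maximal cliques satisfies (i) and (ii). Distribute the clique-points on the circle according to $\lhd$ and, following the remark after Lemma~\ref{lem:gavril_hca}, assign to each vertex $v$ the arc $R(v)$ covering exactly the clique-points of cliques containing $v$; by (i) this is a well-defined arc, and the resulting $\calR$ is a Helly circular-arc representation of $G$. The only thing to check is that $\calR$ is normal, i.e.\ that for every edge $uv$ the intersection $R(u)\cap R(v)$ is connected. If $uv\in E$ but $R(u)\cap R(v)$ is disconnected, then $R(u)\cup R(v)$ covers every clique-point, so together $u$ and $v$ meet every maximal clique; since every vertex lies in some maximal clique, every vertex is adjacent to $u$ or to $v$, i.e.\ $u,v$ is a universal pair. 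But then (ii) says the cliques containing $u$ or $w$ (with $w:=v$) are consecutive, which for a Helly representation built from $\lhd$ means their union does not wrap around the circle — contradicting that $R(u)\cup R(v)$ covers all clique-points (one must be careful: ``consecutive'' still allows the union to be the whole set; here we additionally use that $G$ has no universal vertex, so there is a vertex $z$ adjacent to neither would be impossible — instead, $u,v$ universal pair means the clique-set of $u$ and of $v$ jointly cover $\calC$, and consecutiveness of this cover forces the two individual clique-arcs to overlap in a single block whose complement is a single block, i.e.\ $R(u)\cap R(v)$ is connected after all). Thus no edge has disconnected intersection and $\calR$ is normal.

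The main obstacle I anticipate is making the ``wrap-around'' argument fully rigorous on both sides, in particular the interface between the combinatorial statement ``the union of two clique-sets is consecutive'' and the geometric statement ``$R(u)\cap R(v)$ is connected.'' Specifically, one needs: (a) for a representation obtained from $\lhd$ by the clique-point construction, $R(u)\cap R(v)$ is disconnected \emph{iff} the clique-sets of $u$ and $v$ jointly exhaust $\calC$ but neither contains the other and neither is empty on either side — equivalently iff their union is \emph{not} consecutive \emph{as a proper subset} — and this requires knowing that between two consecutive clique-points there are no other relevant points, which is exactly the content of the construction; and (b) in the forward direction, that a universal pair whose intersection is disconnected really does cover all clique-points, for which I would argue that any uncovered clique-point $\cp(C)$ gives a clique $C$ containing a vertex $z$ with $R(z)\ni\cp(C)$, and $z$ adjacent to $u$ or $w$ by the universal-pair property places $\cp(C)$ inside $R(u)$ or $R(w)$ after all — wait, that is not immediate since $R(z)$ may reach $\cp(C)$ without $R(u)$ doing so; the correct phrasing is that \emph{disconnectedness} of $R(u)\cap R(v)$ already means $R(u)^c$ and $R(v)^c$ are disjoint nonempty arcs, and then the universal-pair condition plus normality/Helly is what derives the contradiction, mirroring Lemma~\ref{lem:nhcaGuniversal}. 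I would keep the two directions cleanly separated and lean on the normality argument from Lemma~\ref{lem:nhcaGuniversal} as a template for the geometric contradiction, so that the only genuinely new work is bookkeeping about consecutive arcs on the circle.
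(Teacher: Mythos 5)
There is a genuine gap, and it starts with a misreading of condition (ii): you treat it as consecutivity of the \emph{union} $M_u\cup M_w$ of the clique sets of a universal pair, whereas the lemma (see also its later use in Theorem~\ref{the:NHCArepextChar}, whose third condition is that $M_u\cap M_w$ is consecutive) concerns the \emph{intersection}, i.e.\ the maximal cliques containing both $u$ and $w$. Under your reading the condition is essentially vacuous: if $M_u$ and $M_w$ are each consecutive (condition (i)) and share a clique (they do, since $uw\in E$), their union is automatically consecutive --- either it misses some clique, in which case cutting the cycle there reduces to two intersecting intervals on a line, or it is all of $\calC$, which is trivially consecutive in a cyclic order. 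Consequently both directions of your argument break. In the backward direction, consecutivity of the union simply does not force $R(u)\cap R(w)$ to be connected: with four cliques in cyclic order $C_1,C_2,C_3,C_4$ and $M_u=\{C_4,C_1,C_2\}$, $M_w=\{C_2,C_3,C_4\}$, both sets and their union are consecutive, yet the constructed arcs meet in two components; the closing claim of your second paragraph (``$R(u)\cap R(v)$ is connected after all'') is exactly this false step. In the forward direction, your attempt to exclude the ``wrap-around'' case is also unsound: ``every clique point lies in $R(u)\cup R(w)$'' is much weaker than ``$R(u)\cup R(w)$ covers the circle'' and occurs in perfectly normal representations (in $P_4=a\,u\,w\,b$ every interval representation has all three clique points inside $R(u)\cup R(w)$, while $R(u)\cap R(w)$ is connected), so no contradiction with normality is available there.

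With the intended reading the proof is shorter and needs none of this case analysis. Forward: a clique point lies in an arc $R(x)$ only if $x$ belongs to that maximal clique, so the clique points inside $R(u)\cap R(w)$ are exactly those of $M_u\cap M_w$; normality makes $R(u)\cap R(w)$ a single arc, hence $M_u\cap M_w$ is consecutive (the universal-pair hypothesis is not even needed here). Backward: build the representation from clique points as you do, and suppose $R(u)\cap R(w)$ is disconnected; then $R(u)\cup R(w)$ covers all clique points, so $u,w$ is a universal pair. Now contradict (ii) directly: each component of $R(u)\cap R(w)$ contains an endpoint of $R(u)$ or $R(w)$, and since the arcs are minimal these endpoints are clique points lying in both arcs, i.e.\ clique points of $M_u\cap M_w$; on the other hand each of the two gaps $R(u)^c$ and $R(w)^c$ contains a clique point of a clique outside $M_u\cap M_w$, because otherwise $u$ (resp.\ $w$) would lie in every maximal clique and be universal --- this is precisely where the no-universal-vertex hypothesis enters, and your proposal never uses it. The resulting alternation of clique points in and out of $M_u\cap M_w$ around the circle contradicts its consecutivity, so the constructed representation is normal.
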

\begin{proof}
 If $G$ has a normal Helly circular-arc representation, then, due to the Helly
 property, we have for each maximal clique $C$ a clique point $\cp(C)$ where
 the arcs of all vertices in $C$ intersect.
 For each vertex $v$ and for each universal pair $u$, $w$ the corresponding
 cliques are consecutive in $\lhd$ since $R(v)$ and $R(u)\cap R(w)$ each are
 connected (the latter due to the normal property).

 Next assume that we have a cyclic ordering $\lhd$ of the maximal cliques
 of $G$ with properties (i) and (ii).
 We obtain a normal Helly circular-arc representation as follows;
 see Fig.~\ref{fig:lem16}.
 We first arrange the maximal clique points on the circle according to $\lhd$.
 Then, for each vertex $v$, we define the arc $R(v)$ as the smallest arc that
 contains exactly the clique points of the maximal cliques that contain $v$. Note
 that $R(v)$ is well-defined since $v$ is not universal.
 This defines a circular-arc representation $\calR$ of $G$ since any two
 intersecting arcs share a clique point. Moreover, the existence of the clique
 points shows the Helly property. It remains to show that $\calR$ is
 normal.

 Assume that there are two nodes $u$, $w$ such that $R(u)\cap R(w)$ is not connected.
 Then $u$, $w$ is a universal pair. Thus the cliques containing $u$ and
 $w$ are consecutive in $\lhd$ which contradicts $R(u)\cap R(w)$ being not connected.
 Hence, the representation is normal.
\end{proof}

\myparagraph{Extendable Partial Representations.}
We characterize all partial representations of a normal Helly circular-arc
graph $G$ that are extendable. Let $\calR'$ be a partial representation of $G$
and let $C$ be a maximal clique of $G$. We define $\Pre(C) = \{R'(v) : v \in
C \cap V(G')\}$ to be the \emph{predrawn arcs} corresponding to the vertices in $C$.

\begin{definition}
For a maximal clique $C\in\calC$, we define sets
$$\Reg^+(C) = \bigcap\limits_{w\in\Pre(C)} R'(w) \qquad\text{ and }\qquad \Reg^-(C) =
\bigcup\limits_{w\in \calR'\setminus \Pre(C)} R'(w).$$
The \emph{region} of $C$ is the set $\Reg(C) =
\Reg^+(C)\setminus \Reg^-(C)$.
\end{definition}

The set $\Reg^+(C)$ means the set of possible locations of $\cp(C)$ with respect to predrawn arcs, 
while $\Reg^-(C)$ means the forbidden locations.
Thus if the region $\Reg(C)$ is empty for some clique $C\in\calC$, then the
given partial representation is not extendable. We thus assume in the following
that no region is empty. The following lemmas give some useful properties that
hold for the regions of maximal cliques.

\begin{lemma}\label{lem:region_equivalence}
For maximal cliques $C$ and $D$, we have either $\Reg(C) \cap \Reg(D) =
\emptyset$, or $\Reg(C) = \Reg(D)$.
\end{lemma}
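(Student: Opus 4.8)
The plan is to analyze the structure of $\Reg^+(C)$ and $\Reg^-(C)$ as subsets of the circle and show that the family of regions behaves like a partition. First I would recall that each $\Reg^+(C) = \bigcap_{w \in \Pre(C)} R'(w)$ is an intersection of predrawn arcs through a common clique, hence (by the Helly property of the given partial representation, which must itself be a Helly representation of $G'$, or at least by the fact that the vertices of $C \cap V(G')$ form a clique so their arcs pairwise intersect) a nonempty connected arc — actually it may be a union of arcs in the degenerate case where the arcs cover the circle, so I would be careful here and either invoke normality of $\calR'$ or argue directly. The key point is that $\Reg^+(C)$ records where $\cp(C)$ may be placed relative to the predrawn arcs, and $\Reg^-(C)$ records the forbidden zone; since $\Reg^-(C)$ is a union of \emph{entire} predrawn arcs, the boundary points of $\Reg(C)$ are always endpoints of predrawn arcs.

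The heart of the argument is the following observation: a point $p$ of the circle that is not an endpoint of any predrawn arc lies in $\Reg(C)$ if and only if the set of predrawn vertices whose arcs contain $p$ is exactly $\Pre(C)$. Indeed, $p \in \Reg^+(C)$ means every arc in $\Pre(C)$ contains $p$, and $p \notin \Reg^-(C)$ means no arc outside $\Pre(C)$ contains $p$; together these say precisely that $\{w \in V(G') : p \in R'(w)\} = C \cap V(G')$. Therefore, for two maximal cliques $C$ and $D$, if $\Reg(C)$ and $\Reg(D)$ share a point $p$ that is not a predrawn endpoint, then $\Pre(C) = \Pre(D)$, and consequently $\Reg^+(C) = \Reg^+(D)$ and $\Reg^-(C) = \Reg^-(D)$, so $\Reg(C) = \Reg(D)$. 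To handle the case where the only common point is a predrawn endpoint, I would note that $\Reg(C)$ is open at such endpoints in the relevant sense (a point that is an endpoint of a predrawn arc $R'(w)$ with $w \notin \Pre(C)$ lies in $\Reg^-(C)$ since arcs are closed, so it is excluded; if it is an endpoint of an arc in $\Pre(C)$ it is an interior boundary point and any neighborhood of it meets the non-endpoint interior of $\Reg(C)$), so if $\Reg(C) \cap \Reg(D) \neq \emptyset$ then in fact they share a non-endpoint point, reducing to the previous case.

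The main obstacle I anticipate is the careful bookkeeping around degenerate endpoints and around the possibility that $\Reg^+(C)$ is disconnected (covering the circle). I would resolve the first by phrasing the membership characterization in terms of non-endpoint points and using a density/limit argument, and the second by observing that disconnectedness of $\Reg^+(C)$ does not affect the argument: the characterization ``$p \in \Reg(C) \iff$ the predrawn arcs through $p$ are exactly $\Pre(C)$'' holds regardless of connectivity, since it is purely about which arcs contain $p$. Thus the whole proof reduces to this one clean equivalence, and everything else is routine.
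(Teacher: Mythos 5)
Your core argument is correct and is essentially the contrapositive of the paper's own proof: the paper notes that if $\Pre(C)\neq\Pre(D)$ then some arc $R'\in\Pre(C)\setminus\Pre(D)$ (up to swapping $C$ and $D$) contains all of $\Reg(C)$ (since $R'\in\Pre(C)$) and misses all of $\Reg(D)$ (since $R'\subseteq\Reg^-(D)$), so the regions are disjoint; you instead observe that a common point $p$ forces the set of predrawn arcs containing $p$ to equal both $\Pre(C)$ and $\Pre(D)$, hence $\Pre(C)=\Pre(D)$ and the regions coincide by definition. These are the same one-line idea, and your worry about $\Reg^+(C)$ possibly being disconnected is correctly dismissed as irrelevant.

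One caveat: your detour through non-endpoint points is both unnecessary and, in one step, wrong. The characterization ``$p\in\Reg(C)$ iff the predrawn arcs containing $p$ are exactly $\Pre(C)$'' holds verbatim at every point of the circle, endpoints included, because the arcs are closed and $\Reg^-(C)$ is a union of whole closed arcs; nothing in your own justification of the equivalence uses the non-endpoint assumption. By contrast, the reduction you propose for the endpoint case --- that any nonempty intersection of regions must contain a non-endpoint point --- fails when $\Reg(C)$ is a single point, which genuinely occurs when two predrawn arcs of $\Pre(C)$ touch (the paper explicitly accommodates single-point regions, e.g.\ in condition~1 of Theorem~\ref{the:NHCArepextChar}). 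So you should simply drop the restriction to non-endpoint points and apply your equivalence directly to any common point; with that change the proof is complete and matches the paper's.
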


\begin{proof}
If $\Pre(C) = \Pre(D)$, then clearly $\Reg(C) = \Reg(D)$. So, we assume that $\Pre(C) \neq
\Pre(D)$. We can further assume that there exists an arc $R'\in
\Pre(C)\setminus \Pre(D)$.
Since the clique point $\cp(C)$ must be placed on the point of $R'$ and the clique point $\cp(D)$ cannot be placed on any point
of $R'$, we have that $\Reg(C)$ and $\Reg(D)$ are disjoint.
\qqed
\end{proof}

For every maximal clique $C$, we call the connected components of $\Reg(C)$
\emph{islands} and the connected components of its complement $\Reg^c(C)$ we call \emph{gaps of $C$}.
We say an island and a gap are \emph{neighboring}, if they share an endpoint
(where one end is open and the other is closed.) Note that every island has two
neighboring gaps and every gap has two neighboring islands.

Observe that if two maximal cliques $C,D\in \calC$ satisfy $\Pre(C)=\Pre(D)$
then $\Reg(C)=\Reg(D)$ by definition. In the other
case we obtain the following relationship:

\begin{lemma}
\label{lem:singleGap}
Let $C$ and $D$ be two maximal cliques with $\Pre(C)\ne\Pre(D)$.
Then $D$ has a gap $J$ with $\Reg(C)\subseteq J$.
\end{lemma}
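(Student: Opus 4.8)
The statement says: if $C$ and $D$ are maximal cliques with $\Pre(C)\neq\Pre(D)$, then $\Reg(C)$ is contained in a single gap of $D$. The plan is to combine two facts already available. First, by Lemma~\ref{lem:region_equivalence} applied to $C$ and $D$ (which have different predrawn sets, so a fortiori are not the identical-region case — more precisely, if $\Reg(C)=\Reg(D)$ then the claim would need $\Reg(D)$ inside a gap of $D$, which fails, so I must first argue $\Reg(C)\cap\Reg(D)=\emptyset$). Since $\Pre(C)\neq\Pre(D)$, there is (up to swapping $C,D$) an arc $R'\in\Pre(C)\setminus\Pre(D)$; then $\cp(C)$ must lie on $R'$ while $\cp(D)$ may not, so indeed $\Reg(C)$ and $\Reg(D)$ are disjoint, hence $\Reg(C)\subseteq\Reg^c(D)$, the complement of $\Reg(D)$. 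So $\Reg(C)$ is contained in the union of the gaps of $D$. The remaining work is to show it lies in just \emph{one} gap.

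\textbf{Key step: connectivity of $\Reg(C)$ relative to $D$'s gaps.} A gap of $D$ is a connected component of $\Reg^c(D)$, so it suffices to show that $\Reg(C)$ is connected — then, being a connected subset of $\Reg^c(D)$, it lies entirely in one connected component, i.e. one gap. But $\Reg(C)$ need not be connected in general (it can have several islands). So instead I would argue directly that $\Reg(C)$ cannot meet two different gaps of $D$. Suppose for contradiction $\Reg(C)$ meets gaps $J_1\neq J_2$ of $D$. Between $J_1$ and $J_2$ (going around the circle one way) there must be at least one island $I$ of $D$, i.e.\ a nonempty connected piece of $\Reg(D)$, and in particular the two endpoints of $I$ separate $J_1$ from $J_2$ on the circle. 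Pick points $p_1\in \Reg(C)\cap J_1$ and $p_2\in\Reg(C)\cap J_2$. Now use the arc $R'\in\Pre(C)\setminus\Pre(D)$: since $\cp(C)\in R'$, the whole region $\Reg(C)$ must lie on $R'$ (indeed $\Reg(C)\subseteq\Reg^+(C)\subseteq R'$ because $R'\in\Pre(C)$). Hence $p_1,p_2\in R'$, so $R'$ is an arc containing both $p_1$ and $p_2$; one of the two sub-arcs of the circle between $p_1$ and $p_2$ is contained in $R'$, and that sub-arc must cross the island $I$ of $D$ (since $I$ separates $J_1$ and $J_2$). Thus $R'$ contains a point $q$ of the island $I\subseteq\Reg(D)\subseteq\Reg^+(D)$.

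\textbf{Deriving the contradiction.} I need $q\in\Reg(D)$ together with $q\in R'$ to be impossible. Recall $\Reg(D)=\Reg^+(D)\setminus\Reg^-(D)$ where $\Reg^-(D)=\bigcup_{w\in\calR'\setminus\Pre(D)}R'(w)$. Since $R'\in\Pre(C)$ it is a predrawn arc, and since $R'\notin\Pre(D)$ we have $R'\in\calR'\setminus\Pre(D)$, so $R'\subseteq\Reg^-(D)$. Therefore $q\in R'\subseteq\Reg^-(D)$, contradicting $q\in\Reg(D)=\Reg^+(D)\setminus\Reg^-(D)$. This contradiction shows $\Reg(C)$ cannot meet two distinct gaps of $D$, and since $\Reg(C)$ is nonempty and contained in $\Reg^c(D)=\bigcup(\text{gaps of }D)$, it lies in exactly one gap $J$ of $D$, as claimed. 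The only mild subtlety to handle carefully is the degenerate possibility that $\Reg(C)$ wraps around so that "both sub-arcs between $p_1$ and $p_2$" meet islands of $D$; but since $R'$ is an arc (not all of the circle — it is a predrawn arc of a circular-arc representation), it contains exactly one of the two sub-arcs between $p_1$ and $p_2$, and that is the one we use. I expect the main obstacle to be phrasing the separation argument (that an island of $D$ between two of its gaps must be crossed) cleanly, but it is routine once one fixes the sub-arc of $R'$ joining $p_1$ to $p_2$.
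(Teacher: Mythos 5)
Your argument has a genuine gap: it hinges on the existence of an arc $R'\in\Pre(C)\setminus\Pre(D)$, but the hypothesis $\Pre(C)\ne\Pre(D)$ only guarantees that one of the two set differences is nonempty. Since the conclusion is \emph{not} symmetric in $C$ and $D$ (you must place $\Reg(C)$ inside a gap of $D$, not the other way around), the phrase ``up to swapping $C,D$'' is only legitimate for the symmetric disjointness step, not for the key step. In the case $\Pre(C)\subsetneq\Pre(D)$ (which does occur, e.g.\ when every predrawn vertex of $C$ also lies in $D$ but $D$ has an additional predrawn vertex) there is no witness arc $R'$, and your contradiction argument produces nothing. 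The paper's proof is structured precisely to avoid this: it picks points $i_1,i_2$ in the two islands of $D$ neighboring one of the gaps met by $\Reg(C)$ and shows, by a two-sided separation argument, that \emph{every} arc of $\Pre(C)$ contains $i_1$ or $i_2$ and every arc of $\Pre(D)$ contains $j_1$ or $j_2$, yielding $\Pre(C)=\Pre(D)$ and contradicting the mere inequality of the two sets, with no need for a witness in a particular difference.

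The good news is that the part you do prove is correct (an arc $R'\supseteq\Reg^+(C)$ meeting two distinct gaps of $D$ must contain a point of $\Reg(D)$, contradicting $R'\subseteq\Reg^-(D)$; ``at least one'' sub-arc between $p_1$ and $p_2$ suffices there), and the missing case admits a short separate argument: if instead $R''\in\Pre(D)\setminus\Pre(C)$, then $\Reg(D)\subseteq\Reg^+(D)\subseteq R''$, so the complementary arc of $R''$ is a \emph{connected} subset of $\Reg^c(D)$ and hence lies in a single gap of $D$; moreover $R''\subseteq\Reg^-(C)$ gives $\Reg(C)\cap R''=\emptyset$, so $\Reg(C)$ is contained in that complementary arc and therefore in that single gap. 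With this case added your proof becomes a complete (and somewhat different, witness-arc based) alternative to the paper's symmetric argument; as written, however, it does not establish the lemma in the generality in which it is later applied.
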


\begin{proof}
From $\Pre(C)\ne\Pre(D)$ we obtain $\Reg(C)\subseteq \Reg^c(D)$.
Assume that there are two gaps $J_1$, $J_2$ of $D$ that contain points $j_1,
j_2\in\Reg(C)$. Let $I_1,I_2$ be the islands of $D$ neighboring $J_1$.
Let $i_1\in I_1$, $i_2\in I_2$. 
For $v\in\Pre(C)$ it holds that $j_1,j_2\in R'(v)$ and thus $i_1\in R'(v)$ or
$i_2\in R'(v)$ since $i_1$, $i_2$ separate $j_1$, $j_2$ on the circle; see
Fig.~\ref{fig:lem15}.
This implies 
$\Pre(C)\subseteq
\{v\in V(G'): i_1 \in R'(v) \lor i_2 \in R'(v)\} =
\{v\in V(G'): i_1 \in R'(v)\} \cup \{v\in V(G'): i_2 \in R'(v)\} 
= \Pre(D)$ as both sets in the union are equal to $\Pre(D)$.
Likewise, we obtain $\Pre(D)\subseteq\Pre(C)$. This contradicts $\Pre(C)\ne\Pre(D)$.
Hence, $\Reg(C)$ is contained in a single gap of $D$.
\end{proof}

This means that for any two maximal cliques $C$ and $D$, the clique point $\cp(C)$
must be placed in a given gap of $D$. We obtain additional consecutivity constraints.
For every gap $J$, we define the set $S_J = \{C\in\calC \mid \Reg(C)\cap J\ne \emptyset\} 
= \{C\in\calC \mid \Reg(C)\subseteq J\}$. Recall that we assumed that no region is empty.

\begin{lemma}
  \label{lem:gapConsecutivity}
  Let $\calR$ be a NHCA extension of $\calR'$. Let $J$ be a gap of some maximal clique $D$.
  Let $\le$ be the clique ordering derived from $\calR$. Then in $\le$ the set $S_J$
  is consecutive.
\end{lemma}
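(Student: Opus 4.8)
The plan is to argue directly about the cyclic clique ordering $\lhd$ derived from $\calR$ (equivalently, about the placement of clique points on the circle in $\calR$) and show that all cliques in $S_J$ must occupy a consecutive block. First I would recall the setup: $J$ is a gap of the maximal clique $D$, and $S_J = \{C \in \calC \mid \Reg(C) \subseteq J\}$, where this equality holds by Lemma~\ref{lem:singleGap} together with the observation that a gap is a connected component of $\Reg^c(D)$. The key observation is that $J$ is an open arc of the circle bounded by two points $j_t, j_h$ that are endpoints of predrawn arcs; more precisely, the two islands $I_1, I_2$ of $D$ neighboring $J$ are each nonempty (since no region is empty and every island lies in some $\Reg(D')$ with $\Pre(D')=\Pre(D)$), and the clique points of the cliques in $\Pre(D)$ — i.e.\ of $D$ itself, which lies in $\Reg(D)$ hence not in $J$ — lie outside $J$.

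The heart of the argument is to exhibit a single clique point, or pair of predrawn arc endpoints, that "cuts" the circle so that $S_J$ lies entirely on one side. I would pick a vertex $w \in \Pre(D)$ (such a vertex exists: if $\Pre(D)=\emptyset$ then $\Reg^c(D)=\emptyset$, so $D$ has no gap at all, contradicting the hypothesis that $J$ is a gap of $D$). Then in $\calR$ the predrawn arc $R'(w) = R(w)$ is connected (by normality) and contains the clique point $\cp(D)$ but is disjoint from the gap $J$, since $J \subseteq \Reg^c(D) \subseteq R'(w)^c$ by definition of $\Reg^-$... wait — more carefully, $J$ is a connected component of $\Reg^c(D)$ and $\Reg^+(D) = \bigcap_{v \in \Pre(D)} R'(v) \subseteq R'(w)$, so $R'(w)^c \subseteq \Reg^+(D)^c \subseteq \Reg^c(D)$; thus $R'(w)^c$ is contained in the union of gaps of $D$, and being connected it lies in a single gap. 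If that gap is $J$, then every clique in $S_J$ has its clique point inside $R'(w)^c = R(w)^c$, hence outside the connected arc $R(w)$; since the complement of a connected arc is again a connected arc, all these clique points lie in a connected portion of the circle, so $S_J$ is consecutive in $\lhd$. If instead $R'(w)^c$ lies in a gap $J' \neq J$, I would need a different witness: in that case pick any point $p \in J$; since $p \notin \Reg(D)$, and $\Reg^+(D)$ is nonempty (it contains $\cp(D)$), there must be some $v \in \Pre(D)$ with $p \notin R'(v)$, i.e.\ $p \in R'(v)^c$; applying the previous reasoning to this $v$ shows $R'(v)^c$ lies in the gap of $D$ containing $p$, which is $J$. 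So after possibly re-choosing the witness vertex, we always find $w \in \Pre(D)$ with $R(w)^c$ contained in $J$ and containing every clique point of $S_J$.

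Finally I would convert this geometric statement into the combinatorial one: in the cyclic clique ordering $\le$ (the cyclic order of clique points around the circle), the clique point of every $C \in S_J$ lies in the connected arc $R(w)^c$, and no clique point of a clique outside $S_J$ with region meeting a different gap lies strictly between them in a way that breaks consecutiveness — but actually the cleanest phrasing is just that all of $S_J$'s clique points lie within one connected arc of the circle disjoint from $\cp(D)$, hence they form a consecutive block in $\le$. The main obstacle I anticipate is the bookkeeping around the case distinction of which gap $R'(w)^c$ falls into, and making sure the chosen witness vertex $w$ really exists and really traps all of $S_J$ on one side; once that witness is pinned down, consecutiveness is immediate from the connectedness of $R(w)^c$. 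A secondary subtlety is handling cliques $C$ with $\Pre(C) = \Pre(D)$: such $C$ are not in $S_J$ (their region is $\Reg(D) \not\subseteq J$), so they cause no trouble, but this should be noted explicitly.
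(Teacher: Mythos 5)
Your high-level target is right, but the route through a ``witness vertex'' $w\in\Pre(D)$ has a genuine gap. You claim that (after possibly re-choosing $w$) every clique of $S_J$ has its clique point in $R'(w)^c\subseteq J$. This fails whenever the gap $J$ is formed, wholly or in part, by $\Reg^-(D)$ rather than by $\Reg^+(D)^c$: a clique $C\in S_J$ may contain $w$, in which case $\cp(C)\in\Reg(C)\subseteq\Reg^+(C)\subseteq R'(w)$, so it is \emph{not} in $R'(w)^c$. Your fallback in the second case is also a non sequitur: from $p\in J$ and $p\notin\Reg(D)$ you conclude that some $v\in\Pre(D)$ has $p\notin R'(v)$, but $p\notin\Reg(D)=\Reg^+(D)\setminus\Reg^-(D)$ may instead hold because $p\in\Reg^-(D)$ while $p\in\Reg^+(D)$. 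A concrete configuration where no witness exists at all: $\Pre(D)=\{w\}$ and a predrawn $x\notin D$ with $R'(x)$ strictly inside $R'(w)$ (containment is allowed in an NHCA representation); then $J=R'(x)$ is a gap of $D$ lying entirely inside $R'(w)$, every clique in $S_J$ contains $w$, and $R'(w)^c$ is a different gap. Finally, even where the witness works, ``all clique points of $S_J$ lie in a connected arc disjoint from $\cp(D)$'' is not by itself consecutiveness --- you must also exclude the clique points of $\calC\setminus S_J$ from that arc, which you only gesture at.

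The paper's proof avoids all of this and is essentially one line: in any Helly extension, $\cp(C)\in\Reg(C)$ for every maximal clique $C$ (the clique point lies in every predrawn arc of a vertex of $C$, and by maximality of $C$ it lies in no predrawn arc of a vertex outside $C$). Hence, using Lemma~\ref{lem:singleGap} and the definition of $S_J$, the clique points of all $C\in S_J$ lie in the connected arc $J$, while the clique points of all other maximal cliques lie in $J^c$ (for $\Pre(C)=\Pre(D)$ because $\Reg(C)=\Reg(D)$ avoids $J$, and otherwise because $\Reg(C)$ lies in a single gap different from $J$). Consecutiveness of $S_J$ in the cyclic clique ordering is then immediate, with no witness vertex and no case distinction. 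You in fact state the key inclusion $\Reg(C)\subseteq J$ for $C\in S_J$ at the outset; combining it directly with $\cp(C)\in\Reg(C)$ would have finished the proof.
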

\begin{proof}
  Direct consequence of Lemma~\ref{lem:singleGap} since all clique points of
  that set must be placed on $J$ and all other clique points must be placed on
  its complement $J^c$.
\end{proof}

\begin{lemma}
  \label{lem:singleIsland}
  There exists a maximal clique $D$ with a single island.
\end{lemma}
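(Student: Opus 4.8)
The plan is to argue by contradiction: suppose every maximal clique has at least two islands, and derive a violation. First I would fix an arbitrary NHCA extension $\calR$ of $\calR'$ (which we may assume exists, since otherwise the statement is vacuous — although note the lemma as stated does not presuppose extendability; more carefully, if $\calR'$ is not extendable the subsequent algorithm rejects, so in practice one invokes this lemma only in the extendable case, and I would make that assumption explicit at the start). Having done that, each region $\Reg(C)$ consists of at least two islands separated by at least two gaps, and by Lemma~\ref{lem:region_equivalence} the distinct regions partition into equivalence classes, each class being a union of islands of the circle carved out by the predrawn endpoints.

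The key idea is to look at an island that is \emph{innermost} or \emph{minimal} with respect to containment of the \emph{gaps}. Concretely: among all gaps of all maximal cliques, pick a gap $J$ (of some clique $D$) that is inclusion-minimal. By Lemma~\ref{lem:singleGap}, for every maximal clique $C$ with $\Pre(C)\ne\Pre(D)$, the region $\Reg(C)$ lies entirely inside one gap of $D$; in particular the region of any clique whose clique point $\cp(C)\in\calR$ lands in $J$ has its whole region inside $J$. Now I would examine the clique $E$ whose region occupies an island \emph{adjacent} to $J$ — here I use that $D$ has at least two islands (its complement $\Reg^c(D)=J^c$ is then not all one piece, but the relevant point is that the island neighboring $J$ is a genuine island of $D$, closed on the side touching $J$). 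The plan is to show that if $D$ had more than one island then there is a strictly smaller gap somewhere, contradicting minimality of $J$ — intuitively, every additional island of $D$ forces an additional gap, and by combining the structure of nested gaps across different cliques (again via Lemma~\ref{lem:singleGap}) one of these gaps must be properly contained in $J$ or in another gap, violating the minimal choice.

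The main obstacle I anticipate is making the ``innermost gap forces a single island'' argument precise: Lemma~\ref{lem:singleGap} controls how \emph{regions} sit inside \emph{gaps}, but I also need to control how the gaps of one clique sit relative to the gaps of another. The clean way is probably to prove an auxiliary containment statement first — namely, that for two cliques $C,D$ with $\Pre(C)\ne\Pre(D)$, every gap of $C$ either contains $\Reg(D)$ or is contained in a single gap of $D$ — and then run a minimality argument on the finite poset of all gaps ordered by inclusion, showing a minimal element must be a gap of a clique with exactly one island (because a clique with $\ge 2$ islands has $\ge 2$ gaps, and one can locate yet-smaller structure inside). An alternative, possibly slicker, route: since the predrawn endpoints are finitely many, take a clique $D$ whose region $\Reg(D)$ is inclusion-\emph{maximal} among all regions; argue that its complement gaps are then inclusion-minimal, and a maximal region cannot be split into two islands without some other region ``interleaving'' it, contradicting Lemma~\ref{lem:region_equivalence} together with Lemma~\ref{lem:singleGap}. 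I would try the maximal-region approach first, as it most directly leverages the two lemmas already proved.
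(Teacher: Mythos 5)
Your plan has a genuine gap: it never uses the one property that actually drives the result, namely that $\calR'$ is itself a \emph{normal Helly} representation. Every tool you invoke (nonempty regions, Lemma~\ref{lem:region_equivalence}, Lemma~\ref{lem:singleGap}, inclusion relations among gaps) is purely combinatorial and remains valid for partial representations that are not normal, yet for those the statement can fail: take two predrawn arcs covering the whole circle whose intersection has two components, with no further vertices; the single maximal clique has two islands, all regions are nonempty, and both cited lemmas hold vacuously. So no argument assembled only from those lemmas can close, and assuming an extension exists does not help unless you actually use it (you only use its existence). Moreover, the concrete selection devices you propose do not do what you hope. ``Inclusion-maximal region'' is vacuous: by Lemma~\ref{lem:region_equivalence} distinct regions are disjoint, so every region is inclusion-maximal. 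And the key step of the minimal-gap route --- that a clique with at least two islands forces a strictly smaller gap inside the chosen minimal gap $J$ --- is unsubstantiated and false as a local claim: with predrawn arcs $R'(u)=[0,6]$, $R'(w)=[2.9,3.1]$ and non-predrawn vertices $x,y$ adjacent only to $u$, the inclusion-minimal gap $[2.9,3.1]$ belongs to the two-island clique $\{u,x\}$ and contains no smaller gap. (The lemma still holds there via $\{u,w\}$, but your mechanism does not find it, and you give no way to exploit the global hypothesis that \emph{all} cliques have two islands.)

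The paper's proof is much shorter and needs neither an extension nor any gap poset: choose $D$ maximizing $|\Pre(D)|$. If some point $p$ lay in $\Reg^+(D)\cap\Reg^-(D)$, then the predrawn vertices whose arcs contain $p$ would form a clique whose predrawn set strictly contains $\Pre(D)$, contradicting the choice of $D$; hence $\Reg(D)=\Reg^+(D)$. Since $\calR'$ is normal and Helly, the intersection of the predrawn arcs of a clique is nonempty and connected, so $\Reg(D)$ is a single island. Note in particular that extendability of $\calR'$ is not assumed (and restricting the lemma to the extendable case, as you suggest, weakens the statement unnecessarily); the maximality criterion is on $|\Pre(D)|$, essentially the opposite of the ``large region'' heuristic you propose, since more predrawn arcs means a smaller intersection that nothing can cut into.
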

\begin{proof}
Let $D$ be a maximal clique with $|\Pre(D)|$ maximal.
Then $\Reg^+(D)$ and $\Reg^-(D)$ are disjoint as otherwise there would
exist a point $p$ in $\Reg^+(D)\cap\Reg^-(D)$ with 
$\Pre(D)\subsetneq \{v\in V(G'): p \in R'(v)\}$.
This implies the existence of a (maximal) clique $C$ in $G$ with
$\Pre(D)\subsetneq \Pre(C)$ in contradiction to the choice of $D$. 

Since $\calR'$ is NHCA, this
yields that $\Reg(D) = \Reg^+(D)$ is connected. In other words, $D$ has a single island.
\end{proof}

For the rest of the section, let $D$ be a maximal clique with a single island. 
Let $p_D$ be a point in the middle of $\Reg(D)$
and let $\le$ be the linear order of points on the circle obtained by starting
at $p_D$. We consider a point $p$ to be \emph{on the left} of a point $q$ if we
have $p\le q$.

We define a partial order $\prec$ on the maximal cliques $\calC$ such that 
for each $C \in \calC\setminus\{D\}$ we have $D\prec C$ and 
for each $C, C' \in \calC\setminus \{D\}$ satisfying $\forall p\in
\Reg(C),q\in \Reg(C')\colon p < q$ we set $C\prec C'$.
Note that every linearization of a clique ordering of an extension of $\calR'$
that starts with $D$ extends $\prec$.
For any vertex $v$ let $M_v$ denote the set of maximal cliques containing $v$.

\begin{theorem}
  \label{the:NHCArepextChar}
  Let $G$ be a graph without universal vertices and let $\calR'$ be a partial
  normal Helly circular-arc representation of $G$.  
  There exists a normal Helly circular-arc
  representation of $G$ that extends $\calR'$ if and only if there exists a
  linear extension $<$ of $\prec$
 such that
  \begin{compactenum}
    \item  \label{itm:linCh0} For any pair of distinct maximal cliques $C\ne C'$ with $\Reg(C)=\Reg(C')$, the region $\Reg(C)$ is not a single point.
    \item \label{itm:linCh1} For every vertex $v$, the set $M_v$ is consecutive in $[<]$. 
    \item \label{itm:linCh2} For every universal pair $u$, $w$, the set $M_{u}\cap M_{w}$ is consecutive in $[<]$. 
    \item \label{itm:linCh4} For every gap $J$ of some $C \in \calC$, the set $S_J$ is consecutive in $[<]$. 
  \end{compactenum}
\end{theorem}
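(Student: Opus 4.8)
The plan is to prove Theorem~\ref{the:NHCArepextChar} by establishing the two implications separately, using the combinatorial machinery developed in Lemmas~\ref{lem:region_equivalence}--\ref{lem:singleIsland} together with Lemma~\ref{lem:nhcaG} (which characterizes NHCA graphs via clique orderings) and Lemma~\ref{lem:nhcaGuniversal} (which justifies restricting to graphs without universal vertices).

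\medskip
\noindent\emph{Necessity.} Suppose $\calR$ is an NHCA representation of $G$ extending $\calR'$. By the Helly property each maximal clique $C$ has a clique-point $\cp(C)$, and these points lie in cyclic order $\lhd$ along the circle. I would root this cyclic order at $\cp(D)$ (where $D$ is the maximal clique with a single island, fixed before the theorem statement); since $\cp(C)\in\Reg(C)$ for each clique (the clique-point lies inside every predrawn arc of $C$ and outside every predrawn arc not in $\Pre(C)$, by maximality of $C$ it cannot lie on a predrawn arc $R'(w)$ with $w\notin C$), the induced linear order on $\calC$ is a linear extension $<$ of the partial order $\prec$, because $\cp(D)=p_D$ sits in $\Reg(D)$ and the betweenness of regions forces the stated comparabilities. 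Property~(\ref{itm:linCh1}) is then exactly consecutivity of $M_v$, which holds because $R(v)$ is connected (it is an arc) and contains precisely the clique-points of $M_v$; property~(\ref{itm:linCh2}) holds because for a universal pair $u,w$ the set $R(u)\cap R(w)$ is connected by the normal property (this is exactly the argument in Lemma~\ref{lem:nhcaG}), and it contains the clique-points of $M_u\cap M_w$; property~(\ref{itm:linCh4}) is Lemma~\ref{lem:gapConsecutivity}; and property~(\ref{itm:linCh0}) holds because if $\Reg(C)=\Reg(C')$ were a single point then $\cp(C)=\cp(C')$, forcing $C=C'$ since distinct clique-points are distinct.

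\medskip
\noindent\emph{Sufficiency.} Conversely, given a linear extension $<$ of $\prec$ satisfying (\ref{itm:linCh0})--(\ref{itm:linCh4}), I would construct the representation explicitly. Place the clique-points on the circle in the cyclic order $[<]$, but with positions chosen to respect the partial representation: for each clique $C$ put $\cp(C)$ inside $\Reg(C)$, and do so consistently with $<$. The key subtlety here, and I expect \textbf{this to be the main obstacle}, is to show that the constraints (\ref{itm:linCh0}) and (\ref{itm:linCh4}) together guarantee that such a placement is actually realizable on the circle — i.e., that one can simultaneously honor the linear order $<$, keep each $\cp(C)$ in its region $\Reg(C)$, and, when $\Reg(C)=\Reg(C')$ is not a single point, still separate $\cp(C)$ from $\cp(C')$. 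Here I would argue that cliques with equal regions can be placed at distinct points within that common region in the order prescribed by $<$ (possible precisely because the region is not a single point, by~(\ref{itm:linCh0})), and that cliques with disjoint regions are automatically separated; the gap-consecutivity~(\ref{itm:linCh4}) and the fact that every $\Reg(C)$ lies in a single gap of every other clique (Lemma~\ref{lem:singleGap}) ensure there is no conflict between the cyclic order of regions on the circle and the order $[<]$. Having placed the clique-points, I define $R(v)$, for each (non-universal) vertex $v$, to be the smallest arc containing exactly the clique-points of $M_v$; this is well-defined and connected by~(\ref{itm:linCh1}). Then exactly as in Lemma~\ref{lem:nhcaG} this yields a Helly circular-arc representation of $G$, normality follows from~(\ref{itm:linCh2}) via the universal-pair argument, and by construction each predrawn arc is realized: one checks that $R(v)$ for a predrawn $v$ has its endpoints forced to the original positions because the clique-points just inside/outside $R'(v)$ pin them down. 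The last point — that the constructed $R(v)$ coincides with $R'(v)$ and not merely an arc with the same clique-point incidences — is where I would spend care, handling the boundary behavior of islands versus gaps (open versus closed endpoints) so that the "smallest arc" is squeezed onto exactly the predrawn endpoints.

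\medskip
\noindent In carrying this out I would lean on the observation already recorded in the text that every linearization of a clique ordering of an extension starting with $D$ extends $\prec$, so that the correspondence between extensions and valid linear extensions $<$ is tight; the heart of the proof is the geometric realizability step in the sufficiency direction, reconciling the abstract order $[<]$ with the concrete positions dictated by the regions.
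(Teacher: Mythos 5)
Your necessity direction matches the paper's argument and is essentially complete. The trouble is in the sufficiency direction, where the two points you defer are exactly the ones that carry all the weight. First, realizability of the placement is not a routine consequence of properties~\ref{itm:linCh0} and~\ref{itm:linCh4} together with Lemma~\ref{lem:singleGap}: a region $\Reg(C)$ is in general a union of several islands, and islands of different regions may interleave along the circle, so ``cliques with disjoint regions are automatically separated'' does not by itself reconcile the abstract order $[<]$ with the geometry. The paper proves realizability by a concrete greedy scheme: fix $\varepsilon$ as a $\frac{1}{2n+1}$-fraction of the shortest nontrivial island, set $\cp(C_1)=p_D$, and place each $\cp(C_i)$ at (essentially) the leftmost point of $\Reg(C_i)$ at least $\varepsilon$ to the right of $\cp(C_{i-1})$. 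If this fails for a first $C_i$, one takes the first $C_j$ with $j<i$ placed to the right of all of $\Reg(C_i)$, invokes Lemma~\ref{lem:singleGap} to get a gap $J$ of $C_j$ with $\Reg(C_i)\subseteq J$, shows via the island of $C_j$ bordering $J$ on the left that $\cp(C_{j-1})\in J$, and then the pattern $D<C_{j-1}<C_j<C_i$ with $C_{j-1},C_i\in S_J$ and $D,C_j\notin S_J$ contradicts property~\ref{itm:linCh4}; property~\ref{itm:linCh0} enters only to rule out a previous clique with the same region exhausting a one-point region. None of this appears in your sketch; the assertion that ``there is no conflict'' is precisely the statement that has to be proven.

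Second, your handling of the predrawn arcs is incorrect as written. You define $R(v)$ for \emph{every} vertex, including predrawn ones, as the smallest arc containing exactly the clique points of $M_v$, and claim its endpoints are ``forced to the original positions.'' They are not: a predrawn arc may extend well past the extreme clique points of $M_v$ (its endpoints can lie in the interior of a gap, away from every clique point), so your constructed arc is in general a proper sub-arc of $R'(v)$ and the resulting representation does not extend $\calR'$; no amount of care about open versus closed island boundaries can squeeze the minimal arc out to arbitrary predrawn endpoints. The paper instead keeps the predrawn arcs verbatim and constructs minimal arcs only for $u\notin V(G')$, then checks, using that each $\cp(C)$ lies in $\Reg(C)$ (hence inside every predrawn arc of $\Pre(C)$ and outside every other predrawn arc), that two arcs of the combined family intersect exactly when they share a maximal clique or are both predrawn and adjacent; this yields the extension and the Helly property, and together with property~\ref{itm:linCh2} the normality. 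With that repair and with the greedy realizability argument supplied, your outline becomes the paper's proof; as it stands, both essential steps are missing or wrong.
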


\begin{proof}
We first show that, if there is an NHCA-extension of $\calR'$, then these properties are satisfied.
We obtain $<$ as the linearization of a clique ordering of an extension of $\calR'$
starting with $D$ where the clique point of $D$ is $p_D$. 
By the construction of $\prec$, the order $<$ is a linear extension of $\prec$.
By Lemmas~\ref{lem:nhcaG} and \ref{lem:gapConsecutivity}, we obtain
properties~\ref{itm:linCh1},~\ref{itm:linCh2} and~\ref{itm:linCh4}.
Property~\ref{itm:linCh0} is necessary, since no two clique points can be
placed at the same point.
    
For the opposite implication, let $< = C_1,C_2,\dots,C_k$ be a linear extension of $\prec$ such that
properties~\ref{itm:linCh0},~\ref{itm:linCh1},~\ref{itm:linCh2} and~\ref{itm:linCh4} are
satisfied. We show that each $C_i\in \calC$ can be assigned its clique point $\cp(C_i)\in \Reg(C_i)$ such that
$\cp(C_j)<\cp(C_i)$ whenever $j<i$.

Let $\varepsilon>0$ be the $\frac{1}{2n+1}$-fraction of the length of the
shortest nontrivial island. This allows to draw all new endpoints at
distance at least $\varepsilon$ but still within any chosen island or side of
island $\Reg(D)$.
For $C_1=D$ we place $\cp(C_1)$ on $p_D$. In a greedy way, when the location of the clique points $\cp(C_1),\dots,\cp(C_{i-1})$ is settled, we determine the set $P$ of feasible points for $\cp(C_i)$ that is 
$P=\Reg(C_i) \cap \{p: p> \cp(C_{i-1})+\varepsilon\}$. If $P$ has minimum, we place $\cp(C_i)$ there, otherwise
we put $\cp(C_i)$ at $\inf(P)+\varepsilon$.
We argue that such choice always exists.

\begin{figure}[t]
\centering
\includegraphics[scale=1,page=1]{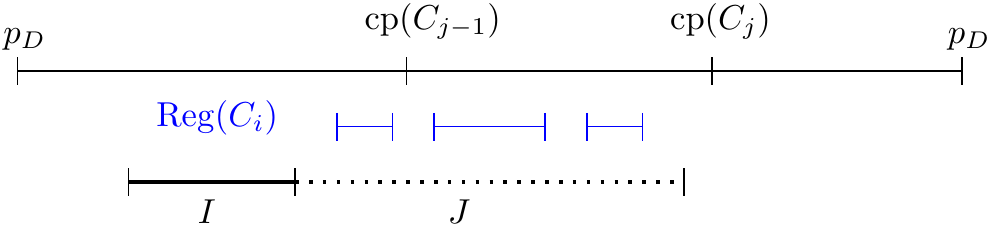}
\caption{Argument for the existence of a placement of $\cp(C_i)$.}
\label{fig:nhca_char}
\end{figure}

Assume for a contradiction that $C_i$ is the first maximal clique in the order $<$ whose 
clique point $\cp(C_i)$ cannot be properly placed. 
Note that a clique $C\ne C_i$ can only have an island $I_C$ consisting of a
single point if we have $I_C=\Reg(C)$, since $\Reg^+$ is a closed arc and all islands
separated by gaps within $\Reg^+$ have an open end.  Hence, by the choice of
$\eps$, we only place a clique $C$ at the very last point of $\Reg(C)$, if
$\Reg(C)$ consists of a single point. With property~\ref{itm:linCh0}, this can
not happen if $\Reg(C) = \Reg(C_i)$.
Therefore, one clique point must be placed to the right of $\Reg(C_i)$ before placing $\cp(C_i)$.
We identify the first maximal clique $C_j, j<i$ that is
placed to the right of all points in $\Reg(C_i)$. Since $\cp(C_j)\notin \Reg(C_i)$, 
we have that $\Pre(C_i)\ne\Pre(C_j)$.
By Lemma~\ref{lem:singleGap}, the maximal clique
$C_j$ has a gap $J$ with $\Reg(C_i)\subseteq J$, see Fig.~\ref{fig:nhca_char}.
    
Consider the neighboring island $I$ of $C_j$ to the left of $J$. Since
$\cp(C_j)$ was not placed on $I$, the clique point $\cp(C_{j-1})$ has
been placed to the right of $I$. By the choice of $C_j$, we have that
$\cp(C_{j-1})$ is not placed to the right of $J$ and thus $\cp(C_{j-1})\in J$. 
Since $\cp(C_j)$ has been placed to the right of $\Reg(C_i)$ and thus to
the right of $J$, we have $p_D\not\in J$ and thus $D\not\in S_J$.
With $D<C_{j-1}< C_j< C_i$, where $C_{j-1}, C_i\in S_J$ and $D,C_j\not\in S_J$
we get a contradiction with the property~\ref{itm:linCh4}, since $S_J$ is not consecutive in $[<]$.

From the placement of clique points, 
we obtain a NHCA-extension of $\calR'$ so that for every not yet represented 
vertex $u \notin V(G')$ we choose $R(u)$ to be the minimal arc containing 
exactly the clique points of the maximal cliques from $M_u$.

By the proof of Lemma~\ref{lem:nhcaG}, we obtain from
properties~\ref{itm:linCh2},~\ref{itm:linCh4} that this results in an NHCA representation of $G$
extending $\calR'$, since the maximal clique points are placed
correctly with regards to both predrawn arcs and new arcs, and moreover, two arcs 
intersect if and only if they are predrawn or share a maximal clique.
\end{proof}

\begin{algorithm}[t!]
\KwData{A graph $G$ and a partial representation $\calR'$.}
\KwResult{A \nhca{} representation $\calR$ of $G$ extending $\calR'$ or a message that none exists.}
\Begin{
\lIf{$G$ has a universal vertex}{resolve $\RepExt(\intr)$ instead}
\Else{
Determine the set of maximal cliques $\calC$ of $G$\;
\ForEach{$C\in \calC$}{
Determine $\Reg(C)$\;
\lIf{$\Reg(C)=\emptyset$}{\Return{$\calR'$ has no extension}}
}
$\varepsilon:=\frac{1}{2n+1}\min\{|I|,I \text{ is a non-trivial island}\}$\;
Find $C_1\in\calC$ with a single island\;
\lIf{such $C_1$ does not exist}{\Return{$\calR'$ has no extension}}
 Set $\cp(C_1)$ to the middle of $\Reg(C_1)$\;
Determine the partial order $\prec$\;
Build a PC tree $T$ on $\calC$ capturing the constraints stated in  Theorem~\ref{the:NHCArepextChar}\;
\lIf{such $T$ does not exist}{\Return{$\calR'$ has no extension}}
Solve $\reorder(T, C_1, \prec)$ to get the order $C_1<C_2<\dots<C_k$\;
\lIf{such order $<$ does not exist}{\Return{$\calR'$ has no extension}}
\For{$i=2$ \KwTo $k$}{ 
 $P:=\Reg(C_i) \cap \{p: p> \cp(C_{i-1})+\varepsilon\}$\;
 \lIf {$\min P$ exists}{$\cp(C_i):=\min P$}
 \lElse{$\cp(C_i):=\inf(P)+\varepsilon$}
}
\lForEach{$u \notin V(G')$}{draw $R(u)$ to cover exactly $\{\cp(C): C\in M_u\}$} 
\Return{$\calR$}
}
}
\caption{The algorithm for the $\RepExt(\nhcar)$ problem.}\label{alg:nhca}
\end{algorithm}

\nhcapoly*
\begin{proof}
The correctness of Algorithm~\ref{alg:nhca} follows from the already mentioned arguments.
For the computational complexity of the more complex steps note that:
\begin{compactitem}
\item
Line 2: $\RepExt(\intr)$ can be solved in linear time~\cite{klavik2013linear}.
\item
Line 4: We run the linear time recognition algorithm for Helly circular-arc graphs~\cite{lin2006characterizations} on $G$ and read its at most $n$ maximal cliques from any of its representation.

\item
Lines 5--8: The regions can be obtained in time $\calO(n)$ by traversing the circle once.
During the traversal two consecutive predrawn endpoints specify possible islands which can be assigned to appropriate maximal cliques.
\item
Line 12: The comparable pairs of the partial order $\prec$ can also be determined during the traversal in Steps 5--8.
\item
  Line 13: The $\calO(n^2)$ constraints can be computed in $O(n^3)$ time. The
  construction of the PC-tree from~\cite{hsu2003pc} also runs in $\calO(n^3)$
  time.
\item
Line 16: The $\reorder(T, C_1, \prec)$ problem can be solved in $\calO(n^2)$ time by Lemma~\ref{lem:reordering_linear}.
\end{compactitem}
\end{proof}

In light of the hardness result from Theorem~\ref{thm:hca_npc}, it is unlikely that this
result can be generalized to $\RepExt(\hcar)$. However, the hardness result for
Helly representations crucially relies on predrawn arcs sharing endpoints.
Indeed, if all predrawn arcs have distinct endpoints, the problem can be solved
in a similar fashion.

\hcapoly*
\begin{proof}[Proof sketch]
  We characterize extendable $\hca$ instances as in  
  Theorem~\ref{the:NHCArepextChar} (see Theorem~\ref{the:HCArepextChar})
  with Lemma~\ref{lem:gavril_hca} instead of Lemma~\ref{lem:nhcaG}.  Since the
  predrawn arcs have pairwise distinct endpoints, we have no islands consisting
  of a single point.
  Note that Lemma~\ref{lem:singleIsland} no longer applies and thus the
  placement of $p_D$ cannot be chosen freely.  Instead, observe that
  every maximal clique $D$ in a Helly circular-arc representation of $\calR'$
  has a clique point $\cp(D)$ that can be chosen as~$p_D$.
  Thus, we choose an arbitrary clique as $C_1$ and apply the remaining
  procedure for every island $I$ of $C_1$, choosing $\cp(C_1)\in I$.
  In contrast to our method for $\RepExt(\nhcar)$
  we have no special procedure for universal vertices.
\end{proof}

For details see Section~\ref{sec:HCA}.



\section{Details fo PC-trees and The Reordering Problem}
\label{sec:PQtree}

 \reorderingLinear*
 \reorderingLinearProof

 \section{Detailed Hardness Proofs}
 \subsection{$\RepExt(\car)$ and $\RepExt(\hcar)$}
 \label{ssec:CARnp}

\hcanpc*
\begin{proof}
We reduce the $3$-\Partition\ problem~\cite{garey1975complexity}. 
Let $S=\{s_1,\dots, s_{3n}\}$ be its instance, i.e., a collection of $3n$
integers summing up to $nt$ for some positive integer $t$
such that each satisfies $\frac{t}4<s_i<\frac{t}2$, where the goal is to
partition $S$ into $n$ disjoint subsets whose sum is always $t$. Note that the
size constraints on $s_i$'s ensure that every subset, which sums exactly to
$t$, is a suitable triple.

From $S$ and $t$ we construct a graph $G$ as follows: For the set of vertices
we choose
$V(G)=\{u_1,\dots,u_{(t+1)n},v_1,\dots,v_n,w_1,\dots,w_{tn},z_1,\dots,z_{3n}\}$,
where $u_1,\dots,u_{(t+1)n}$ are universal vertices; vertices
$v_1,\dots,v_n,w_1,\dots,w_{tn}$ form an independent set; and each $z_i$ is
connected to all of $u_1,\dots,u_{(t+1)n}$ and    $s_i$ private vertices among
$w_1,\dots,w_{tn}$.


Observe that this graph has exactly $(t+1)n$ maximal cliques, each identified
by a vertex from $v_1,\dots,v_n,w_1,\dots,w_{tn}$. Moreover, each $z_i$ belongs
to exactly $s_i$ maximal cliques.

Let $p_1,\dots, p_{(t+1)n}$ be distinct points of a circle ordered in a
clockwise direction.
For each $u_i$ let $R'(u_i)=[p_i,p_{i-1}]$. For each $v_i$, let
$R'(v_i)=[p_{(t+1)i},p_{(t+1)i}]$, (alternatively these degenerate intervals
could be extended to nondegenerate by expanding them by $\frac{1}{3}$ distance
of two nearest points $p_i,p_{i+1}$). The partial representation $\calR'$ is
the just described representation of the subgraph induced by
$\{u_1,\dots,u_{(t+1)n},v_1,\dots,v_n\}$, see Fig.~\ref{fig:hca_npc} for an
example.

When $G$ allows a representation $\calR$ extending $\calR'$, then each maximal
clique with some $w_i$ will be placed on some point $p_j$ where $j\not\equiv 0
\pmod{t+1}$. Moreover clique points corresponding to some $z_i$ must be
consecutive and do not correspond to any other $z_{i'}$ or any $v_j$.

Therefore the set system $\{\{s_i:
R(z_i)\cap\{p_{(t+1)j+1},\dots,p_{(t+1)j+t}\}\ne \emptyset \}:
j\in\{1,\dots,n\}\}$ yields the desired partition of $S$. A construction of the
representation from the set system is straightforward.

To argue that this reduction works also for \cNP-hardness of $\RepExt(\car)$
observe that we did not put any condition on the representation to enforce the
Helly property, it is guaranteed by the partial representation and the
construction itself. 

This \cNP-hardness reduction for $\RepExt(\car)$ can be altered to the situation 
when the predrawn arcs are required to have pairwise distinct endpoints just by 
shrinking each arc $R'(t_i)$ by moving its tail by $\frac{1}{5}$ distance of
the two nearest points $p_i,p_{i+1}$. The argument is the same, only to
guarantee that each $w_i$ is adjacent to all universal vertices $u_1,\dots,
u_{(t+1)n}$, the corresponding arc $R(w_i)$ would have to cover not only a
single point $p_j$ for some $j$, but instead the whole arc to the nearest tail
in the clockwise direction, formally the arc $[p_j,R'(u_j)_t]$. Observe that by
this modofication we loose the Helly property as the universal vertices share
no common point.
\end{proof}

\subsection{$\RepExt(\ucar)$}

We first argue the \cNP-hardness of Theorem~\ref{thm:uca_npc} by showing a
reduction from the $3$-$\Partition$ problem defined already in the proof of
Theorem~\ref{thm:hca_npc}.

Then we provide a linear programming argument for the membership in the class
\cNP.

\ucanpc*
\begin{proof}
For a given instance of $3$-$\Partition$, we construct a unit circular-arc
graph $G$ and its partial representation $\calR'$. For technical reasons, we
assume that $t \geq 8$.

Let $P_{2\ell}$ be a path of length $2\ell$. There exists a unit circular-arc
representation $P_{2\ell}$ such that it spans $\ell + \eps$ units, for some
$\eps > 0$. To see this, note that $P_{2\ell}$ has two independent sets of size
$\ell$ and each of this independent sets needs at least $\ell + \eps$. Let $a,
b, c$ be positive integers such that $a + b + c = t$. It follows that the
disjoint union of $P_{2a}$, $P_{2b}$, and $P_{2c}$ has a representation such
that it spans $t + \eps$ units, for some $\eps > 0$, and therefore, it can be
fit into $t+1$ units.

\begin{figure}[t]
\centering
\includegraphics[scale=0.85]{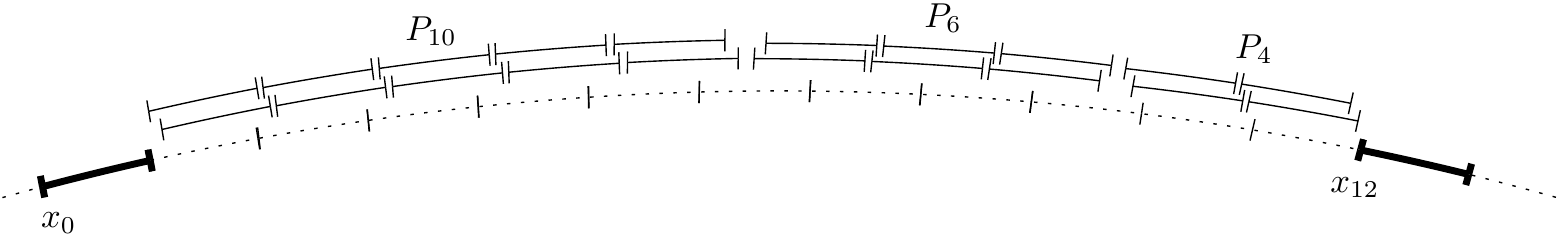}
\caption{Example of packing of three paths corresponding to the triple
$5+3+2=10$. The predrawn arcs are depicted in bold.}
\label{fig:unit_proof}
\end{figure}

Let $x_0, \dots, x_{n(t+2)-1}$ be points of the circle that divide it into
$n(t+2)$ equal parts, i.e., vertices of a regular $n(t+2)$-gon.  The graph $G$
is a disconnected graph consisting of $4n$ connected components.  For each
$s_i$, we take the path $P_{2s_i}$. We further add an isolated vertex $v_j$,
for $j = 0, \dots, n-1$. The partial representation $\calR'$ is the collection
$\{R(v_j) : j = 0,\dots, n-1\}$, where $R(v_j)$ is the arc of the circle from
$x_{j(t+2)}$ to $x_{j(t+2)+1}$ in the clockwise direction.

The predrawn arcs $R(v_0), \dots, R(v_{n-1})$ split the circle into $n$ gaps,
where each gap has exactly $t+1$ units. By the discussion above, if the $s_i$'s
can be partitioned into $n$ triples such that each triple sums to $t$, then a
representation of the disjoint union of the paths corresponding to a triple can
be placed in one of the $n$ gaps, see Fig.~\ref{fig:unit_proof}. If the partial
representation $\calR'$ can be extended, then we a have partition of the
$s_i$'s into $n$ triples such that each triple sums to $t$.

The certificate for the membership in \cNP of an instance $G$, $\calR'$ is a
linear order of $\prec$ on $V_G$ corresponding to the cyclic ordering of the
intervals together with a subset of edges $E'\subseteq E_G$.
Without loss of generality we assume that the predrawn arcs are
$R(v_1),\dots,R(v_k)$.

The following linear program, where for each $i\in\{1,\dots,n\}$ the variable
$x_i$ corresponds to the tail coordinate $R(v_i)_t$ has a feasible solution if
and only if $\calR'$ can be extended to a unit interval representation $\calR$
where its cyclic ordering of intervals (as the equivalence class) contains
$\prec$, and where the set $E'$ specifies pairs of adjacent vertices with tail
coordinates at least $\ell-1$ apart. 

$$
\begin{array}{rcll}
x_i &=& R(v_i)_t & \text{ for all } i\in\{1,\dots,k\} \\
x_i &\ge& 0 &  \text{ for all } i\in\{k+1,\dots,n\} \\
x_i &<& \ell & \text{ for all } i\in\{k+1,\dots,n\} \\
x_j - x_i &\ge& 0 & \text{ if } i\prec j \text{ and } (v_i,v_j)\in E_G \setminus E' \\
x_j - x_i &\le& 1 &\text{ if } i\prec j \text{ and } (v_i,v_j)\in E_G \setminus E' \\
x_j - x_i &\ge& \ell-1 &\text{ if } i\prec j \text{ and } (v_i,v_j)\in E' \\
x_j - x_i &>& 1 &\text{ if } i\prec j \text{ and } (v_i,v_j)\notin E_G \\
x_j - x_i &<& \ell-1 &\text{ if } i\prec j \text{ and } (v_i,v_j)\notin E_G \\
\end{array}
$$

\qqed
\end{proof}

 \section{Additional Details for Normal Proper Helly Circular Arc Graphs}
  \label{sec:DetailsNPHCA}

  \nphcaChar*
  \nphcaCharProof

  \section{Proper Helly Circular-Arc Graphs}
  \label{sec:PHCA}
Note that even though all proper Helly circular-arc graphs allow also a normal
proper Helly circular-arc representation~\cite{lin2013normal}, we cannot use
Algorithm~\ref{alg:nphca} directly, as the given partial representation may not
admit a normal extension.  We can, however, still use the same machinery.

\begin{lemma}[{\cite[Lemma~1]{lin2013normal}}]
  \label{lem:non-normal-universal}
  Let $\calR$ be a proper circular-arc representation of a graph $G$. Any two
  vertices whose arcs are in non-normal position are both universal, i.e.
  adjacent to all vertices of $G$.
\end{lemma}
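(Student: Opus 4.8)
The plan is to argue by contradiction, using that two arcs in non-normal position jointly cover the entire circle. First I would pin down the geometry: if $R(u)$ and $R(v)$ intersect but $R(u)\cap R(v)$ is disconnected, then $R(u)\cup R(v)$ must be the whole circle, because two arcs can meet in two components only when their union is everything (cutting the circle at a point missed by both arcs turns them into subintervals of a line, whose intersection is connected or empty). Moreover, since $\calR$ is proper, neither arc is the whole circle: if $R(u)$ were, then $R(u)\cap R(v)=R(v)$ would be connected, contradicting the non-normal position. Hence both $R(u)$ and $R(v)$ are proper arcs, $R(u)\neq R(v)$, and $R(u)\cap R(v)\neq\emptyset$, i.e., $uv\in E(G)$.

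Next I would show that $u$ is universal. Suppose not, and pick a vertex $w$ with $R(w)\cap R(u)=\emptyset$. Then $R(w)$ lies in the open complement $R(u)^c$. Since $R(u)\cup R(v)$ is the whole circle, every point outside $R(u)$ lies in $R(v)$, so $R(u)^c\subseteq R(v)$ and therefore $R(w)\subseteq R(v)$. As $\calR$ is proper, no arc is properly contained in another, so this forces $R(w)=R(v)$. But then $R(w)\cap R(u)=R(v)\cap R(u)\neq\emptyset$, contradicting the choice of $w$. Hence $u$ is universal, and the symmetric argument with the roles of $u$ and $v$ exchanged shows $v$ is universal as well.

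The argument is a short, case-free chain of inclusions, so there is no genuine obstacle. The only place that needs a little care is the first paragraph: one must make sure that being in non-normal position really does force $R(u)\cup R(v)$ to be the whole circle with both arcs proper — this is precisely where the properness hypothesis first enters, ruling out the degenerate possibility that one of the two arcs equals the entire circle (in which case the pair would actually be in normal position).
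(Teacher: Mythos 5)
Your proof is correct and follows essentially the same route as the paper's: since the two arcs in non-normal position jointly cover the circle, any non-neighbor of $u$ would have its arc contained in $R(v)$, contradicting properness (you just spell out the covering argument and the equality case $R(w)=R(v)$ that the paper leaves implicit). One trivial imprecision: it is the non-normal position itself, not properness, that rules out either arc being the whole circle, as your own parenthetical remark already concedes.
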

\begin{proof}
  Let $u$, $v$ be two vertices with arcs in non-normal position.  Then a
  non-neighbor of $u$ would have to be represented as a proper sub-arc of
  $R(v)$ (and vice-versa) in contradiction to $\calR$ being proper.
\end{proof}

\begin{lemma}
  \label{lem:phca-nphca}
  A yes-instance of $\RepExt(\phcar)$ is a yes-instance of $\RepExt(\nphcar)$
  if and only if no two predrawn arcs are in non-normal position.
\end{lemma}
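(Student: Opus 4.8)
The forward direction will be immediate: from any \emph{normal} proper Helly extension $\calR$ of $\calR'$, no two arcs of $\calR$ — hence no two predrawn arcs, which are arcs of $\calR$ — lie in non-normal position.

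For the converse I plan to start from an arbitrary proper Helly extension $\calR$ of $\calR'$ (which exists by hypothesis) and surgically turn it into a normal one without moving any predrawn arc. The key structural input is Lemma~\ref{lem:non-normal-universal}: since $\calR$ is proper, every pair of arcs in non-normal position must consist of two universal vertices. So if $G$ has no universal vertex, $\calR$ is already normal and there is nothing to do; otherwise I fix one \emph{template} universal vertex $u_0$, taking it to be predrawn whenever some universal vertex is predrawn. Because $u_0$ lies in every maximal clique, $R(u_0)$ contains all clique points of $\calR$. I then define $\calR''$ by keeping $R(v)$ for every vertex that is non-universal or predrawn, and setting $R''(v):=R(u_0)$ for every universal vertex $v$ that is not predrawn.

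The bulk of the work is then to verify that $\calR''$ is a normal proper Helly representation of $G$ extending $\calR'$. Extension is clear, since predrawn arcs are untouched. That $\calR''$ still represents $G$ and is still Helly will follow from noting that the maximal cliques of $G$ and their clique points from $\calR$ remain valid: every copy of $R(u_0)$ contains all clique points, every universal vertex is adjacent to everything, and intersections among the unmodified arcs are unchanged. Properness will be inherited because each modified arc is literally the arc $R(u_0)$, which in the proper representation $\calR$ neither properly contains nor is properly contained in another arc. For normality I will apply Lemma~\ref{lem:non-normal-universal} to the (already verified) proper representation $\calR''$: a bad pair would again have to be two universal vertices, and in $\calR''$ two universal arcs are either both equal to $R(u_0)$ (so their intersection is a single arc, hence connected), or at least one of them is predrawn — and then, since some universal vertex is predrawn, the template $u_0$ is predrawn too, so the pair consists of two predrawn arcs and is not in non-normal position by hypothesis.

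The main obstacle, and the place where the hypothesis really bites, will be this last normality check: if $u_0$ were chosen carelessly (say an arbitrary universal vertex even when a predrawn one is available) a predrawn universal arc could end up in non-normal position with the inserted copies, so the rule ``$u_0$ predrawn whenever possible'' is essential. Everything else should be a routine verification using only that arcs are closed and that clique points witness the Helly property.
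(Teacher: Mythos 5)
Your proof is correct and follows essentially the same route as the paper: the forward direction is immediate, and for the converse the paper likewise invokes Lemma~\ref{lem:non-normal-universal} and reassigns all non-predrawn universal vertices the arc of a single template universal vertex (predrawn whenever one exists), so that any remaining non-normal pair would force a non-normal pair of predrawn arcs. Your additional verification that the modified representation is still a proper Helly representation of $G$ is left implicit in the paper but is a welcome bit of extra care.
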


\begin{proof}
  Clearly, if the partial representation $\calR'$ contains two
  predrawn arcs in non-normal position, then there is no normal
  extension of $\calR'$.

  Conversely, assume that there is no such pair and consider a
  \phcar-extension $\calR$ of $\calR'$.
  By Lemma~\ref{lem:non-normal-universal}, only universal vertices can have arcs in
  non-normal position.
  If $\calR'$ prescribes no universal vertex, we modify~$\calR$ so
  that all universal vertices are represented by the same arc
  $R(v)$ for some arbitrary universal vertex $v$.  If $\cal R'$
  prescribes a universal vertex $v$, we modify~$\calR$ by representing
  all unprescribed universal vertices by the same arc $R(v)$.
  
  After this modification, if there still exists a non-normal pair of
  arcs, there also exists a non-normal pair of prescribed arcs.  Since
  such a pair does not exist by assumption, it follows that the
  modified representation is an $\nphcar$-extension of $\calR'$.
\end{proof}

\phcared*
\begin{proof}
  Without loss of generality we assume that $G$ is not complete as otherwise 
  an extension can always be obtained by duplication of any predrawn arc. 

  Let $\calR'$ be a partial representation of a graph $G\in \phca$.
  Without loss of generality we also assume that predrawn arcs are distinct
  as otherwise the identical arcs must correspond to twins and 
  it suffices to keep only one.

  If there is no pair of prescribed arcs in non-normal position,
  Lemma~\ref{lem:phca-nphca} implies that the problem can be solved
  with Theorem~\ref{thm:nphca_poly}.
  Hence assume that there exists a pair $R'(u),R'(v)$ of prescribed arcs in
  non-normal position that by Lemma~\ref{lem:non-normal-universal} must
  correspond to universal vertices.
  
Denote by $A$ and $B$ the two disjoint circular arcs whose union is the
intersection of $R'(u)$ and $R'(v)$ and by $C$ and $D$, resp., the arcs formed
by the points of the circle that belong only to $R'(u)$ but not $R'(v)$ and
vice-versa. The two closed arcs $A$, $B$ together with the two open $C$ and $D$
cover the whole circle, see Fig.~\ref{fig:phca}a.

\begin{figure}[t]
\centering
\includegraphics[scale=1,page=1]{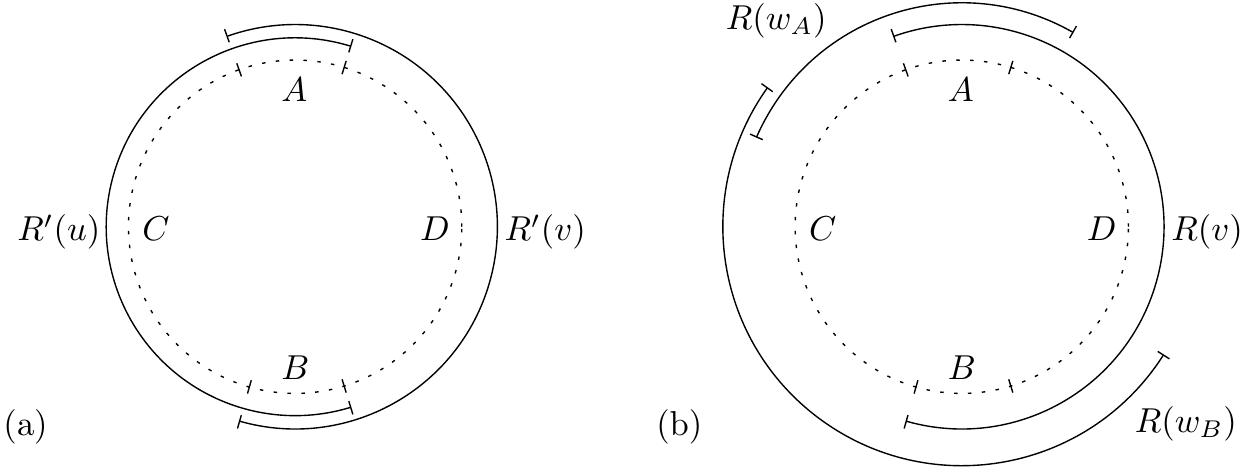}
\caption{(a) Partition of the circle into sections $A, B, C$ and $D$. (b) Violating
the Helly property by vertices for $v, w_A$ and $w_B$. }
\label{fig:phca}
\end{figure}

In order to characterize all graphs that allow a representation extension,
consider now any such graph $G$ together with its proper Helly representation
$\calR$. 

Since $\calR$ is proper, any arc $R(w)\in\calR$ distinct from $R(v)$ 
must intersect $C$ to guarantee $R(w)\not\subset R(v)$.
With an analogous condition on $D$ we get that either 
$A\subseteq R(w)$ or $B\subseteq R(w)$.

Then let $V_A=\{w_A\in V(G): A\subseteq R(w_A)\}$ and $V_B=\{w_B\in V(G): B\subseteq
R(w_B)\}$ be the sets of vertices whose arcs contain $A$ and $B$, respectively.
By the definition $V_A$ and $V_B$ are two cliques covering $V(G)$.
Since $G$ is not complete, both $V_A$ and $V_B$ contain a non-universal vertex.
Any arc intersecting $A$ and $B$ contains either $C$ or $D$, thus due to proper
and distinct positions of arcs we have $V_A \cap V_B = \{u,v\}$.

Observe that when the arc representing some $w_A\in V_A$ intersects $B$ (or vice
versa), then $w_A$ is universal --- it would intersect all vertices from $V_B$ on
$R(w_A)\cap B$. On the other hand, we claim that every arc of a universal
vertex  intersects both $A$ and $B$. Assume by a contradiction that such arc
$R(w_A)$ would intersect only $A$.
By Lemma~\ref{lem:non-normal-universal}, it is in normal position with respect
to the arc $R(w_B)$ of some non-universal vertex $w_B\in V_B$. Without loss of
generality the intersection 
of these two arcs is in $C$. Then as depicted in Fig.~\ref{fig:phca}b the arcs
of $w_A,w_B$ and $v$ violate the Helly property, a contradiction. In summary,
each universal arc  includes one of $A$ or $B$, one of $C$ or $D$ and
intersects the remaining two.

By identical arguments in slightly altered context we show that after pruning
all universal vertices from $G$ the resulting graph $G'$ is disjoint union of
two cliques $V_A'=V_A\setminus V_B$ and $V_B'=V_B\setminus V_A$.
Assume for contradiction that $G'$ contains two adjacent vertices 
$w_A\in V_A'$ and $w_B\in V_B'$.
Since $w_A$, $w_B$ are not universal, $R(w_A)$, $R(w_B)$ must be in normal position by
Lemma~\ref{lem:non-normal-universal}.
Therefore, the intersection $R(w_A)\cap R(w_B)$ is fully contained either in $C$ or in
$D$. In the first case we get a contradiction as the arcs $R(v)$, $R(w_A)$,
$R(w_B)$ violate the Helly property, see again Fig.~\ref{fig:phca}b. In the
other case, the arcs $R(u)$, $R(w_A)$ and $R(w_B)$ also violate the Helly
property. 

By Lemma~\ref{lem:non-normal-universal}, any 
arc $R(w_A)$ for a non universal vertex $w_A\in V_A'$ contains exactly one
endpoint of the each arc $R(x)$ representing a universal vertex.
The other endpoint of $R(x)$ must then belong to $R(w_B)$ of each $w_B\in V_B'$. 
This way we get a partition of the endpoints of universal arcs into two consecutive sets.
We indeed describe this partition more precisely: The arc $R(w_A)$ contains 
$\{h_i: C\subset R(u_i)\} \cup \{t_i: D\subset R(u_i)\}  $, 
while  
$\{t_i: C\subset R(u_i)\} \cup \{h_i: D\subset R(u_i)\} \subset R(w_B)$.
When $C\subset R(u_i)$ then $h_i$ belongs either to $A$ or $D$. The first case
is enforced immediately and in the latter $t_i\in B$. Hence we also have to put
$t_i$ into $R(w_A)$.

\begin{figure}[t]
\centering
\includegraphics[scale=1,page=2]{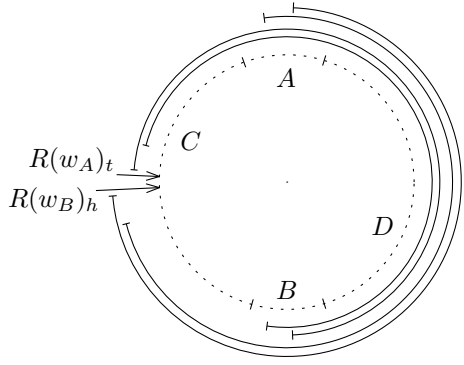}
\caption{Position of $R(w_A)_t$ and $R(w_B)_h$ with respect to 
universal arcs containig $D$.}
\label{fig:phca-2}
\end{figure}

This necessary condition is also sufficient, namely a partial representation
$\calR'$ of a non-complete graph proper-Helly circular graph $G$ with arcs
$R(u),R(v)$ in not-normal position allows an extension if and only if the sets
$\{h_i: C\subset R'(u_i)\} \cup \{t_i: D\subset R'(u_i)\}$, 
and   
$\{t_i: C\subset R'(u_i)\} \cup \{h_i: D\subset R'(u_i)\}$ are consecutive ---
if no $w_A\in V_A'$ is pre-drawn, we choose $R(w_A)$ to be the shortest arc
containing the first of these two sets within $A\cup C\cup D$ and analogously
for any $w_B\in V_B'$, see Fig.~\ref{fig:phca-2}.
The remaining arcs could be obtained by replication.

Our arguments convert directly to the following algorithm:
Accept if $G$ is complete. If the partial proper Helly
representation has no pair of arcs in non-normal position, use
Algorithm~\ref{alg:nphca}. 
Otherwise check whether $G'$ is isomorphic to the
union of two disjoint cliques; if not, reject. 
Finally, check whether the endpoints of the predrawn arcs of universal vertices
could be split into two consecutive sets described above and if the predrawn
arcs non-universal ones cover these sets appropriately. If not, reject.

In the affirmative case, a representation could be constructed by replication
of arcs of three representatives: a universal vertex, one vertex from $V_A'$
and one from $V_B'$.
A minor simplification could applied when a representative from $V_A'$ or
$V_B'$ is already pre-drawn and recognized. 
\end{proof}

\section{Illustrations for the Normal Helly Case}

\begin{figure}[H]
\centering
\includegraphics{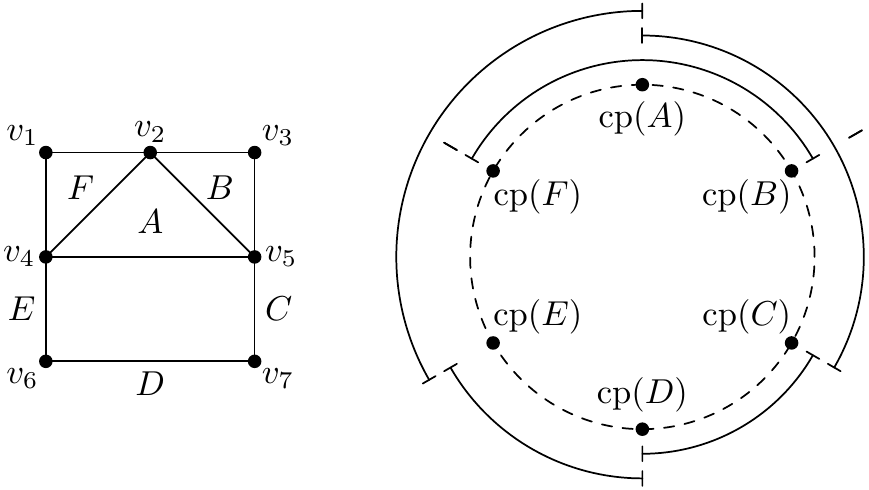}
\caption{An example of a construction of a normal Helly circular-arc representation.}
\label{fig:lem16}
\end{figure}

\begin{figure}[H]
\centering
\includegraphics{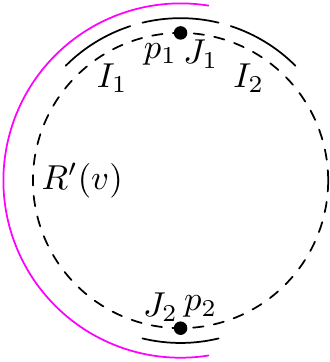}
\caption{The region of a maximal clique cannot intersect two gaps of another.}
\label{fig:lem15}
\end{figure}

\section{Helly Circular Arc Graphs}
\label{sec:HCA}

Surprisingly, when the Helly property is required, we can solve
$\RepExt(\hcar)$ similarly to $\RepExt(\nhcar)$
when the predrawn arcs have pairwise distinct endpoints. The most notable
difference is that Lemma~\ref{lem:singleIsland} no longer applies and we have
to test every island of clique $D$ for placing $p_D$.

\begin{theorem}
  \label{the:HCArepextChar}
  Let $\calR'$ be a partial $\hca$ representation of $G$ where the arcs have
  pairwise distinct endpoints and let $D$ be a maximal clique of $G$.
  There exists an $\hca$ representation of $G$ that extends
  $\calR'$ if and only if for some point $p_D\in \Reg(D)$ with the corresponding
  partial order $\prec$ as defined in Section~\ref{sec:NHCA}
  there exists a linear extension $<$ of $\prec$ such
  that:
  \begin{compactenum}
    \item \label{itm:HlinCh1} For every vertex $v$, $M_v$ is consecutive in $[<]$. %
    \item \label{itm:HlinCh4} For every gap $J$ of a $C\in \calC$, the set
      $S_J$ is consecutive in $[<]$. 
  \end{compactenum}
\end{theorem}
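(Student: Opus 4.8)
The plan is to mirror the proof of Theorem~\ref{the:NHCArepextChar} almost line for line, replacing the use of Lemma~\ref{lem:nhcaG} by Lemma~\ref{lem:gavril_hca} and dropping everything that was specific to the normal property (the universal-pair consecutivity, property~\ref{itm:linCh2}, and the single-point-island property~\ref{itm:linCh0}). First I would note that since $\calR'$ is an \hca\ representation rather than an \nhca\ one, the results of Section~\ref{sec:NHCA} that did not use normality still apply verbatim: Lemmas~\ref{lem:region_equivalence}, \ref{lem:singleGap}, \ref{lem:gapConsecutivity} only use the Helly property and the fact that clique points lie in the regions, so the notion of regions, islands, gaps and the sets $S_J$ and the partial order $\prec$ (for a fixed choice of $p_D\in\Reg(D)$) all carry over. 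The key structural difference is that Lemma~\ref{lem:singleIsland} is no longer available, so we cannot pick a clique $D$ whose region is a single island and place $p_D$ in its middle with complete freedom; instead $D$ is an arbitrary prescribed maximal clique and $p_D$ must be chosen among the (possibly several) islands of $\Reg(D)$, which is why the statement quantifies existentially over $p_D\in\Reg(D)$.

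For the forward direction I would take an \hca-extension $\calR$ of $\calR'$, use Lemma~\ref{lem:gavril_hca} to get a cyclic clique ordering $\lhd$ in which every $M_v$ is consecutive, linearize it by cutting at the clique point $\cp(D)$ of the chosen $D$, set $p_D:=\cp(D)\in\Reg(D)$, and observe that this linear order $<$ extends the corresponding $\prec$ by construction. Property~\ref{itm:HlinCh1} is then immediate from Lemma~\ref{lem:gavril_hca}, and property~\ref{itm:HlinCh4} follows from Lemma~\ref{lem:gapConsecutivity} exactly as in Theorem~\ref{the:NHCArepextChar}. Note that the assumption that the predrawn arcs have pairwise distinct endpoints guarantees, as in the proof sketch, that no region is a single point (every island separated off inside $\Reg^+$ has an open end, and $\Reg^+$ itself is a nondegenerate closed arc whenever $|\Pre(D)|\le 1$ matters), so there is no analogue of property~\ref{itm:linCh0} to check.

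For the backward direction I would reuse the greedy clique-point placement from the proof of Theorem~\ref{the:NHCArepextChar} essentially unchanged: choose $\varepsilon$ to be a small enough fraction of the shortest nontrivial island, set $\cp(C_1):=p_D$, and for $i\ge 2$ place $\cp(C_i)$ at $\min P$ or $\inf(P)+\varepsilon$ where $P=\Reg(C_i)\cap\{p:p>\cp(C_{i-1})+\varepsilon\}$. The contradiction argument that a feasible placement always exists goes through verbatim, because the only place it used property~\ref{itm:linCh0} was to rule out placing a later clique exactly at the last (single) point of its region — and here, by the distinct-endpoints hypothesis, no region is a single point, so that case cannot arise at all. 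Having placed all clique points in the prescribed circular order respecting the regions, I define $R(v)$ for each non-predrawn vertex $v$ as the smallest arc covering exactly $\{\cp(C):C\in M_v\}$ (well-defined and proper-arc because $v$ is contained in some but, $G$ being what it is, we may assume not all maximal cliques — if $v$ is universal we are in the other branch handled by Lemma~\ref{lem:nhcaGuniversal}-style reasoning; actually here for the \hca\ case there is simply no universal-vertex special case, and a universal vertex's arc would be the whole circle, which is fine for a representation that need not be normal). Then two arcs intersect iff they are predrawn (placed correctly by the regions being inside $\Reg^+$ and outside $\Reg^-$) or share a maximal clique, and the Helly property holds by construction of the clique points; so $\calR$ is an \hca-extension of $\calR'$.

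The main obstacle I expect is bookkeeping around the degenerate cases that the distinct-endpoints hypothesis is precisely meant to kill: one must argue carefully that with distinct predrawn endpoints every nonempty region contains a nontrivial arc (so $\varepsilon>0$ is well-defined and the greedy step never gets stuck on a single point), and one must handle the quantifier over $p_D$ cleanly — the theorem is stated for a fixed $D$ with an existential over $p_D\in\Reg(D)$, so in the algorithm one iterates over the finitely many islands of $\Reg(D)$ (choosing any interior point of each) and runs the \reorder\ test for each; correctness then says a \hca-extension exists iff the test succeeds for at least one island. I would also remark, as the excerpt's proof sketch does, that there is no need for a universal-vertex preprocessing step here (unlike the \nhca\ case), since Helly representations tolerate the whole-circle arc of a universal vertex.
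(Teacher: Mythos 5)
Your overall strategy is exactly the paper's: keep the region/island/gap machinery and Lemmas~\ref{lem:region_equivalence}, \ref{lem:singleGap}, \ref{lem:gapConsecutivity}, replace Lemma~\ref{lem:nhcaG} by Lemma~\ref{lem:gavril_hca}, drop the universal-pair and single-point-region conditions, use distinct endpoints to rule out single-point islands, and rerun the greedy clique-point placement. The forward direction (cutting at $\cp(D)$ and taking $p_D:=\cp(D)$) matches the paper.

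There is, however, one genuine gap in your backward direction. You claim the greedy argument ``goes through verbatim'' with $\cp(C_1):=p_D$ for the \emph{given} point $p_D$, but the hypothesized $p_D$ is arbitrary in $\Reg(D)$, whereas in the NHCA proof the starting point sits in the \emph{middle} of its island --- and that positioning is what guarantees enough room to the right of $\cp(C_1)$ inside that island. If $p_D$ lies within $\varepsilon$ of the right end of its island $I_1$ and several maximal cliques share the region of $D$ (with $I_1$ their only, or last, island), the set $P$ can already be empty even though no clique point was ever placed to the right of $\Reg(C_i)$; then the pivotal step ``therefore one clique point must be placed to the right of $\Reg(C_i)$ before placing $\cp(C_i)$'' fails and no contradiction with property~\ref{itm:HlinCh4} is obtained. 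The paper closes this by first \emph{relocating} $p_D$ to the middle of its island $I_1$ and observing that the hypotheses survive the relocation: the new partial order $\prec$ can only \emph{lose} comparabilities (namely those involving cliques whose regions meet $I_1$), so $<$ still extends it, and properties~\ref{itm:HlinCh1} and~\ref{itm:HlinCh4} are untouched since they only concern $[<]$. Your algorithmic remark about ``choosing any interior point of each island'' does not substitute for this reduction (an arbitrary interior point has the same problem); you need the explicit ``without loss of generality $p_D$ is the midpoint of its island'' step, justified as above, before running the greedy placement. With that addition your argument coincides with the paper's proof; the rest (no universal-vertex preprocessing, defining $R(u)$ as a minimal arc over the clique points of $M_u$, distinct endpoints excluding single-point islands because such an island would force two predrawn arcs to share an endpoint) is correct and as in the paper.
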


\begin{proof}
We first show that, if there is such an $\hca$-extension of $\calR'$, then
these properties are satisfied.

We obtain $<$ as the linearization of a clique ordering of an extension of $\calR'$
starting with $D$ where the clique point of $D$ is $p_D$. 
By the construction of $\prec$, the order $<$ is a linear extension of $\prec$.
By Lemmas~\ref{lem:gavril_hca} and \ref{lem:gapConsecutivity}, we obtain
properties~\ref{itm:HlinCh1} and~\ref{itm:linCh4}.
    
For the opposite implication, let $p_D\in \Reg(D)$ be on an island $I_1$ of $D$
and let $< = C_1,C_2,\dots,C_k$ be a linear extension of $\prec$ such that
properties~\ref{itm:HlinCh1} and~\ref{itm:HlinCh4} are
satisfied. We show that each $C_i\in \calC$ can be assigned its clique point
$\cp(C_i)\in \Reg(C_i)$ such that $\cp(C_j)<\cp(C_i)$ whenever $j<i$.  We
relocate $p_D$ to the middle of island $I_1$.  Note that the
properties~\ref{itm:HlinCh1} and~\ref{itm:HlinCh4} remain
satisfied for the new ordering $\prec$ since the only possible change is that
relations to cliques with island $I_1$ may get lost. 

Let $\varepsilon>0$ be the $\frac{1}{2n+1}$-fraction of the length of the
shortest nontrivial island. This choice allows us to draw all new endpoints at
distance at least $\varepsilon$ but still within any chosen island or side of
island $I_1$.

For $C_1 = D$ we place $\cp(C_1)$ on $p_D$. In a greedy way, when the location
of the clique points $\cp(C_1),\dots,\cp(C_{i-1})$ is settled, we determine the
set $P$ of feasible points for $\cp(C_i)$ that is $P = \Reg(C_i) \cap \{p: p>
\cp(C_{i-1})+\varepsilon\}$. If $P$ has minimum, we place $\cp(C_i)$ there,
otherwise we put $\cp(C_i)$ at $\inf(P)+\varepsilon$.
We argue that such choice always exists.

Assume for a contradiction that $C_i$ is the first maximal clique in the order $<$ whose 
clique point $\cp(C_i)$ cannot be properly placed. 
Note that a maximal clique $C\ne C_i$ can never have an island $I_C$ consisting of a
single point since this can only occur as intersection of two predrawn arcs
ending at that point. Hence, by the choice of
$\eps$, we never place a clique $C$ at the very last point of $\Reg(C)$.
Therefore, one clique point must be placed to the right of $\Reg(C_i)$ before placing $\cp(C_i)$.
We identify the first maximal clique $C_j, j < i$ that is
placed to the right of all points in $\Reg(C_i)$. Since $\cp(C_j)\notin \Reg(C_i)$, 
we have that $\Pre(C_i)\ne\Pre(C_j)$.
By Lemma~\ref{lem:singleGap}, the maximal clique
$C_j$ has a gap $J$ with $\Reg(C_i)\subseteq J$, see Fig.~\ref{fig:nhca_char}.
    
Consider the neighboring island $I$ of $C_j$ to the left of $J$. Since
$\cp(C_j)$ was not placed on $I$, the clique point $\cp(C_{j-1})$ has
been placed to the right of $I$. By the choice of $C_j$, we have that
$\cp(C_{j-1})$ is not placed to the right of $J$ and thus $\cp(C_{j-1})\in J$. 
Since $\cp(C_j)$ has been placed to the right of $\Reg(C_i)$ and thus to
the right of $J$, we have $p_D\not\in J$ and thus $D\not\in S_J$.
With $D<C_{j-1}< C_j< C_i$, where $C_{j-1}, C_i\in S_J$ and $D,C_j\not\in S_J$
we get a contradiction with the property~\ref{itm:HlinCh4}, since $S_J$ is not consecutive in $[<]$.

From the placement of clique points, 
we obtain an $\hca$-extension of $\calR'$ so that for every not yet represented 
vertex $u \notin V(G')$ we choose $R(u)$ to be a minimal arc containing 
exactly the clique points of the maximal cliques from $M_u$.

By the Lemma~\ref{lem:gavril_hca}, we obtain from
properties~\ref{itm:HlinCh1},~\ref{itm:HlinCh4} that this results in an $\hca$ representation of $G$
extending $\calR'$, since the maximal clique points are placed
correctly with regards to both predrawn arcs and new arcs, and moreover, two arcs 
intersect if and only if they are predrawn or share a maximal clique.
Note that we can perturbate the endpoints of the non-predrawn arcs to avoid
shared endpoints.
\end{proof}

\begin{algorithm}[t!]
  \SetKw{Continue}{continue} 
  \KwData{A graph $G$ and a partial representation $\calR'$ consisting of arcs with pairwise distinct ends.}
  \KwResult{An \hca{} representation $\calR$ of $G$ extending $\calR'$ or a message that it does not exist.}
  \Begin{
    Determine the set of maximal cliques $\calC$ of $G$\;
    \ForEach{$C\in \calC$}{
      Determine $\Reg(C)$\;
      \lIf{$\Reg(C)=\emptyset$}{\Return{$\calR'$ has no extension}}
    }
    $\varepsilon:=\frac{1}{2n+1}\min\{|I|,I \text{ is an island}\}$\;
    Choose any $C_1\in\calC$\;
    \ForEach{island $I$ of $C_1$}{
      Set $\cp(C_1)$ to the middle of $I$\;
      Determine the partial order $\prec$\;
      Build a PC tree $T$ on $\calC$ capturing the constraints stated in  Theorem~\ref{the:HCArepextChar}\;
      \lIf{such $T$ does not exist}{\Continue}
      Solve $\reorder(T, C_1, \prec)$ to get the order $C_1<C_2<\dots<C_k$\;
      \lIf{such order $<$ exists}{
        \For{$i=2$ \KwTo $k$}{ 
          $P:=\Reg(C_i) \cap \{p: p> \cp(C_{i-1})+\varepsilon\}$\;
          \lIf {$\min P$ exists}{$\cp(C_i):=\min P$}
          \lElse{$\cp(C_i):=\inf(P)+\varepsilon$}
        }
        \lForEach{$u \notin V(G')$}{draw $R(u)$ to cover exactly $\{\cp(C): C\in M_u\}$} 
        \Return{$\calR$}
      }
    }
  }
  \Return{$\calR'$ has no extension}\;
  \caption{The algorithm for the $\RepExt(\hcar)$ problem.}\label{alg:hca}
\end{algorithm}

\hcapoly*
\begin{proof}
  The correctness of Algorithm~\ref{alg:hca} follows from the already mentioned arguments.
  For the computational complexity of the more complex steps note that:
  \begin{compactitem}
    \item
      Line 2:We run the linear time recognition algorithm for Helly
      circular-arc graphs~\cite{lin2006characterizations} on $G$ and read its
      at most $n$ maximal cliques from any of its representation.
    \item
      Lines 3--5: The regions can be obtained in time $\calO(n)$ by traversing
      the circle once.  During the traversal two consecutive predrawn
      endpoints specify possible islands which can be assigned to appropriate
      maximal cliques.
    \item
      Line 8: There are at most $O(n)$ islands for $C_1$.
    \item 
      Line 10: The comparable pairs of the partial order $\prec$ can also be
      determined by a traversal as in Steps 3--5.
    \item
      Line 11: The construction of the PC-tree follows the standard approach
      from the recognition of Helly circular-arc graphs~\cite{hsu2003pc}. 
      Each arc is at the left start of at most one gap and each of the $O(n)$
      maximal cliques has at most one gap without an arc inside. Hence, there
      are $O(n)$ gaps and thus $O(n)$ constraints. $T$ can thus be constructed
      in $O(n^2)$ time~\cite{hsu2003pc}.
    \item
      Line 13: The $\reorder(T, C_1, \prec)$ problem can be solved in
      $\calO(n^2)$ time by Lemma~\ref{lem:reordering_linear}.
  \end{compactitem}
\end{proof}


\section{Conclusions and Open Problems}
\label{sec:conclusions}

Our study of the $\RepExt$ problem has been restricted in two ways:

First, we have considered mostly representations satisfying Helly property as
this allows us to consider the clique points of maximal cliques. For
representations that do not have this property one would involve a completely
different approach.

Secondly, for the recognition problem it is irrelevant whether arcs are closed
or open, but this is not the case for the representation extension. Observe
that touching intervals in $\RepExt(\nphcar)$ in Lemma~\ref{lem:nphca_char}
imply constraints on the ordering. For the sake of completeness it might be
worth to check whether use of open or semi-open intervals would yield
significant impact on the computational complexity.

\paragraph*{Acknowledgement.}
\label{sec:acknowledgement}
We thank Bartosz Walczak for inspiring comments, in particular for his hint to
extend Theorem~\ref{thm:hca_npc} to the case of $\car$ with distinct endpoints.



\bibliography{repext_ca}

\begin{thebibliography}{10}

\bibitem{adfjk-tppeg-15}
Patrizio Angelini, Giuseppe~Di Battista, Fabrizio Frati, V{\'{\i}}t
  Jel{\'{\i}}nek, Jan Kratochv{\'{\i}}l, Maurizio Patrignani, and Ignaz Rutter.
\newblock Testing planarity of partially embedded graphs.
\newblock {\em {ACM} Transactions on Algorithms}, 11(4):32:1--32:42, 2015.

\bibitem{PQ_trees}
Kellogg~S. {Booth} and George~S. {Lueker}.
\newblock Testing for the consecutive ones property, interval graphs, and graph
  planarity using {PQ}-tree algorithms.
\newblock {\em Journal of Computer and System Sciences}, 13(3):335--379, 1976.

\bibitem{chaplick2014contact}
Steven {Chaplick}, Paul {Dorbec}, Jan {Kratochv\'{\i}l}, Mickael {Montassier},
  and Juraj {Stacho}.
\newblock Contact representations of planar graphs: Extending a partial
  representation is hard.
\newblock In {\em Graph-theoretic concepts in computer science. 40th
  international workshop, WG 2014, Nouan-le-Fuzelier, France, June 25--27,
  2014. Revised selected papers}, pages 139--151. Springer, 2014.

\bibitem{CFK13}
Steven {Chaplick}, Radoslav {Fulek}, and Pavel {Klav\'{\i}k}.
\newblock Extending partial representations of circle graphs.
\newblock {\em Journal of Graph Theory}, 91(4):365--394, 2019.

\bibitem{abs-2102-02013}
Steven Chaplick, Philipp Kindermann, Jonathan Klawitter, Ignaz Rutter, and
  Alexander Wolff.
\newblock Extending partial representations of rectangular duals with given
  contact orientations.
\newblock {\em arXiv preprint arXiv:2102.02013}, 2021.
\newblock URL: \url{https://arxiv.org/abs/2102.02013}.

\bibitem{deng1996linear}
Xiaotie {Deng}, Pavol {Hell}, and Jing {Huang}.
\newblock Linear-time representation algorithms for proper circular-arc graphs
  and proper interval graphs.
\newblock {\em SIAM Journal on Computing}, 25(2):390--403, 1996.

\bibitem{maximal_cliques}
Delbert~R. Fulkerson and Oliver~A. Gross.
\newblock Incidence matrices and interval graphs.
\newblock {\em Pacific Journal of Mathematics}, 15:835--855, 1965.

\bibitem{gm-con-77}
Zvi Galil and Nimrod Meggido.
\newblock Cyclic ordering is {NP}-complete.
\newblock {\em Theoretical Computer Science}, 5:179--182, 1977.

\bibitem{garey1975complexity}
Michael~R. Garey and David~S. Johnson.
\newblock Complexity results for multiprocessor scheduling under resource
  constraints.
\newblock {\em SIAM Journal on Computing}, 4(4):397--411, 1975.

\bibitem{GareyJMP80}
Michael~R. Garey, David~S. Johnson, Gary~L. Miller, and Christos~H.
  Papadimitriou.
\newblock The complexity of coloring circular arcs and chords.
\newblock {\em {SIAM} Journal on Algebraic and Discrete Methods},
  1(2):216--227, 1980.

\bibitem{gavril1974algorithms}
Fanica Gavril.
\newblock Algorithms on circular-arc graphs.
\newblock {\em Networks}, 4(4):357--369, 1974.

\bibitem{gavril1974intersection}
Fanica Gavril.
\newblock The intersection graphs of subtrees in trees are exactly the chordal
  graphs.
\newblock {\em Journal of Combinatorial Theory, Series B}, 16(1):47--56, 1974.

\bibitem{hsu2003pc}
Wen-Lian {Hsu} and Ross~M. {McConnell}.
\newblock {PC} trees and circular-ones arrangements.
\newblock {\em Theoretical Computer Science}, 296(1):99--116, 2003.

\bibitem{jkr-kttpp-13}
V{\'{\i}}t Jel{\'{\i}}nek, Jan Kratochv{\'{\i}}l, and Ignaz Rutter.
\newblock A {K}uratowski-type theorem for planarity of partially embedded
  graphs.
\newblock {\em Computational Geometry}, 46(4):466--492, 2013.

\bibitem{joeris2011linear}
Benson~L. Joeris, Min~Chih Lin, Ross~M. McConnell, Jeremy~P. Spinrad, and
  Jayme~L. Szwarcfiter.
\newblock Linear-time recognition of {H}elly circular-arc models and graphs.
\newblock {\em Algorithmica}, 59(2):215--239, 2011.

\bibitem{klavik2012extending}
Pavel Klav{\'\i}k, Jan Kratochv{\'\i}l, Tomasz Krawczyk, and Bartosz Walczak.
\newblock Extending partial representations of function graphs and permutation
  graphs.
\newblock In {\em European Symposium on Algorithms}, pages 671--682. Springer,
  2012.

\bibitem{DBLP:journals/corr/abs-1207-6960}
Pavel Klav{\'{\i}}k, Jan Kratochv{\'{\i}}l, Yota Otachi, Ignaz Rutter, Toshiki
  Saitoh, Maria Saumell, and Tom{\'{a}}s Vyskocil.
\newblock Extending partial representations of proper and unit interval graphs.
\newblock {\em CoRR}, abs/1207.6960, 2012.
\newblock URL: \url{http://arxiv.org/abs/1207.6960}, \href
  {http://arxiv.org/abs/1207.6960} {\path{arXiv:1207.6960}}.

\bibitem{klavik2017extending}
Pavel Klav{\'\i}k, Jan Kratochv{\'\i}l, Yota Otachi, Ignaz Rutter, Toshiki
  Saitoh, Maria Saumell, and Tom{\'{a}}\v{s} Vysko{\v{c}}il.
\newblock Extending partial representations of proper and unit interval graphs.
\newblock {\em Algorithmica}, 77(4):1071--1104, 2017.

\bibitem{KKOS15}
Pavel Klav\'{\i}k, Jan Kratochv\'{\i}l, Yota Otachi, and Toshiki Saitoh.
\newblock Extending partial representations of subclasses of chordal graphs.
\newblock {\em Theoretical Computer Science}, 576:85--101, 2015.

\bibitem{klavik2013linear}
Pavel Klav{\'\i}k, Jan Kratochv{\'\i}l, Yota Otachi, Toshiki Saitoh, and
  Tom{\'a}\v{s} Vysko\v{c}il.
\newblock Linear-time algorithm for partial representation extension of
  interval graphs.
\newblock {\em CoRR abs/1306.2182}, 2013.

\bibitem{KlavikKOSV17}
Pavel Klav{\'{\i}}k, Jan Kratochv{\'{\i}}l, Yota Otachi, Toshiki Saitoh, and
  Tom{\'{a}}\v{s} Vysko\v{c}il.
\newblock Extending partial representations of interval graphs.
\newblock {\em Algorithmica}, 78(3):945--967, 2017.

\bibitem{KrawczykW17}
Tomasz Krawczyk and Bartosz Walczak.
\newblock Extending partial representations of trapezoid graphs.
\newblock In Hans~L. Bodlaender and Gerhard~J. Woeginger, editors, {\em
  International Workshop on Graph-Theoretic Concepts in Computer Science},
  pages 358--371. Springer, 2017.

\bibitem{lin2007proper}
Min~Chih Lin, Francisco~J. Soulignac, and Jayme~L. Szwarcfiter.
\newblock Proper {H}elly circular-arc graphs.
\newblock In {\em Graph-Theoretic Concepts in Computer Science}, pages
  248--257. Springer, 2007.

\bibitem{lin2013normal}
Min~Chih Lin, Francisco~J. Soulignac, and Jayme~L. Szwarcfiter.
\newblock Normal {H}elly circular-arc graphs and its subclasses.
\newblock {\em Discrete Applied Mathematics}, 161(7--8):1037--1059, 2013.

\bibitem{lin2006characterizations}
Min~Chih Lin and Jayme~L. Szwarcfiter.
\newblock Characterizations and linear time recognition of {H}elly circular-arc
  graphs.
\newblock In {\em Computing and Combinatorics}, pages 73--82. Springer, 2006.

\bibitem{LinS2009survey}
Min~Chih Lin and Jayme~L. Szwarcfiter.
\newblock Characterizations and recognition of circular-arc graphs and
  subclasses: {A} survey.
\newblock {\em Discrete Mathematics}, 309(18):5618--5635, 2009.
\newblock Combinatorics 2006, A Meeting in Celebration of Pavol Hell’s 60th
  Birthday (May 1--5, 2006).

\bibitem{mcconnell2003linear}
Ross~M. McConnell.
\newblock Linear-time recognition of circular-arc graphs.
\newblock {\em Algorithmica}, 37(2):93--147, 2003.

\bibitem{p-epsld-06}
Maurizio Patrignani.
\newblock On extending a partial straight-line drawing.
\newblock {\em International Journal of Foundations of Computer Science},
  17(5):1061--1070, 2006.

\bibitem{roberts1968representations}
Fred~S. Roberts.
\newblock {\em Representations of indifference relations}.
\newblock Department of Mathematics, Stanford University, 1968.

\bibitem{rose1976algorithmic}
Donald~J. {Rose}, R.~Endre {Tarjan}, and George~S. {Lueker}.
\newblock Algorithmic aspects of vertex elimination on graphs.
\newblock {\em SIAM Journal on computing}, 5(2):266--283, 1976.

\bibitem{Tucker1971MatrixCO}
Alan Tucker.
\newblock Matrix characterizations of circular-arc graphs.
\newblock {\em Pacific Journal of Mathematics}, 39:535--545, 1971.

\bibitem{Tucker1974}
Alan Tucker.
\newblock Structure theorems for some circular-arc graphs.
\newblock {\em Discrete Mathematics}, 7(1-2):167--195, 1974.

\bibitem{tucker1980efficient}
Alan Tucker.
\newblock An efficient test for circular-arc graphs.
\newblock {\em SIAM Journal on Computing}, 9(1):1--24, 1980.

\end{thebibliography}
\end{document}